\documentclass[reprint,twocolumn,amsmath,amssymb,amsthm, dsfont,prx,nobibnotes,nofootinbib]{revtex4-2}
\usepackage{times}
\usepackage[colorlinks=true,citecolor=blue,linkcolor=blue]{hyperref}

\usepackage{graphicx}
\usepackage{blkarray}
\usepackage{mathrsfs}
\usepackage{dcolumn}
\usepackage{bm}
\usepackage{hyperref}
\usepackage{tikz}
\usepackage{hhline}
\usepackage{float}
\usepackage{scalerel}
\usepackage{amsfonts}
\usepackage{bbm}
\usepackage{color}
\usepackage{tabularx}
\usepackage{multirow}
\usepackage{enumitem}
\usepackage{amsthm}
\usepackage{lipsum}
\usepackage{tikz-cd} 
\usepackage{subfig}
\usepackage{mathtools}
\usepackage{CJKutf8}
\usepackage{caption}
\usepackage{adjustbox}

\usepackage{dashrule}

\usepackage{mathabx}

\usepackage{import}

\usetikzlibrary{decorations.markings}
\usetikzlibrary{3d} 
\usetikzlibrary{calc}

\theoremstyle{definition}

\newtheorem{thm}{Theorem}
\newtheorem{lemma}[thm]{Lemma}

\newcommand{\ee}{\mathsf{e}}
\newcommand{\mm}{\mathsf{m}}
\newcommand{\ff}{\mathsf{f}}
\newcommand{\Ker}{\text{Ker\,}}
\newcommand{\Imaa}{\text{Im\,}}
\newcommand{\ra}{\rightarrow}
\newcommand{\supp}[1]{\text{Supp}(#1)}
\newcommand{\sgn}[1]{\text{sgn}(#1)}

\newcommand{\NT}[1]{ { \color{purple} (NT: {}) }}
\newcommand{\SZ}[1]{ { \color{orange} (SZ: {}) }}
\newcommand{\TW}[1]{ { \color{brown} (TW: {}) }}
\makeatletter
\def\l@subsubsection#1#2{}
\makeatother

\begin{document}


\setcounter{secnumdepth}{3}

\title{Coupled-Layer Codes: Beyond Quantum Product Constructions}

\author{Shuyu Zhang
(\begin{CJK*}{UTF8}{gbsn}张舒予\end{CJK*})}
\affiliation{C. N. Yang Institute for Theoretical Physics and Department of Physics and Astronomy,  Stony Brook University, Stony Brook, NY 11794, USA}

\author{Tzu-Chieh Wei (\begin{CJK*}{UTF8}{bsmi}魏子傑\end{CJK*})}
\affiliation{C. N. Yang Institute for Theoretical Physics and Department of Physics and Astronomy,  Stony Brook University, Stony Brook, NY 11794, USA}

\author{Nathanan Tantivasadakarn}
\affiliation{C. N. Yang Institute for Theoretical Physics and Department of Physics and Astronomy,  Stony Brook University, Stony Brook, NY 11794, USA}

\date{\today}
\begin{abstract}
Product codes are an important class of quantum error-correcting codes constructed from multiple input codes, which can give rise to asymptotically good quantum low-density parity check codes. In previous work, we showed how the product between two codes can be physically implemented by coupling layers of the first code using checks of the second code. In this work, we further unify product code constructions with coupled-layer constructions of phases of matter by introducing \emph{coupled-layer codes}. The essential strategy is to condense general excitations created by Pauli operators among multiple decoupled layers of the first code. The condensation is specified by an excitation algebra, which encodes the excitations, along with an algebra-preserving map. This coupling between layers generalizes the notion of gauging in physics as well as the mapping cone in homological algebra, and can be used to produce non-CSS codes. As examples, we show how coupled-layer codes reproduce the X-cube and Chamon models. We further generalize the balanced product code by allowing a unitary transformation in addition to a group action by free permutation and describe its corresponding coupled-layer construction. In particular, we show how balancing by a ZX-duality can reproduce non-CSS codes such as the fermionic toric code and the 3-fermion Walker-Wang model in 3D.

\end{abstract}

\maketitle

\tableofcontents

\section{Introduction}

Quantum product constructions provide a powerful route for building quantum error-correcting codes from simpler constituent codes. They include the tensor and balanced product codes, and have played an important role in the search for quantum low-density parity-check (qLDPC) codes with favorable asymptotic parameters \cite{Tillich:2013esj,Bravyi14,hastings2021fiber,panteleev2021quantum,breuckmann2021balanced,Panteleev_ACM2022}.

 At the same time, product constructions have close connections with topological phases of matter. In particular, familiar topological and fracton phases of matter can often be understood as arising from lower-dimensional systems via the coupled-layer construction, in which decoupled copies of a phase are coupled between layers to produce a new phase, often in a higher dimension \cite{Sondhi01,Kane02,TeoKane14,wang2013boson,fidkowski2013non,JianQi14,Vijay17,MaLakeChenHermele2017,Halasz17,PremHuangSongHermele2019,schmitz2019distilling,designer,Wen20,Williamson21,SullivanPlanarpstring,Wang22,Liu23,luo2023gapped,Garre-Rubio24,Cuiper2025systematic,gorantla2025string,gorantla2026gauging,Liu26}. In our previous work \cite{zhang2026coupled}, we showed that the tensor and balanced products of CSS codes admit such a realization. One begins with layers of a first code, indexed by qubits of a second code, and couples the layers according to the latter's stabilizers. 

This coupled-layer viewpoint of product codes naturally raises a broader question: which types of error-correcting codes can be obtained by coupling layers when one is not restricted to the product constructions or to condensates generated by single Pauli operators? Indeed, there are known stabilizer codes in the literature which do not admit such a product construction, but are known to have a coupled-layer construction. Examples include the X-cube model \cite{vijay2016fracton,Vijay17,MaLakeChenHermele2017,Liu23} and the fermionic toric code \cite{2023ScPP...14...65B}. The latter example is also not a product code for the simple reason that it is not a CSS code. Therefore, addressing this question is twofold. First, it can be used to extend product-code techniques beyond CSS settings, giving rise to potentially new types of good qLDPC codes. Second, it gives an organizing principle and provides new insights into known and new topological and fracton phases of matter realizable by Pauli stabilizer codes. 

In this work, we introduce \emph{coupled-layer codes}, a family of constructions that generalizes the coupled-layer realization of quantum product codes. Coupled-layer codes are Pauli stabilizer codes formed from two classical or quantum codes. The central spirit of the coupled-layer construction is still prevalent throughout our new constructions, in the sense that the second code is used as a recipe for gluing copies of the first code together. The fact that the stabilizers of the second code commute guarantees these couplings are compatible and can be performed simultaneously. However, the way in which they are glued together is also determined by certain additional input data, and thus can result in a different output code than the usual tensor or balanced product codes. 

We present a family of constructions that generalizes the coupled-layer realization of quantum product codes in two directions. First, for CSS codes, we formulate couplings that condense general excitations and relate them to mapping cones, partial gauging, and lattice-surgery-type operations \cite{hastings2021quantum, cowtan2024ssip,ide2025fault,cowtan2026parallel,yuan2026unified,cowtan2025fast,chang2026constant,zheng2026logical}. We show how the X-cube model and a 3-fold product generalizing X-cube~\cite{tan2025fracton,RakovszkyLDPC2} is captured by this construction. We then formulate a more general coupled-layer construction in terms of an excitation algebra and an algebra-preserving map. Because this formulation does not require a CSS chain-complex description, it can also produce non-CSS stabilizer codes. In particular, we show how the Chamon model~\cite{chamon2005quantum} and more generally quantum XYZ product codes~\cite{2022Quant...6..766L} fit into this framework.

Second, we generalize the notions of quotient and balancing in the context of balanced product codes by allowing the relevant group action to combine permutations with onsite unitaries. In particular, when the onsite unitary is given by a Hadamard, the resulting balanced-product can generate non-CSS codes. We show how the three-dimensional fermionic toric code~\cite{2023ScPP...14...65B} and the 3-fermion Walker-Wang model~\cite{walker20123+, Burnell14, 2023CMaPh.398..469H} can be obtained in this way. Finally, we combine these ingredients in a balanced coupled-layer construction that accommodates both general excitation condensates and group actions that involve onsite unitaries.

The remainder of the paper is organized as follows. In Sec.~\ref{sec:review} we review the coupled-layer construction of the tensor product between two CSS codes, and its relation to gauging. To prepare for the next section, we also review the mapping cone, which has been recently used in the context of code surgery in qLDPC codes \cite{hastings2021quantum, cowtan2024ssip,ide2025fault,cowtan2026parallel,yuan2026unified,cowtan2025fast,chang2026constant,zheng2026logical}. We show how the mapping cone is related to partial gauging, as well as an equivalent formulation purely in terms of operator algebras. In Sec.~\ref{sec:cssclc} we define the CSS coupled-layer code using the mapping cone. We present its coupled-layer construction, as well as a related subsystem code and concatenated code based on this construction. We illustrate this framework using the X-cube model and a 4D code as examples. In Sec.~\ref{sec:clc} we define the coupled-layer code, which is a generalization of the CSS coupled-layer code to the non-CSS scenario. We illustrate this framework using the quantum XYZ product, which includes the Chamon model. In Sec.~\ref{sec:gqc} we define a generalization of code quotient by a group action. Here the action is not only a permutation of qubits and checks, but also contains a transversal onsite unitary. We study the case where the unitary is the Hadamard, and we refer to the quotient as $\ee-\mm$ quotient. In Sec.~\ref{sec:embp} we present a variant of the balanced product construction using the $\ee-\mm$ quotient,  which we refer to as the $\ee-\mm$ balanced product. The coupled-layer construction of $\ee-\mm$ balanced product is presented. We demonstrate the framework using the 3D fermionic TC, and a family of stabilizer codes containing the 3-fermion Walker-Wang model. In Sec.~\ref{sec:bclc} we define the balanced coupled-layer code. This code is an attempt to unify all the ideas from previous sections. Sec.~\ref{sec:conclusion}, we make concluding remarks, as well as pointing out further directions. For completeness of the work, we provide some technical details in the appendices. In Appendix~\ref{sec:CSSlogical} we enumerate the logical operators of the CSS coupled-layer code under moderate assumptions. In Appendix~\ref{sec:sd} we show the relation between $\ee-\mm$ quotient/balanced product with an operation called symplectic doubling. In Appendix~\ref{app:balancedcommute} we prove the code switching in the balanced coupled-layer code is well-defined.

\section{Review}\label{sec:review}

In this section, we review the coupled-layer construction in \cite{zhang2026coupled} and discuss its relation to gauging. We then introduce the mapping cone formalism which we will use to generalize the tensor product code to the coupled-layer code.

\subsection{Coupled-Layer Construction of the Tensor Product code}\label{sec:review_cltp}

A CSS code can be formulated via a three-term chain complex $A\ra Q \ra B$, where $A$, $B$ and $Q$ are $\mathbb{Z}_2$ vector spaces spanned by $X$-stabilizers, $Z$-stabilizers and qubits respectively. For each basis element, or vector $a \in A$, we denote its corresponding $X$-stabilizer $\bm{a} = \prod_{q\in a} X_q$ in bold. Similarly, for each $b \in B$ the corresponding $Z$-stabilizer is $\bm{b}=\prod_{q\in b} Z_q$. These operators act on the Hilbert space $\bm{Q}$ which is a tensor product of $n$ qubits.  More generally, we will also denote the stabilizers of the corresponding vectors in bold.

Given two cochain complexes over $\mathbb{Z}_2$ ($C^n$, $\delta_C^i : C^i \ra C^{i+1}$) and ($D^n$, $\delta_D^j : D^j \ra D^{j+1}$), we can form their tensor product $(C\otimes D)^n := \oplus_{i+j = n} C^i\otimes D^j$ with boundary maps $\delta_{C\otimes D}^n = \sum_{i+j = n} \delta_C^i\otimes id +  id \otimes \delta_D^j$. In the case of two CSS codes CSS$_i: A_i\ra Q_i \ra B_i$, $i = 1,2$, their tensor product is a length-five cochain complex. We always place the qubits of the code at degree zero in the ccohain complex, so the product code corresponds to the middle three terms in this cochain complex, as shown in Figure~\ref{fig:productcomplex}.

\begin{figure}
\begin{tikzpicture}[scale=0.65, every node/.style={scale=0.8}]
    \def\LL{-6}
    \def\L{-3}
    \def\C{0}
    \def\R{3}
    \def\RR{6}
    \def\shadedwidth{60}
    \def\shadedheight{160}
    \def\H{1}
    \def\HH{2}

    \begin{scope}
        \node[rounded corners=9pt, fill=red!20,  minimum width=\shadedwidth, minimum height=\shadedheight] at (\L,-1) {};
        \node[rounded corners=9pt, fill=black!10, minimum width=\shadedwidth, minimum height=\shadedheight] at (\C,-1) {};
        \node[rounded corners=9pt, fill=blue!20, minimum width=\shadedwidth, minimum height=\shadedheight] at (\R,-1) {};
    \end{scope}

    \node[font=\bfseries\Huge, text=red!80!black]  (A) at (\L, -3) {$\mathit{A}$};
    \node[font=\bfseries\Huge, text=black!75]      (Q) at (\C,-3) {$\mathit{Q}$};
    \node[font=\bfseries\Huge, text=blue!80!black] (B) at (\R, -3) {$\mathit{B}$};
    \draw[very thick,->] (A) -- (Q);
    \draw[very thick,->] (Q) -- (B);

    \node[font=\Large, text=red!80!black,  anchor=north] at (\L, -3.5) {$X$-checks};
    \node[font=\Large, text=black!80,      anchor=north] at (\C, -3.5) {Qubits};
    \node[font=\Large, text=blue!80!black, anchor=north] at (\R, -3.5) {$Z$-checks};

    \node[font=\Large, text=red!80!black]  at (\L, 0) {$\oplus$};
    \node[font=\Large, text=black]      at (\C, \H) {$\oplus$};
    \node[font=\Large, text=black]      at (\C, -\H) {$\oplus$};
    \node[font=\Large, text=blue!80!black] at (\R, 0) {$\oplus$};

    \node (AA) at (\LL, 0) {$A_{1}\!\otimes\! A_{2}$};
    \node (BB) at (\RR, 0) {$B_{1}\!\otimes\! B_{2}$};

    \node[text=red!80!black]  (A1Q2) at (\L, \H) {$A_{1}\otimes Q_{2}$};
    \node[text=red!80!black]  (Q1A2) at (\L, -\H) {$Q_{1}\otimes A_{2}$};

    \node[text=black!80]      (A1B2) at (\C, \HH) {$A_{1}\otimes B_{2}$};
    \node[text=black!80]      (Q1Q2) at (\C, 0) {$Q_{1} \otimes Q_{2}$};
    \node[text=black!80]      (B1A2) at (\C, -\HH) {$B_{1}\otimes A_{2}$};

    \node[text=blue!80!black] (Q1B2) at (\R, \H) {$Q_{1}\otimes B_{2}$};
    \node[text=blue!80!black] (B1Q2) at (\R, -\H) {$B_{1}\otimes Q_{2}$};

    \draw[->] (AA) -- (A1Q2);
    \draw[->] (AA) -- (Q1A2);

    \draw[->] (A1Q2) -- (A1B2);
    \draw[->] (A1Q2) -- (Q1Q2);
    \draw[->] (Q1A2) -- (Q1Q2);
    \draw[->] (Q1A2) -- (B1A2);
    \draw[->] (Q1Q2) -- (Q1B2);

    \draw[black, ->] (A1B2) -- (Q1B2);
    \draw[black, ->] (Q1Q2) -- (B1Q2);
    \draw[black, ->] (B1A2) -- (B1Q2);

    \draw[black, ->] (Q1B2) -- (BB);
    \draw[black, ->] (B1Q2) -- (BB);
\end{tikzpicture}
\caption{The tensor product complex}
\label{fig:productcomplex}
\end{figure}

The product code can be constructed in the following way. For each qubit $q_2 \in Q_2$ in CSS$_2$, introduce a copy of CSS$_1$. For each $X$-check $a_2\in A_2$ ($Z$-check $b_2\in B_2$), introduce a layer of $Z$-ancillas ($X$-ancillas) with qubits associated to checks $b_1 \in B_1$ ($a_1\in A_1$). The stabilizer group is
\begin{align*}
    \mathcal{S}_0 = \langle\,
    \bm{a_1}^{(q_2)},\,
    \bm{b_1}^{(q_2)},\,
    X_{a_1,b_2},\,
    Z_{b_1,a_2}
    \,\rangle,
\end{align*}
where $\bm{a_1}^{(q_2)}$ and $\bm{b_1}^{(q_2)}$ are CSS$_1$ stabilizers in layer $q_2$.

We then perform a code switching by enforcing the following stabilizers
\vspace{-0.5em}
\begin{itemize}
    \item For each $q_1 \in Q_1$ and $a_2 \in A_2$, add $X$-stabilizer
    \begin{align*}
        \bm{\alpha}(q_1,a_2) := \prod_{b_1\ni q_1} X_{b_1,a_2}\prod_{q_2\in a_2}X_{q_1,q_2}.
    \end{align*}
    \vspace{-1.5em}
    \item For each $q_1 \in Q_1$ and $b_2 \in B_2$, add $Z$-stabilizer
    \begin{align*}
        \bm{\beta}(q_1,b_2) := \prod_{a_1\ni q_1} Z_{a_1,b_2} \prod_{q_2\in b_2} Z_{q_1,q_2}.
    \end{align*}
\end{itemize}
\vspace{-0.5em}
After the code switching, the commuting terms in $\mathcal{S}_0$ remain, which are
\begin{align*}
    &\bm{\xi}(a_1, q_2) = \bm{a_1}^{(q_2)} \prod_{b_2 \ni q_2}X_{a_1, b_2},\\
    &\bm{\zeta}(b_1, q_2) = \bm{b_1}^{(q_2)} \prod_{a_2\ni q_2} Z_{b_1, a_2}.
\end{align*}
These four sets of terms form the tensor product code CSS$_1\otimes$ CSS$_2$. The schematics of this construction is given by Figure~\ref{fig:coupledlayerchaincomplex}.

\begin{figure}
\begin{tikzpicture}[scale=0.8]
    \def\L{-4}
    \def\C{0}
    \def\R{4}
    \def\shadedwidth{60}
    \def\shadedheight{160}
    \def\H{1}
    \def\HH{2}
    \def\offs{2}
   
    \begin{scope}
    \node[rounded corners=9pt, fill=red!20,  minimum width=\shadedwidth, minimum height=30] at (\L,-1) {};
    \node[rounded corners=9pt, fill=blue!20, minimum width=\shadedwidth, minimum height=30] at (\R,1) {};
    \end{scope}

    \node[font=\Large, text=black] at (\L, 0) {$\oplus$};
    \node[font=\Large, text=black] at (\C, \H) {$\oplus$};
    \node[font=\Large, text=black] at (\C, -\H) {$\oplus$};
    \node[font=\Large, text=black] at (\R, 0) {$\oplus$};

    \node  (A1Q2) at (\L,  \H) {};
    \node at ($(\L,  \H) + (0, 0)$) {$A_{1}\otimes Q_{2}$};
    \node at ($(\L,  \H) + (\offs, 0)$) {$\xi(a_1, q_2)$};
  
    \node (Q1A2) at (-2, -\H) {};
    \node[text=red!80!black] at ($(\L, -\H) +(0, 0)$) {$Q_{1}\otimes A_{2}$};
    \node[text=red!80!black] at ($(\L, -\H) + ( \offs, 0)$) {$\alpha(q_1, a_2)$};
  
    \node[text=red!80!black] at (\L, -2*\H) {Measure $X$};

    \node[text=black!80]      (A1B2) at (\C,  \HH) {$A_{1}\otimes B_{2}$};
    \node[text=red]      (A1B2caption) at (\C, +1.3*\HH) {$X$-ancilla};
    \node[text=black!80]      (Q1Q2) at (\C,       0.0) {$Q_{1} \otimes Q_{2}$};
    \node[text=black!80]      (B1A2) at (\C, -\HH) {$B_{1}\otimes A_{2}$};
    \node[text=blue]      (B1A2caption) at (\C, -1.3*\HH) {$Z$-ancilla};

    \node[text=blue!80!black] at ($(\R,  \H) + (0, 0)$) {$Q_{1}\otimes B_{2}$};
    \node[text=blue!80!black] at ($(\R,  \H) - (\offs,0)$) {$\beta(q_1, b_2)$};
  
    \node[text=black] (B1Q2) at (\R, -\H) {};
    \node[text=black] at ($(\R, -\H) + (0, 0)$) {$B_{1}\otimes Q_{2}$};
    \node[text=black] at ($(\R, -\H) - (\offs, 0)$) {$\zeta(b_1, q_2)$};
  
    \node[text=blue!80!black] at (\R, 2*\H) {Measure $Z$};

    \draw[->,dashed,red] (-3, 1.2) -- (A1B2);
    \draw[->, ultra thick] (-3, 0.8) -- (Q1Q2);
    \draw[->,red!80!black, thick] (-3, -0.8) -- (Q1Q2);
    \draw[->,red!80!black, thick] (-3, -1.2) -- (B1A2);

    \draw[ ->,blue!80!black, thick] (A1B2) -- (3, 1.2);
    \draw[->,blue!80!black, thick] (Q1Q2) -- (3, 0.8);
    \draw[ ->,ultra thick] (Q1Q2) -- (3, -0.8);
    \draw[ dashed,->,blue] (B1A2) -- (3, -1.2);

    \def\pad{0.35}   
    \def\th{0.21}    
  
    \fill[black!40, fill opacity=0.22, draw opacity=0, rounded corners=8pt]
    ($ (A1Q2) + ({-3*\pad},{\pad}) + ({\th},{3*\th}) $) --
    ($ (B1Q2) + ({ 3*\pad},{-\pad}) + ({\th},{3*\th}) $) --
    ($ (B1Q2) + ({ 3*\pad},{-\pad}) + ({-\th},{-3*\th}) $) --
    ($ (A1Q2) + ({-3*\pad},{\pad}) + ({-\th},{-3*\th}) $) -- cycle;
\end{tikzpicture}
\caption{The coupled layer construction of the tensor product code. Starting with stacks of CSS$_1$ given by the complex in the gray band along with $X/Z$ ancillas, Measuring $X/Z$-stabilizers given by the red/blue arrow ($\alpha(q_1,a_2)/ \beta(q_1,b_2)$) induces the blue/red dashed arrow, completing $\zeta(b_1,q_2)/\xi(a_1,q_2)$.}
\label{fig:coupledlayerchaincomplex}
\end{figure}

The coupled-layer construction is a natural generalization of anyon condensation in topological phases. In the case of a CSS code, a single $X$ on qubit $q$ creates a number of excitations for each $Z$-stabilizer containing $q$. Similarly, $Z$ excites all $X$-stabilizers containing $q$. The code switching $\alpha(q_1, a_2)$ ($\beta(q_1, b_2)$) can be interpreted as simultaneously condensing the excitations created by $X_{q_1}$ ($Z_{q_1}$) in layers of CSS$_1$ labeled by $q_2\in a_2$ ($q_2\in b_2$). That is, CSS$_2$ is treated as a recipe for performing condensation on layers of CSS$_1$. For example, taking CSS$_1$ as the 2D toric code, and CSS$_2$ as the repetition code, then the tensor product corresponds to the 3D toric code. Our coupled layer construction then  reproduces the coupled-layer construction by stacking $2$D toric code and condensing pairs of $\ee$ anyons in nearest neighbor layers.

An alternative view we would like to take is gauging. That is, the condensation of excitations corresponds to gauging the symmetries in each layer. The coupled-layer construction gauges symmetries simultaneously in different layers of CSS$_1$ given by the pattern of checks in CSS$_2$. We make this statement precise in the next section after reviewing the gauging of a CSS code.

\subsection{Gauging}\label{sec:review_g}
In this section, we review gauging of a CSS code. Given a chain complex $A\xrightarrow{\delta} Q\xrightarrow{d^T} B$, we may treat the union of all the $X$-stabilizers and $X$-logicals as an \emph{$X$-type symmetry} of the corresponding code, that is $\Ker d^T$,  which we can gauge. To do this, consider the algebra of symmetric operators under $\Ker d^T$,
\begin{align*}
    \mathcal{A} = \langle \bm b, \,
    X_q
    \,|\,
    b\in B ,\,
    q\in Q
    \rangle.
\end{align*}
First, we introduce ancilla qubits which will play the role of gauge fields in order to promote the symmetry operators into local symmetry operators. To do this, for each $b\in B$, we introduce a qubit with stabilizer $Z_b$. We then impose the following \emph{Gauss law}
\begin{align*}
    \bm{\alpha}(q) = X_q \prod_{b\ni q}X_b.
\end{align*}
which performs a local symmetry action. Indeed, products of these local symmetry actions can recover the original symmetry. 
To see this makes the original symmetry local, take $\mu \in \Ker d^T$ with the corresponding  symmetry operator $\bm\mu = \prod_{q\in \mu} X_q$, we have
\begin{align}
\label{eq:productofGausslaws}
    \prod_{q\in \mu}\bm\alpha(q) = \bm \mu \prod_{b\in d^T(\mu)} X_b = \bm \mu,
\end{align}
where the second equality is because of $d^T(\mu) = 0$.  Not all terms in $\mathcal A$ commute with the Gauss law. In particular, the $Z$-stabilizers $\bm b$ need to be multiplied by $Z_b$ in order to commute with the Gauss law\footnote{In gauge theory language, they need to minimally couple to gauge field in order to be gauge invariant}, which gives
\begin{align*}
    \bm\zeta(b) = \bm b \cdot Z_b.
\end{align*}
Thus the algebra of operators that commute with the Gauss law $\bm\alpha(q)$ is
\begin{align*}
    \mathcal{A}_{\text{gauge}} = 
    \langle \,\bm\zeta(b)
    ,\,
    X_q, 
    \,|\,
    b\in B
    ,\,
    q\in Q 
    \,\rangle
\end{align*}
defined on the Hilbert space constrained by $\bm\alpha(q) = 1$, for all $q\in Q$.

The gauging procedure presented above is not to be confused with the gauging map~\cite{vijay2016fracton,Williamson16,Yoshida16,Yoshida17,Kubica18,Tantivasadakarn20,RakovszkyLDPC1,RakovszkyLDPC2} which acts on the entire symmetric algebra $\mathcal{A}$, rather than deforming a particular code. To recover the gauging map, we note the Gauss law $\bm\alpha(q)$ can actually be imposed exactly after applying the following unitary transformation
\begin{align}\label{eq:clusterdisentangler}
    U = \prod_{q\in Q} \prod_{b\ni q} \text{CNOT}_{q,b}.
\end{align}
Upon conjugating by the unitary
\begin{align*}
    U\bm \zeta(b)U^\dagger = Z_b
    \hspace{15pt}
    U\bm \alpha(q)U^\dagger = X_q,
\end{align*}
and thus we have
\begin{align*}
    U\mathcal{A}_{\text{gauge}}U^\dagger 
    =
    \langle \,
    Z_b,\,\bm \alpha(q)
    \,|\,
    b\in B, \,
    q\in Q
    \,\rangle,
\end{align*}
and the constraint becomes $X_q = 1$ for each $q \in Q$, which projects out all the matter qubits, and can now be directly imposed, giving $\bm \alpha(q) \mapsto \prod_{b\ni q}X_b$. The gauge and matter degree-of-freedom are decoupled in this case. We obtain the gauging map
\begin{align*}
    \bm b\mapsto Z_b,
    \hspace{15pt}
    X_q \mapsto \prod_{b\ni q}X_b.
\end{align*}

To connect with the coupled-layer construction, note the similarity between the code switching $\bm \alpha(q_1, a_2)$ and the Gauss law $\bm\alpha(q)$ (we intentionally used the same notation). One can see that $\bm\alpha(q_1, a_2)$ is the Gauss law upon gauging all $X$-stabilizers and logicals that is diagonal in layers of CSS$_1$ labeled by $q_2\in a_2$, and $\bm\zeta(b_1, q_2)$ is the minimal coupling between $\bm{b_1}^{(q_2)}$ and the gauge qubits. A similar discussion holds when treating all $Z$-stabilizers and logicals as a symmetry to be gauged, in which case $\bm\beta(q_1, b_2)$ is the Gauss law upon gauging the $Z$-symmetry diagonal in layers of CSS$_1$ labeled by $q_2\in b_2$, and $\bm\xi(a_1, q_2)$ is the minimal coupling with gauge qubits. The structure of the coupled-layer construction is then apparent. CSS$_2$ is used as a bookkeeping device to perform simultaneous gauging on multiple layers of CSS$_1$.

To formulate gauging on the chain complex level, we drop the transverse field, and add the $X$-stabilizers explicitly. We start with the original stabilizer group
\begin{align*}
    \mathcal{S} = \langle\,
    \bm b ,\, \bm a
    \,|\,
    b\in B, \, a\in A
    \,\rangle
\end{align*}
After gauging, we may add the Gauss law $\alpha(q)$ explicitly, and obtain a deformed stabilizer code
\begin{align*}
    \mathcal{S}_{\text{gauged}} 
    =
    \langle\,
    \bm\zeta(b),\,
    \bm a,\,
    \bm  \alpha(q),
    \,|\,
    b\in B,\,
    a\in A,\,
    q\in Q
    \,\rangle
\end{align*}
This can be treated as a deformed CSS code, whose associated chain complex is shown in Figure~\ref{fig:gaugecomplex}. It is interesting to note that $\bm a$ are now redundant stabilizers since they live in $Ker d^T$ and therefore can be written as products of Gauss laws according to Eq.~\eqref{eq:productofGausslaws}. The gauged stabilizer group can thus be reduced to 
\begin{align*}
    \mathcal{S}_{\text{gauged}} 
    =
    \langle\,
    \bm\zeta(b),\,
    \bm  \alpha(q),
    \,|\,
    b\in B,\,
    a\in A,\,
    q\in Q
    \,\rangle.
\end{align*}
This is nothing but the stabilizer group of a CSS cluster state, which can be disentangled using the unitary in Eq.~(\ref{eq:clusterdisentangler}).

\begin{figure}
\begin{tikzpicture}[scale=0.8, every node/.style={scale=1}]
    \def\L{-3.5}
    \def\C{0}
    \def\R{3.5}
    \def\shadedwidth{60}
    \def\shadedheight{160}
    \def\H{1}
    \def\HH{2}

    \def\shift{0.3}

    \node[text=blue!80!black] at (0, -2.5) {$Z$-ancilla};
    \node[text=red!80!black] at (\L, -1.5) {Gauss law};

    \node[text=red!80!black] (A) at (\L, \H) {$A$};
    \node (Q) at (0 , 0) {$Q$};
    \node[text=blue!80!black] (B) at (\R, -\H) {$B$};

    \node[text=red!80!black] (aux0) at (\L, -\H) {$Q$};
    \node (aux1) at (0, -\HH) {$B$};

    \node[font=\Large, text=red!80!black]  at ($(A)!0.5!(aux0)$) {$\oplus$};
    \node[font=\Large] at ($(Q)!0.5!(aux1)$) {$\oplus$};

    \draw[->] (A) -- (Q);
    \draw[->] (Q) -- (B); 
    \draw[->] (aux0) -- (aux1);

    \draw[->] (aux0) -- (Q); 
    \draw[->] (aux1) -- (B);

    \node at ($(A)!0.5!(Q) + (0,\shift)$) {$\delta$};
    \node at ($(B)!0.5!(Q) + (0,\shift)$) {$d^T$};
    \node at ($(aux0)!0.5!(aux1) - (0,\shift)$) {$d^T$};

    \node at ($(aux0)!0.5!(Q) + (0,\shift)$) {$id$};
    \node at ($(aux1)!0.5!(B) - (0,\shift)$) {$id$};

    \node[text=red!80!black] at ($(aux0) + (1.5,-0)$) {$\bm \alpha(q)$};
    \node[text=blue!80!black] at ($(B) - (1.5,-0)$) {$\bm \zeta(b)$};
\end{tikzpicture}

\caption{The chain complex after gauging all $X$-stabilizers and $X$-logicals in CSS. 
}
\label{fig:gaugecomplex}
\end{figure}

The gauged chain complex is a special case of a more general construction in homological algebra called the mapping cone. We will review this in the next section, and discuss how to generalize the coupled-layer construction beyond tensor product codes from this perspective.

\subsection{Mapping Cone}\label{sec:review_mc}

In this section, we review the mapping cone construction, and its relation to gauging. This relation motivates our constructions in the remainder of the paper. The mapping cone has its roots in algebraic topology~\cite{Hatcher:478079}, where a non-trivial cycle can be trivialized by attaching a cone whose boundary matches that given cycle in order to trivialize the corresponding homology class. This idea can be abstracted to more general (co)chain complexes in homological algebra~\cite{rotman1979introduction}. This construction has found recent use in quantum codes for generalized lattice surgery \cite{hastings2021quantum, cowtan2024ssip,ide2025fault,cowtan2026parallel,yuan2026unified,cowtan2025fast,chang2026constant,zheng2026logical}.

Given a pair of $\mathbb{Z}_2$ chain complexes ($C^n$, $\delta_C^i : C^i \ra C^{i+1}$) and ($W^n$, $\delta_W^i : W^i \ra W^{i+1}$). Choose a degree-$1$ chain map $\Gamma: W\ra C$. That is, a sequence of maps $\Gamma^n: W^n \ra C^{n+1}$ satisfying $\delta_C^{n} \Gamma^{n-1} = \Gamma^{n} \delta_W^{n-1}$. We can construct the mapping cone of $\Gamma$, denoted by cone$(\Gamma)$, which is a chain complex with vector spaces
\begin{align*}
    \text{cone}(\Gamma)^n = W^n\oplus C^n
\end{align*}
and chain maps $\delta^n: \text{cone}(\Gamma)^n \ra \text{cone}(\Gamma)^{n+1}$ given by
\begin{align*}
    (w^n,c^n) 
    \mapsto (\delta^n_W(w^n),\,
    \Gamma^n(w^n) + \delta^{n}_C(c^{n})).
\end{align*}
This chain complex is shown below
\begin{equation*}
\begin{tikzpicture}[scale=1, every node/.style={scale=1}]
    \def\L{-2}
    \def\LL{-4}
    \def\C{0}
    \def\R{2}
    \def\RR{4}
    \def\shadedwidth{27}
    \def\shadedheight{90}
    \def\H{2.5}

    \def\shift{0.3}

    \begin{scope}
        \node[rounded corners=9pt, draw,  minimum width=\shadedwidth, minimum height=\shadedheight] at ($(0,0)!0.5!(0,\H)$) {};
        
        \node[rounded corners=9pt, draw,  minimum width=\shadedwidth, minimum height=\shadedheight] at ($(\L,0)!0.5!(\L,\H)$) {};
        
        \node[rounded corners=9pt, draw,  minimum width=\shadedwidth, minimum height=\shadedheight] at ($(\R,0)!0.5!(\R,\H)$) {};
    \end{scope}

     \node at (-2, -0.7) {\rotatebox{90}{$\,=$}};
    \node at (0, -0.7) {\rotatebox{90}{$\,=$}};
    \node at (2, -0.7) {\rotatebox{90}{$\,=$}};

    \node at (-2, -1.1) {cone$(\Gamma)^{n-1}$};
    \node at (0.2, -1.1) {cone$(\Gamma)^{n}$};
    \node at (2.4, -1.1) {cone$(\Gamma)^{n+1}$};

    \node (Cn-1) at (\L, 0) {$C^{n-1}$};
    \node (Cn) at (0 , 0) {$C^n$};
    \node (Cn+1) at (\R, 0) {$C^{n+1}$};
    
    \node at (\LL, 0) {$\cdots$};
    \node at (\RR, 0) {$\cdots$};

    \node (Zn-1) at (\L, \H) {$W^{n-1}$};
    \node (Zn) at (0 , \H) {$W^n$};
    \node (Zn+1) at (\R, \H) {$W^{n+1}$};
    
    \node at (\LL, \H) {$\cdots$};
    \node at (\RR, \H) {$\cdots$};

    \node [font = \Large] at ($(Cn-1)!0.5!(Zn-1)$) {$\oplus$};
    \node [font = \Large] at ($(Cn)!0.5!(Zn)$) {$\oplus$};
    \node [font = \Large] at ($(Cn+1)!0.5!(Zn+1)$) {$\oplus$};

    \draw[->] (Cn-1) -- (Cn);
    \draw[->] (Cn) -- (Cn+1);
    \draw[->] (-3.6,0) -- (Cn-1);
    \draw[->] (Cn+1) -- (3.6,0);

    \draw[->] (Zn-1) -- (Zn);
    \draw[->] (Zn) -- (Zn+1);
    \draw[->] (-3.6,\H) -- (Zn-1);
    \draw[->] (Zn+1) -- (3.6,\H);

    \draw[->] (Zn-1) -- (Cn); 
    \draw[->] (Zn) -- (Cn+1);
    \draw[->] (-3.7,2.3) -- (Cn-1);
    \draw[->] (Zn+1) -- (3.7,0.2);

    \draw[->] (-1.1,-1.1) -- (-0.5,-1.1);
    \draw[->] (0.9,-1.1) -- (1.5,-1.1);

    \node at ($(Cn-1)!0.5!(Cn) + (0,\shift)$) {$\delta_C^{n-1}$};
    \node at ($(Cn)!0.5!(Cn+1) + (0,\shift)$) {$\delta_C^n$};
    
    \node at ($(Zn-1)!0.5!(Zn) + (0,\shift)$) {$\delta_W^{n-1}$};
    \node at ($(Zn)!0.5!(Zn+1) + (0,\shift)$) {$\delta_W^n$};

    \node at ($(Zn-1)!0.5!(Cn) + (-0.,0)$) {$\Gamma^{n-1}$};
    \node at ($(Zn)!0.5!(Cn+1)+ (0.1,0)$) {$\Gamma^n$};

\end{tikzpicture}
\end{equation*}

If we take the chain complex $C$ as a three-term complex, that is a CSS code $A\xrightarrow{\delta} Q\xrightarrow{d^T} B$, taking $Q$ to be on degree $0$, a degree-$1$ chain map $f: W \ra$ CSS has only one commuting square, hence is equivalent to the $f^0 \delta_W^{-1} = d^T f^{-1}$. In particular, gauging in Figure~\ref{fig:gaugecomplex} corresponds to taking $W^{n} = C^{n+1}$. Thus, the diagrams are trivially commuting, and the gauging chain complex is a special case of a mapping cone.

Yet another example is the tensor product chain complex CSS$_1 \otimes$ CSS$_2$ as shown in Figure~\ref{fig:tensormappingcone}. In this case, the complex $C =  \text{CSS}_1 \otimes Q_2$, the initially decoupled layers. A mapping cone is constructed via a chain map to $C$, and another cone is constructed on the dual complex of $C$.

\begin{figure}[h]
\begin{tikzpicture}[scale=0.8]
    \def\L{-4}
    \def\C{0}
    \def\R{4}
    \def\shadedwidth{60}
    \def\shadedheight{160}
    \def\H{1}
    \def\HH{2}
    \def\offs{2}

    \node[font=\Large, text=black] at (\L, 0) {$\oplus$};
    \node[font=\Large, text=black] at (\C, \H) {$\oplus$};
    \node[font=\Large, text=black] at (\C, -\H) {$\oplus$};
    \node[font=\Large, text=black] at (\R, 0) {$\oplus$};

    \node  (A1Q2) at (\L,  \H) {};
    \node at ($(\L,  \H) + (0, 0)$) {$A_{1}\otimes Q_{2}$};

    \node (Q1A2) at (\L, -\H) {};
    \node[text=red!80!black] at ($(\L, -\H) +(0, 0)$) {$Q_{1}\otimes A_{2}$};

    \node[text=black!80]      (A1B2) at (\C,  \HH) {$A_{1}\otimes B_{2}$};
    \node[text=black!80]      (Q1Q2) at (\C,       0.0) {$Q_{1} \otimes Q_{2}$};
    \node[text=black!80]      (B1A2) at (\C, -\HH) {$B_{1}\otimes A_{2}$};

    \node[text=blue!80!black] at ($(\R,  \H) + (0, 0)$) {$Q_{1}\otimes B_{2}$};

    \node[text=black] (B1Q2) at (\R, -\H) {};
    \node[text=black] at ($(\R, -\H) + (0, 0)$) {$B_{1}\otimes Q_{2}$};
    
    \node (Q1B2) at (\R, \H) {};

    \draw[->] (-3, 1.2) -- (A1B2);
    \draw[->, ultra thick] (-3, 0.8) -- (Q1Q2);
    \draw[->] (-3, -0.8) -- (Q1Q2);
    \draw[->,red!80!black, ultra thick] (-3, -1.2) -- (B1A2);

    \draw[->,blue!80!black, ultra thick] (A1B2) -- (3, 1.2);
    \draw[->] (Q1Q2) -- (3, 0.8);
    \draw[->,ultra thick] (Q1Q2) -- (3, -0.8);
    \draw[->] (B1A2) -- (3, -1.2);

    \def\pad{0.35}   
    \def\th{0.21}    
  
    \fill[black!40, fill opacity=0.22, draw opacity=0, rounded corners=8pt]
    ($ (A1Q2) + ({-3*\pad},{\pad}) + ({\th},{3*\th}) $) --
    ($ (B1Q2) + ({ 3*\pad},{-\pad}) + ({\th},{3*\th}) $) --
    ($ (B1Q2) + ({ 3*\pad},{-\pad}) + ({-\th},{-3*\th}) $) --
    ($ (A1Q2) + ({-3*\pad},{\pad}) + ({-\th},{-3*\th}) $) -- cycle;

    \fill[red!80, fill opacity=0.22, draw opacity=0, rounded corners=8pt]
    ($ (Q1A2) + ({-3*\pad},{\pad}) + ({\th},{3*\th}) $) --
    ($ (B1A2) + ({ 3*\pad},{-\pad}) + ({\th},{3*\th}) $) --
    ($ (B1A2) + ({ 3*\pad},{-\pad}) + ({-\th},{-3*\th}) $) --
    ($ (Q1A2) + ({-3*\pad},{\pad}) + ({-\th},{-3*\th}) $) -- cycle;

    \fill[blue!80, fill opacity=0.22, draw opacity=0, rounded corners=8pt]
    ($ (A1B2) + ({-3*\pad},{\pad}) + ({\th},{3*\th}) $) --
    ($ (Q1B2) + ({ 3*\pad},{-\pad}) + ({\th},{3*\th}) $) --
    ($ (Q1B2) + ({ 3*\pad},{-\pad}) + ({-\th},{-3*\th}) $) --
    ($ (A1B2) + ({-3*\pad},{\pad}) + ({-\th},{-3*\th}) $) -- cycle;
\end{tikzpicture}
\caption{The tensor product CSS$_1 \otimes$ CSS$_2$ as a mapping cone. The gray band is layers of CSS$_1$ indexed by $Q_2$, giving CSS$_1 \otimes Q_2$. The red band is an auxiliary complex with a chain map to CSS$_1 \otimes Q_2$. Together with the gray band, they form the mapping cone. The blue band is another auxiliary complex with a chain map to the dual of CSS$_1 \otimes Q_2$ upon transposing, which also forms a mapping cone. The two mapping cones together form the tensor product chain complex.}
\label{fig:tensormappingcone}
\end{figure}

The mapping cone is closely related to code surgery and gauging logicals. Start with a CSS code, and attach the complex $\mathcal{E}_x \xrightarrow{\epsilon'_x} Q_x$ with maps $\Gamma^{-1} = \epsilon_x$ and $\Gamma^0 = {d'}^T$ as shown in Figure~\ref{fig:MappingConeX}. Due to the commutativity of the diagram, it defines a deformed CSS code with new qubits $Q_x$ and new $X$-stabilizers $\mathcal{E}_x$. Given $e_x \in \mathcal{E}_x$, the corresponding stabilizer takes the form
\begin{align*}
    \bm{e_x} = \prod_{q\in \epsilon_x(e_x)} X_{q} \prod_{q_x\in \epsilon_x'(e_x)}X_{q_x}.
\end{align*}
In addition, we have $\epsilon_x(\Ker \epsilon_x') \subseteq \Ker d^T$. To see this, given $\mu \in \Ker \epsilon_x'$, we have $d^T \epsilon_x(\mu) = d'^T \epsilon_x'(\mu) = 0$. It follows that
{\begin{align*}
    \prod_{e_x\in \mu} \bm{e_x} = \prod_{q\in \epsilon_x(\mu)} X_{q} \prod_{q_x\in \epsilon_x'(\mu)}X_{q_x}
    =
    \prod_{q\in \epsilon_x(\mu)} X_{q}
\end{align*}}
We see the $X$ on ancilla qubits $Q_x$ cancel, and we are left with the $X$-logical in the CSS code. Therefore, we can think of $Q_x$ as the gauge qubits introduced in order to gauge a \emph{subset} of the $X$-symmetry given by $\epsilon_x(\Ker \epsilon_x')$, and the $X$-stabilizers in $\mathcal{E}_x$ are the Gauss laws.  In the special case of Figure~\ref{fig:gaugecomplex}, $\epsilon_x = id$, so $\epsilon_x(\Ker d^T) = \Ker d^T$, which indeed gauges all $X$-stabilizers as well as $X$-logicals in CSS. In the case of Figure~\ref{fig:tensormappingcone}, looking at the red mapping cone, we have $id \otimes \delta_2 (\Ker d_1^T \otimes A_2) = \Ker d_1^T \otimes \Imaa \delta_2$, which gauges the $X$-symmetry generated by all $X$-stabilizers and $X$-logicals that are diagonal in layers of CSS$_1$ labeled by elements in $\Imaa \delta_2$.

\begin{figure}[h]
\begin{tikzpicture}[scale=0.8, every node/.style={scale=1}]
    \def\L{-3.5}
    \def\C{0}
    \def\R{3.5}
    \def\shadedwidth{60}
    \def\shadedheight{160}
    \def\H{1}
    \def\HH{2}

    \def\shift{0.3} 

    \node[text=red!80!black] (A) at (\L, \H) {$A$};
    \node (Q) at (0 , 0) {$Q$};
    \node[text=blue!80!black] (B) at (\R, -\H) {$B$};

    \node[text=red!80!black] (aux0) at (\L, -\H) {$ \mathcal{E}_x$};
    \node (aux1) at (0, -\HH) {$Q_x$};

    \node[font=\Large, text=red!80!black]  at ($(A)!0.5!(aux0)$) {$\oplus$};
    \node[font=\Large] at ($(Q)!0.5!(aux1)$) {$\oplus$};

    \draw[->] (A) -- (Q);
    \draw[->] (Q) -- (B); 
    \draw[->] (aux0) -- (aux1);

    \draw[->] (aux0) -- (Q); 
    \draw[->] (aux1) -- (B);

    \node at ($(A)!0.5!(Q) + (0,\shift)$) {$\delta$};
    \node at ($(B)!0.5!(Q) + (0,\shift)$) {$d^T$};
    \node at ($(aux0)!0.5!(aux1) - (0,\shift)$) {$\epsilon_x'$};

    \node at ($(aux0)!0.5!(Q) + (0.2,\shift)$) {$\epsilon_x$};
    \node at ($(aux1)!0.5!(B) - (0.3,\shift)$) {$d'^T$};

    \node[text=blue!80!black] at (0, -2.5) {$Z$-ancilla};
    \node[text=red!80!black] at (\L, -1.5) {Gauss law};
\end{tikzpicture}
\caption{The mapping cone for code surgery on $X$-logicals. The diagram commutes $d^T\epsilon_x = d'^T\epsilon_x'$.}
\label{fig:MappingConeX}
\end{figure}

\begin{figure}[h]
    \begin{tikzpicture}[scale=0.8, every node/.style={scale=1}]
    \def\L{-3.5}
    \def\C{0}
    \def\R{3.5}
    \def\shadedwidth{60}
    \def\shadedheight{160}
    \def\H{1}
    \def\HH{2}

    \def\shift{0.3}

    \node[text=red!80!black] at (0, 2.6) {$X$-ancilla};
    \node[text=blue!80!black] at (\R, 1.6) {Gauss law};
    
    \node [text=red!80!black] (A) at (\L, \H) {$A$};
    \node (Q) at (0 , 0) {$Q$};
    \node [text=blue!80!black] (B) at (\R, -\H) {$B$};

    \node (aux0) at (0, \HH) {$Q_z$};
    \node [text=blue!80!black] (aux1) at (\R, \H) {$\mathcal{E}_z$};

    \node[font=\Large, text=blue!80!black]  at ($(B)!0.5!(aux1)$) {$\oplus$};
    \node[font=\Large] at ($(Q)!0.5!(aux0)$) {$\oplus$};

    \draw[<-] (A) -- (Q);
    \draw[<-] (Q) -- (B); 
    \draw[<-] (aux0) -- (aux1);

    \draw[->] (aux0) -- (A); 
    \draw[->] (aux1) -- (Q);

    \node at ($(A)!0.5!(Q) - (0,\shift)$) {$\delta^T$};
    \node at ($(B)!0.5!(Q) - (0,\shift)$) {$d$};
    \node at ($(aux0)!0.5!(aux1) + (0,\shift)$) {$\epsilon_z'$};

    \node at ($(aux0)!0.5!(A) + (0,\shift)$) {$\delta'^T$};
    \node at ($(aux1)!0.5!(Q) + (0,\shift)$) {$\epsilon_z$};
\end{tikzpicture}
\caption{The mapping cone for code surgery on $Z$-logicals. The diagram commutes $\delta^T\epsilon_z = \delta'^T\epsilon_z'$. 
}
\label{fig:MappingConeZ}
\end{figure}

Let us also provide a physical interpretation for the spaces $\mathcal{E}_x$ and $Q_x$ beyond its role as the Gauss law and the $Z$-ancilla. In particular, 
the $\mathbb{Z}_2$ vector space $\mathcal{E}_x$ is spanned by a basis set $\{e_x\}$, each of which corresponds to some $X$-excitations in the CSS code $A\xrightarrow{\delta} Q\xrightarrow{d^T} B$. These $X$-excitations are products of $X$, which anticommute with a number of $Z$-stabilizers, and are expressed through the ``excitation" map $\epsilon_x: \mathcal{E}_x \ra Q$. On the other hand, the map $d': B \rightarrow Q_x$ plays the role of an auxiliary ``dual" system, defined on qubits spanning the $\mathbb{Z}_2$ vector space $Q_x$, which is a classical code in the $Z$-basis. In the case of gauging in Figure~\ref{fig:gaugecomplex}, the $X$-excitations are single $X_q$ for $q\in Q$, and the dual system is defined on qubits labeled by $B$ with single-Pauli stabilizers $Z_b$ for $b\in B$.

With the dual system and excitations in the CSS code specified, the linear map $\epsilon_x': \mathcal{E}_x\ra Q_x$ identifies a set of $X$-excitations in the dual system. Importantly, the commutativity of the diagram means that given $e_x\in \mathcal{E}_x$ and $b\in B$, $\epsilon_x(e_x)$ and $d(b)$ have even/odd overlap in $Q$ if and only if $\epsilon_x'(e_x)$ and $d'(b)$ have even/odd overlap in $Q_x$.
Thus, the corresponding operators $ \bm{\epsilon_x}(e_x) =\prod_{q\in \epsilon_x(e_x)} X_q$ and $\bm d(b) = \prod_{q \in d(b)} Z_q$ on the Hilbert space $\bm Q$ have the same commutation relations as the operators $\bm{\epsilon_x}'(e_x)$ and $\bm{d'}(b)$ on the Hilbert space $\bm{Q}_x$. That is, the algebras generated by $\{\bm{\epsilon_x}(e_x) ,\, \bm{d}(b)\}$ and $\{\bm{\epsilon_x}'(e_x) ,\, \bm{d}'(b)\}$ correspond to the same \emph{bond algebra} ~\cite{Nussinov09,Cobanera10,Cobanera11} (also sometimes called patch operators~\cite{JiWen20}).

This fact can be made more precise as follows. Define an algebra
\begin{align*}
    \bm {\mathcal A}_x &=  \langle\bm{\epsilon_x}(e_x) ,\, \bm{d}(b), \,\bm \delta(a) \,|\, e_x \in \mathcal{E}_x, \, b\in B, \, a \in A \,\rangle,
\end{align*}
which we refer to $\bm {\mathcal A}_x$ as the ($X$-type) \textit{excitation algebra} in the CSS code, it is generated by the stabilizers as well as the Pauli $X$ excitations $\bm{\epsilon_x}(e_x)$. The preservation of commutation relation is captured by the following map $\bm{\Gamma}_x: \bm {\mathcal A}_x \rightarrow \bm \langle\bm{\epsilon_x'}(e_x) ,\, \bm{d}'(b) \, \rangle$
\begin{align*}
    \bm{\Gamma}_x: \bm{\epsilon_x}(e_x) \mapsto \bm{\epsilon_x'}(e_x),
    \hspace{15pt}
    \bm d(b)\mapsto \bm{d'}(b),
    \hspace{15pt}
    \bm \delta(a) \mapsto 1
\end{align*}
Therefore, given a mapping cone as in Figure~\ref{fig:MappingConeX}, we have defined a operator map $\bm \Gamma_x$ on the excitation algebra, which preserves the commutation relations.

The converse is also true. Starting with the CSS code, and a set of $X$-excitations given by a map $\epsilon_x: \mathcal{E}_x \ra Q$. Assume we are given an algebra-preserving map $\bm \Gamma_x$ defined on the excitation algebra $\bm{\mathcal{A}}_x$, whose image is defined on qubits $\bm{Q}_x$, with $\bm \Gamma_x(\bm a) = 1$. Then we can use $\bm \Gamma_x$ to define a mapping cone. Indeed, since $\bm\Gamma_x(\bm \epsilon_x(e_x))$ are Pauli-$X$ operators, it is equivalent to a linear map $\epsilon_x': \mathcal{E}_x \ra Q_x$. Similarly, the Pauli-$Z$ operators $\bm \Gamma_x(\bm d(b))$ is equivalent to a linear map $d': B \ra Q_x$. The assumption that $\bm \Gamma_x$ preserves commutation relation implies $d^T \epsilon_x = d'^T\epsilon_x'$, hence defines a chain map.

As a result, we see there is a bijection between the mapping cones as in Figure~\ref{fig:MappingConeX}, and pairs of $\bm \Gamma_x$ and $\mathcal{E}_x$ with $\bm \Gamma_x(\bm a) = 1$. We have thus intentionally used the symbol $\Gamma$ for both the chain map of mapping cone and the algebra-preserving map. One may think of $\bm \Gamma_x$ as the operator version of the mapping cone.

We note, similar to gauging, in order for this map to be an algebra homomorphism, we need to restrict the domain and codomain to act on appropriate subspaces. On the domain, we need to restrict to the subspace symmetric under the kernel of $\bm \Gamma_x$. Since $\bm \Gamma_x$ preserves commutation relation, $\bm{\mathcal{A}}_x$ is well-defined on this subspace. On the codomain, we need to restrict to the subspace symmetric under those operators on $Q_x$ whose preimage is $1$. The operators in the image of $\bm \Gamma_x$ are well-defined on this subspace also because $\bm \Gamma_x$ preserves commutation relation. Since the constructions to be presented only require the preservation of commutation relation, we do not develop further in this direction.

We illustrate this using the simple mapping cone corresponding to full gauging in Fig.~\ref{fig:gaugecomplex}. Given a CSS code $A\xrightarrow{\delta} Q\xrightarrow{d^T} B$, the excitation map is $\epsilon_x = id: Q \ra Q $. This says the excitations are single-Pauli operators, and the excitation algebra is
\begin{align*}
    \bm {\mathcal A}_x &=  \langle\,
    X_q ,\, \bm{d}(b), \,\bm \delta(a) \,|\, q \in Q, \, b\in B, \, a \in A \,\rangle,
\end{align*}
The auxiliary dual system is defined $d' = id: B\ra B$, which lives on a Hilbert space spanned by $B$, with the trivial stabilizers $Z_b$. As a result, $\bm \Gamma_x$ acts by
\begin{align*}
    \bm\Gamma_x(\bm d(b)) = Z_b
\end{align*}
The excitation in the dual system is $\epsilon_x' = d^T : Q \ra B$, hence $\bm \Gamma_x$ acts by
\begin{align*}
    \bm\Gamma_x(X_q) = \prod_{b \ni q} X_b
\end{align*}
In addition to the trivial action on $X$-stabilizers
\begin{align*}
    \bm \Gamma_x(\bm\delta(a)) = 1
\end{align*}
These define the full action of $\bm \Gamma_x$ on the excitation algebra $\mathcal{A}_x$. This is precisely the gauging map which gauges all $X$-operators in $\Ker d^T$. We see Fig.~\ref{fig:gaugecomplex} indeed recovers full gauging.

Although we will not focus on such cases, one may also choose $\bm \Gamma_x$ such that the $X$-stabilizers $\bm \delta(a)$ are mapped to non-trivial operators. In this case, the $X$-operators $\bm \Gamma_x(\bm a)$ define an additional linear map $\delta': A \ra Q_x$, and we need to modify the mapping cone to the following diagram
\begin{equation*}
\begin{tikzpicture}[scale=0.8, every node/.style={scale=1}]
    \def\L{-3.5}
    \def\C{0}
    \def\R{3.5}
    \def\shadedwidth{60}
    \def\shadedheight{160}
    \def\H{1}
    \def\HH{2}
    
    \def\shift{0.3} 

    \node[text=red!80!black] (A) at (\L, \H) {$A$};
    \node (Q) at (0 , 0) {$Q$};
    \node[text=blue!80!black] (B) at (\R, -\H) {$B$};

    \node[text=red!80!black] (aux0) at (\L, -\H) {$\mathcal{E}_x$};
    \node (aux1) at (0, -\HH) {$Q_x$};

    \draw[->] (A) -- (Q);
    \draw[->] (Q) -- (B); 
    \draw[->] (aux0) -- (aux1);

    \draw[->] (aux0) -- (Q); 
    \draw[->] (aux1) -- (B); 
    \draw[->] (A) -- (aux1);

    \node at ($(A)!0.5!(Q) + (0,\shift)$) {$\delta$};
    \node at ($(B)!0.5!(Q) + (0,\shift)$) {$d^T$};
    \node at ($(aux0)!0.5!(aux1) - (0,\shift)$) {$\epsilon_x'$};

    \node at ($(A)!0.5!(aux1) - (0.3,-0.6)$) {$\delta'$};
    \node at ($(aux0)!0.5!(Q) - (0.5,0.4)$) {$\epsilon_x$};
    \node at ($(aux1)!0.5!(B) - (0.3,\shift)$) {$d'^T$};
\end{tikzpicture}
\end{equation*}
such that each square commutes, that is,
\begin{align*}
    d^T\epsilon_x = d'^T\epsilon_x'
    \hspace{15pt}
    \text{and}
    \hspace{15pt}
    d^T\delta = d'^T \delta'.
\end{align*}
Conversely, given such a diagram, we can define an algebra-preserving map on the excitation algebra by
\begin{align*}
    \bm{\Gamma}_x: \bm{\epsilon_x}(e_x) \mapsto \bm{\epsilon_x'}(e_x),
    \hspace{13pt}
    \bm d(b)\mapsto \bm{d'}(b),
    \hspace{13pt}
    \bm \delta(a) \mapsto \bm \delta'(a)
\end{align*}

Until now, we have discussed the partial gauging of $X$ logicals. The partial gauging of $Z$ logicals is almost identical. Taking the reflection of Figure~\ref{fig:MappingConeX} gives Figure~\ref{fig:MappingConeZ}. Similarly, this mapping cone gauges the $Z$-operators in $\epsilon_z(\Ker \epsilon_z')$. For instance, the blue mapping cone in Figure~\ref{fig:tensormappingcone} gauges the $Z$-symmetry generated by $\Ker \delta_1^T \otimes \Imaa d_2$. Similarly, we can define a $Z$-type excitation algebra
\begin{align*}
    \bm {\mathcal{A}}_z = \langle\bm{\epsilon_z}(e_z) ,\, \bm{\delta}(a), \,\bm b \,|\, e_z \in \mathcal{E}_z, \, b\in B, \, a \in A \,\rangle
\end{align*}
and an algebra-preserving map
\begin{align*}
    \bm{\Gamma}_z: \bm{\epsilon_z}(e_z) \mapsto \bm{\epsilon_z'}(e_z),
    \hspace{15pt}
    \bm \delta(a) \mapsto \bm{\delta'}(a),
    \hspace{15pt}
    \bm b \mapsto 1
\end{align*}
which encodes the same data as the mapping cone.

Henceforth we use $\bm{a}$, $\bm{b}$, $\bm{e_x}$ and $\bm{e_z}$ to denote the operators $\bm{\delta}(a)$, $\bm d(b)$, $\bm{\epsilon_x}(e_x)$ and $\bm{\epsilon_z}(e_z)$ in $Q$. Their images under the algebra map $\bm{\Gamma}_z(\bm{a})$, $\bm{\Gamma}_x(\bm{b})$, $\bm{\Gamma}_x(\bm{e_x})$ and $\bm{\Gamma}_z(\bm{e_z})$ are the operators $\bm{\delta'}(a)$, $\bm{d'}(b)$, $\bm{\epsilon_x'}(e_x)$ and $\bm{\epsilon_z'}(e_z)$ in $Q_x$.

\subsubsection{Example: Partial Gauging in a Classical Code}

Let us give an example of how partial gauging arises from the mapping cone. In anticipation of the next section, we will use this specific mapping cone to perform partial gauging in multiple layers in order to obtain the X-cube model via the coupled-layer construction in Sec.~\ref{sec:cssclc_xc}. 

Consider two repetition codes denoted $E^x \xrightarrow{\partial^x} V^x$ and $E^y \xrightarrow{\partial^y} V^y$. We note that although they are isomorphic chain complexes, we denote them differently for clarity. The classical code consists of taking a 2D square lattice and stacking the first code in the $x$ direction and the second code in the $y$ direction. Visually, the $Z$ stabilizers are given by
\begin{align}
\label{eq:XcubepregaugeIsingterms}
  \bm b_{v,x} &=    \begin{tikzpicture}
    [baseline=0cm, scale = 0.8]
        \node at (0,0) {};
        \node at (-0.7, 0) {$Z$};
        \node at (0.7,0) {$Z$};
        \node at (0, 0.7) {};
        \node at (0,-0.7) {};
        \draw[step = 2] (-1.2,-1.2) grid (1.2,1.2);
    \end{tikzpicture} & \bm b_{v,y}&=\begin{tikzpicture}
    [baseline=0cm, scale = 0.8]
        \node at (0,0) {};
        \node at (-0.7, 0) {};
        \node at (0.7,0) {};
        \node at (0, 0.7) {$Z$};
        \node at (0,-0.7) {$Z$};
        \draw[step = 2] (-1.2,-1.2) grid (1.2,1.2);
    \end{tikzpicture}
\end{align}

To write the corresponding chain complex, we introduce a shorthand
\begin{align*}
    V &= V^x \otimes V^y\\
    E &= (E^x \otimes V^y) \oplus (V^x \otimes E^y)\\
    P &= E^x \otimes E^y
\end{align*}
which are just the vertex, edges and plaquettes of the square lattice. The corresponding chain complex is
\begin{equation*}
\begin{tikzpicture}[scale=0.8, every node/.style={scale=1}]
    \def\L{-3.5}
    \def\R{3.5}
    \node at (-0.75,0.05) {$d^T:$};
    \node (Q1) at (\R, 0) {$V\oplus V $};
    \node [text=black] (Ex) at (0 , 0) {$E$};
    
    \draw[<-] (Q1) -- (Ex);
\end{tikzpicture}
\end{equation*}
where $d^T = (\partial^x \otimes \mathbbm{1}) \oplus (\mathbbm{1} \otimes \partial^y)$.

Next, we specify the algebra-preserving map $\bm \Gamma_x$, or equivalently, the mapping cone for partial gauging. The excitations in CSS$_1$ are chosen to be the plaquette terms
\begin{align*}
    \bm e_{x,p} = \begin{tikzpicture}
    [baseline = 0.8cm, scale = 0.8]
        \node at (0, 1) {$X$};
        \node at (2,1) {$X$};
        \node at (1, 0) {$X$};
        \node at (1,2) {$X$};
        \draw[step = 2] (0,0) grid (2,2);
    \end{tikzpicture}
\end{align*}
This can be written in terms of linear map
\begin{equation*}
\begin{tikzpicture}[scale=0.8, every node/.style={scale=1}]
    \def\L{-3.5}
    \def\R{3.5}
    \node at (-0.75,-0.05) {$\epsilon_x:$};
    \node (Q1) at (\R, 0) {$E$};
    \node [text=black] (Ex) at (0 , 0) {$P$};
    
    \draw[<-] (Q1) -- (Ex);
\end{tikzpicture}
\end{equation*}
where $\epsilon_x = (\mathbbm{1} \otimes  \partial^y \ \   \partial^x \otimes \mathbbm{1}   )^T$.
Each $X$ in $\bm e_{x,p}$ creates a pair of domain walls, so $\bm e_{x,p}$ creates a pair of domain walls at each corner of $p$. Thus the excitation algebra is
\begin{align*}
    \bm{\mathcal{A}}_x = 
    \langle\,
    \bm e_{x,p}, \bm b_{v,x}, \bm b_{v,y} 
    \,|\,
    p\in P, \, v\in V
    \,\rangle
\end{align*}

The qubits of the auxiliary system are placed on vertices, and the algebra-preserving map $\bm{\Gamma}_x$ defined by
\begin{equation*}
    \centering
    \begin{matrix}
        \bm b_{v,x},\\
        \bm b_{v,y}
    \end{matrix}
    \hspace{5pt}
    \xrightarrow{\bm{\Gamma}_x}
    \hspace{5pt}
    \begin{tikzpicture}
    [baseline=0cm, scale = 0.8]
        \node at (0,0) {$Z$};
        
        \draw[step = 2] (-1.2,-1.2) grid (1.2,1.2);
    \end{tikzpicture}
    \hspace{25pt}
    \bm e_{x,p}
    \hspace{5pt}
    \xrightarrow{\bm{\Gamma}_x}
    \hspace{5pt}
    \begin{tikzpicture}
    [baseline=0.8cm, scale = 0.8]
        \node at (0, 0) {$X$};
        \node at (2,0) {$X$};
        \node at (0,2) {$X$};
        \node at (2,2) {$X$};
        
        \draw[ step = 2] (0,0) grid (2,2);
    \end{tikzpicture}
\end{equation*}
One can check explicitly that $\bm{\Gamma}_x$ indeed preserves the commutation relations on both sides, and in the auxiliary system, we see $\bm{\Gamma}_x(\bm e_{x,p})$ creates four $\mathbb{Z}_2$ charges on the corners of $p$. This algebra-preserving map is equivalent to the mapping cone 
\begin{equation}\label{eqn:Xcube_MC}
\begin{adjustbox}{valign=c}
\begin{tikzpicture}[scale=0.8, every node/.style={scale=1}]
    \def\L{-3.5}
    \def\R{3.5}
    \def\H{1}
    \def\HH{2}
    \def\shift{0.3}
    
    \node (P) [text=red!80!black] at (\L, -\H) {$P$};
    \node (V) at (0, -\HH) {$V$};
    \node (E) at (0, 0) {$E$};
    \node [text=blue!80!black] (VxVy) at (\R , -\H) {$V \oplus V$};
    
    \draw[->] (P) -- (V);
    \draw[->] (P) -- (E);
    \draw[->] (V) -- (VxVy);
    \draw[->] (E) -- (VxVy);
    \node at ($(E)!0.5!(VxVy) + (0,\shift)$) {$d^T$};
    \node at ($(P)!0.5!(V) - (0,\shift)$) {$\epsilon_x'$};

    \node at ($(P)!0.5!(E) + (0.2,\shift)$) {$\epsilon_x$};
    \node at ($(V)!0.5!(VxVy) - (0.3,\shift)$) {$d'^T$};
\end{tikzpicture}
\end{adjustbox}
\end{equation}
where
\begin{align*}
    \epsilon'_x & = \partial^x \otimes \partial^y\\
   {d'}^T & = (\mathbbm{1} \ \  \mathbbm{1})^T
\end{align*}
In order to see which logicals this mapping cone gauges, we need to consider $\epsilon_x(\Ker \epsilon_x')$. $\Ker \epsilon_x'$ is the vector space of metachecks (i.e., redundancies) of code $\epsilon_x'$, which is $\bm{\Gamma}_x(\bm e_{x,p})$. These metachecks are labeled by straight lines $l^*$ in $x$ and $y$-direction on the dual lattice,
\begin{align*}
    \prod_{p\in l^*} \bm{\Gamma}_x(\bm e_{x,p}) = 1
\end{align*}
The image of $l^*$ under $\epsilon_x$ is a pair of parallel lines in $x$ and $y$-direction. For example, if $l^*$ extends in $x$-direction, then
\begin{align*}
    \prod_{p\in l^*} \bm e_{x,p}
    \hspace{5pt}
    =
    \hspace{5pt}
    \begin{tikzpicture}
    [baseline=0.8cm, scale = 0.8]
        \node at (1, 0) {$X$};
        \node at (1,2) {$X$};
        \draw[ step = 2] (-0.5,0) grid (2,2);
        \node at (3, 0) {$X$};
        \node at (3,2) {$X$};
        \draw[ step = 2] (2,0) grid (4.5,2);
    \draw[dotted] (-1,1) -- (5,1);
    \node at (5.5,1) {$l^*$};
    \end{tikzpicture}
\end{align*}
Thus, we see that the symmetry we choose to gauge consists of pairs of adjacent lines, instead of individual lines. Indeed, no product of plaquette terms $\bm e_{x,p}$ can reproduce a single line.

To perform the gauging, we introduce an ancilla qubit at every vertex initialized in $Z_v=1$ and enforce the Gauss law
\begin{align*}
\begin{tikzpicture}[baseline=0.8cm, scale = 0.8]
    \node at (0, 0) {$X$};
    \node at (2,0) {$X$};
    \node at (0,2) {$X$};
    \node at (2,2) {$X$};
    \node at (1, 0) {$X$};
    \node at (2,1) {$X$};
    \node at (1,2) {$X$};
    \node at (0,1) {$X$};
    \draw[ step = 2] (0,0) grid (2,2);
\end{tikzpicture}
\end{align*}
Indeed, the product of these operators reproduces the symmetries we wish to gauge without the action on ancillas. The $ZZ$ checks in Eq.~\eqref{eq:XcubepregaugeIsingterms} get minimally coupled into
\begin{align*}
 \begin{tikzpicture}
    [baseline=0cm, scale = 0.8]
        \node at (0,0) {$Z$};
        \node at (-0.7, 0) {$Z$};
        \node at (0.7,0) {$Z$};
        \node at (0, 0.7) {};
        \node at (0,-0.7) {};
        \draw[step = 2] (-1.2,-1.2) grid (1.2,1.2);
    \end{tikzpicture}
    \hspace{15pt}
    \begin{tikzpicture}
    [baseline=0cm, scale = 0.8]
        \node at (0,0) {$Z$};
        \node at (-0.7, 0) {};
        \node at (0.7,0) {};
        \node at (0, 0.7) {$Z$};
        \node at (0,-0.7) {$Z$};
        \draw[step = 2] (-1.2,-1.2) grid (1.2,1.2);
    \end{tikzpicture}
\end{align*}
The resulting CSS code above is exactly the mapping cone \eqref{eqn:Xcube_MC}. The $X$-stabilizers correspond to plaquettes $P$, while the two $Z$-stabilizers correspond to the two vertex terms in $V \oplus V$.

\section{CSS Coupled-Layer Codes}\label{sec:cssclc}

In this section, we consider a simple generalization of the tensor product on the chain complex level using the mapping cone. Our construction is closely related to partial block reading in code surgery~\cite{cowtan2025fast}. However, a more appropriate interpretation in our case is condensation of excitations, which does not necessarily perform a readout of logicals if those logicals are not gauged. We also note that a related coupled-layer construction to construct interesting topological orders is mentioned in~\cite{Liu23, Garre-Rubio24, Cuiper2025systematic,Liu26}. However, our construction is not restricted to layering along one dimension. The construction in this section is a special case of the coupled-layer code we will later present in Sec.~\ref{sec:clc}, where we restrict both inputs and outputs to be CSS codes, and hence can be fully described in terms of chain complexes. As a result, we refer to it as the \textit{CSS coupled-layer code}.

As discussed in Sec.~\ref{sec:review_g}, the tensor product relies on a particular mapping cone. Namely, using the mapping cone that gauges all logicals in a single layer given in Fig.~\ref{fig:gaugecomplex}, one can construct a mapping cone in a layered system of CSS$_1$ which gauges combinations of these logicals according to the checks in CSS$_2$, as in Figure~\ref{fig:gaugecomplex}. A straightforward generalization is to simply replace the gauging mapping cone in a single layer by a different mapping cone as in Figure~\ref{fig:MappingConeX}.

In analogy to tensor product, we need to promote this mapping cone to a complex similar to Figure~\ref{fig:tensormappingcone}. The resulting chain complex is shown in Figure~\ref{fig:CoupLayerCSS}. The chain maps between red or blue stripes and the gray stripe are well-defined, since the two parallelograms commute. To see this for the top parallelogram,
\begin{align*}
    (\epsilon_z'^T\otimes id)(\delta_1' \otimes d_2^T) = \epsilon_z'^T \delta_1' \otimes  d_2^T = (\epsilon_z^T\otimes d_2^T)(\delta_1 \otimes id),
\end{align*}
and similarly for the bottom parallelogram. By taking the middle column in Figure~\ref{fig:CoupLayerCSS} to be the new qubits, left column as $X$-stabilizers, and right column to be $Z$-stabilizers, we obtain a new CSS code by taking two CSS codes as input, and supplemented by data provided by the mapping cones.

\begin{figure}
\centering
\hspace{0.8cm}
\begin{tikzpicture}[scale=0.9]
    \def\L{-4}
    \def\C{0}
    \def\R{4}
    \def\shadedwidth{60}
    \def\shadedheight{160}
    \def\H{1}
    \def\HH{2}
    \def\offs{2}

    \node[font=\Large, text=black] at (\L, 0) {$\oplus$};
    \node[font=\Large, text=black] at (\C, \H) {$\oplus$};
    \node[font=\Large, text=black] at (\C, -\H) {$\oplus$};
    \node[font=\Large, text=black] at (\R, 0) {$\oplus$};

    \node  (A1Q2) at (\L,  \H) {};
    \node at ($(\L,  \H) + (0, 0)$) {$A_{1}\otimes Q_{2}$};

    \node (Q1A2) at (\L, -\H) {};
    \node[text=red!80!black] at ($(\L, -\H) +(0, 0)$) {$\mathcal{E}_{x}\otimes A_{2}$};

    \node[text=blue!80!black] (A1B2) at (\C,  \HH) {$Q_{z}\otimes B_{2}$};
    \node[text=black!80]      (Q1Q2) at (\C,       0.0) {$Q_1 \otimes Q_{2}$};
    \node[text=red!80!black]      (B1A2) at (\C, -\HH) {$Q_{x}\otimes A_{2}$};

    \node[text=blue!80!black] at ($(\R,  \H) + (0, 0)$) {$\mathcal{E}_{z}\otimes B_{2}$};

    \node[text=black] (B1Q2) at (\R, -\H) {};
    \node[text=black] at ($(\R, -\H) + (0, 0)$) {$B_{1}\otimes Q_{2}$};
    
    \node (Q1B2) at (\R, \H) {};

    \draw[->] (-3, 1.2) -- (A1B2) node[midway, above] {$\delta_1' \otimes d_2^T$};
    \draw[->, ultra thick] (-3, 0.8) -- (Q1Q2) node[midway, above] {$\delta_1 \otimes id$};
    \draw[->] (-3, -0.8) -- (Q1Q2) node[midway, above] {$\epsilon_x \otimes \delta_2$};
    \draw[->,red!80!black, ultra thick] (-3, -1.2) -- (B1A2) node[midway, above] {$\epsilon_x' \otimes id$};

    \draw[->,blue!80!black, ultra thick] (A1B2) -- (3, 1.2) node[midway, above] {$\epsilon_z'^T \otimes id$};
    \draw[->] (Q1Q2) -- (3, 0.8) node[midway, above] {$\epsilon_z^T \otimes d_2^T$};
    \draw[->,ultra thick] (Q1Q2) -- (3, -0.8) node[midway, above] {$d_1^T \otimes id$};
    \draw[->] (B1A2) -- (3, -1.2) node[midway, above] {$d_1'^T \otimes \delta_2$};

    \def\pad{0.35}   
    \def\th{0.21}    
  
    \fill[black!40, fill opacity=0.22, draw opacity=0, rounded corners=8pt]
    ($ (A1Q2) + ({-3*\pad},{\pad}) + ({\th},{3*\th}) $) --
    ($ (B1Q2) + ({ 3*\pad},{-\pad}) + ({\th},{3*\th}) $) --
    ($ (B1Q2) + ({ 3*\pad},{-\pad}) + ({-\th},{-3*\th}) $) --
    ($ (A1Q2) + ({-3*\pad},{\pad}) + ({-\th},{-3*\th}) $) -- cycle;

    \fill[red!80, fill opacity=0.22, draw opacity=0, rounded corners=8pt]
    ($ (Q1A2) + ({-3*\pad},{\pad}) + ({\th},{3*\th}) $) --
    ($ (B1A2) + ({ 3*\pad},{-\pad}) + ({\th},{3*\th}) $) --
    ($ (B1A2) + ({ 3*\pad},{-\pad}) + ({-\th},{-3*\th}) $) --
    ($ (Q1A2) + ({-3*\pad},{\pad}) + ({-\th},{-3*\th}) $) -- cycle;

    \fill[blue!80, fill opacity=0.22, draw opacity=0, rounded corners=8pt]
    ($ (A1B2) + ({-3*\pad},{\pad}) + ({\th},{3*\th}) $) --
    ($ (Q1B2) + ({ 3*\pad},{-\pad}) + ({\th},{3*\th}) $) --
    ($ (Q1B2) + ({ 3*\pad},{-\pad}) + ({-\th},{-3*\th}) $) --
    ($ (A1B2) + ({-3*\pad},{\pad}) + ({-\th},{-3*\th}) $) -- cycle;
\end{tikzpicture}
\caption{The chain complex of the CSS coupled-layer code, which generalizes the tensor product. The gray stripe is still CSS$_1 \otimes Q_2$. However, a more general mapping cone is constructed using the two chain complexes in the red and blue stripes.}
\label{fig:CoupLayerCSS}
\end{figure}

\subsection{Coupled-Layer Construction}\label{sec:cssclc_clc}
We discuss the coupled-layer construction of the CSS code corresponding to the total complex of Figure~\ref{fig:CoupLayerCSS}, which can be carried out similarly as the tensor product code. For each $q_2 \in Q_2$, introduce a copy of CSS$_1$. For each $a_2\in A_2$, introduce a copy of the Hilbert space $\bm Q_x$ with stabilizer $\bm{\Gamma}_x(\bm{b_1})$ for each $b_1\in B_1$, denoted $\bm{\Gamma}_x^{(a_2)}(\bm{b_1})$. For each $b_2\in B_2$, introduce a copy of Hilbert space $\bm Q_z$ with stabilizer $\bm{\Gamma}_z(\bm{a_1})$ for each $a_1\in A_1$ denoted $\bm{\Gamma}_z^{(b_2)}(\bm{a_1})$. The initial stabilizer group is therefore
\begin{align*}
    \mathcal{S}_0 = 
    \left\langle\,
    \bm{a_1}^{(q_2)}, \bm{b_1}^{(q_2)}, \bm{\Gamma}_x^{(a_2)}(\bm{b_1}), \bm{\Gamma}_z^{(b_2)}(\bm{a_1})
    \,\right\rangle.
\end{align*}
Next, we perform a code switching:
\vspace{-0.1cm}
\begin{itemize}
    \item For each pair $e_x\in \mathcal{E}_x$ and $a_2\in A_2$, add $X$-stabilizer
    \begin{align*}
        \bm\alpha(e_x,a_2) = \bm{\Gamma}_x^{(a_2)}(\bm{e_x}) \prod_{q_2\in a_2} \bm{e_x}^{(q_2)}.
    \end{align*}
    \vspace{-0.5cm}
    \item For each pair $e_z\in \mathcal{E}_z$ and $b_2\in B_2$, add $Z$-stabilizer
    \begin{align*}
        \bm\beta(e_z,b_2) = \bm{\Gamma}_z^{(b_2)}(\bm{e_z}) \prod_{q_2\in b_2} \bm{e_z}^{(q_2)}.
    \end{align*}
\end{itemize}
\vspace{-0.3cm}
$\bm\alpha(e_x,a_2)$ and $\bm\beta(e_z,b_2)$ commute due to the fact that $\bm{a_2}$ and $\bm{b_2}$ commute. These terms span the vector space $\mathcal{E}_x\otimes A_2$ and $\mathcal{E}_z\otimes B_2$ in Figure~\ref{fig:CoupLayerCSS}, respectively. Upon enforcing these stabilizers, the commuting terms in $\mathcal{S}_0$ remain, which are
\vspace{-0cm}
\begin{itemize}
    \item For each pair $a_1\in A_1$ and $q_2\in Q_2$, $X$-stabilizer
    \begin{align*}
        \bm\xi(a_1,q_2) = \bm{a_1}^{(q_2)} \prod_{b_2\ni q_2} \bm{\Gamma}_z^{(b_2)}(\bm{a_1}).
    \end{align*}
    \vspace{-0.5cm}
    \item For each pair $b_1\in B_1$ and $q_2\in Q_2$, $Z$-stabilizer
    \begin{align*}
        \bm\zeta(b_1,q_2) = \bm{b_1}^{(q_2)} \prod_{a_2\ni q_2} \bm{\Gamma}_x^{(a_2)}(\bm{b_1}).
    \end{align*}
\end{itemize}
\vspace{-0.2cm}
It is obvious that both terms are indeed in $\mathcal{S}_0$, and they mutually commute because $\bm{a_1}$ and $\bm{b_1}$ commute. To see that $\bm\xi(a_1, q_2)$ commutes with the code switching, we only need to check between $\bm\xi(a_1, q_2)$ and $\bm\beta(e_z, b_2)$. Their supports only overlap when $\bm{a_1}$ and $\bm{e_z}$ overlap and $q_2\in b_2$, in which case
\begin{align*}
    [\bm\xi(a_1, q_2), \bm\beta(e_z, b_2)] = [\bm{a_1}^{(q_2)} \bm{\Gamma}_z^{(b_2)}(\bm{a_1}),\, \bm{\Gamma}_z^{(b_2)}(\bm{e_z}) \bm{e_z}^{(q_2)}] = 0,
\end{align*}
where the last equality is due to $\bm{\Gamma}_z$ preserving the commutation relation. Similarly, $\bm\zeta(b_1, q_2)$ commutes with the code switching. Hence, we indeed have a stabilizer group
\begin{align}\label{eq:mc_CSS_stabilizer}
    \mathcal{S} = 
    \left\langle\,
    \bm\alpha(e_x, a_2),\,
    \bm\beta(e_z, b_2),\,
    \bm\xi(a_1, q_2),\,
    \bm\zeta(b_1, q_2)
    \,\right\rangle.
\end{align}
We enumerate the logicals of this code in Appendix~\ref{sec:CSSlogical}.

It is worth noting that if there is a product $\prod_{b_1\in \nu} \bm{b_1}$ of $Z$-stabilizers for some vector $\nu\in B_1$, which commutes with all the $X$-excitations $\bm{e_x}$, then the product $\prod_{b_1 \in \nu} \bm{\Gamma}_x^{(a_2)}(\bm{b_1})$ in each layer $a_2$ also commutes with the code switching. This is analogous to a metacheck in the conventional tensor product~\cite{zhang2026coupled}. In terms of commutative diagram, this is a modification of the mapping cone in Figure~\ref{fig:MappingConeX},

\begin{equation*}
\begin{tikzpicture}[scale=0.8, every node/.style={scale=1}]
    \def\L{-3.5}
    \def\C{0}
    \def\R{3.5}
    \def\shadedwidth{60}
    \def\shadedheight{160}
    \def\H{1}
    \def\HH{2}
    \def\HHH{3}

    \def\shift{0.3} 

    \node[text=red!80!black] (A) at (\L, \H) {$A$};
    \node (Q) at (0 , 0) {$Q$};
    \node[text=blue!80!black] (B) at (\R, -\H) {$B$};

    \node[text=red!80!black] (aux0) at (\L, -\H) {$\mathcal{E}_x$};
    \node (aux1) at (0, -\HH) {$Q_x$};
    \node[text=blue!80!black] (aux2) at (\R, -\HHH) {$M_x$};

    \node[font=\Large] at ($(Q)!0.5!(aux1)$) {$\circlearrowleft$};

    \node[font=\Large] at (2.6, -2.1) {$\circlearrowleft$};

    \draw[->] (A) -- (Q);
    \draw[->] (Q) -- (B); 
    \draw[->] (aux0) -- (aux1);
    \draw[->] (aux1) -- (aux2);

    \draw[->] (aux0) -- (Q); 
    \draw[->] (aux1) -- (B); 
    \draw[dashed, <-] (aux2) -- (B);

\end{tikzpicture}
\end{equation*}
where the bottom row is also a chain complex, and the dashed line does not participate in defining stabilizers. Similarly for the reflected diagram. In this case, the commutative diagram Figure~\ref{fig:CoupLayerCSS} is augmented to Figure~\ref{fig:CoupLayerCSSaug}.

\begin{figure}[h]
\centering
\hspace{-0.9cm}
\begin{tikzpicture}[scale=0.9]
    \def\L{-4}
    \def\C{0}
    \def\R{4}
    \def\shadedwidth{60}
    \def\shadedheight{160}
    \def\H{1}
    \def\HH{2}
    \def\HHH{3}
    \def\offs{2}

    \node  (A1Q2) at (\L,  \H) {};
    \node at ($(\L,  \H) + (0, 0)$) {$A_{1}\otimes Q_{2}$};

    \node (Q1A2) at (\L, -\H) {};
    \node[text=red!80!black] at ($(\L, -\H) +(0, 0)$) {$\mathcal{E}_{x}\otimes A_{2}$};
    
    \node[text=blue!80!black] (MzB2) at ($(\L,  \HHH) + (0, 0)$) {$M_{z}\otimes B_{2}$};

    \node[text=blue!80!black]  (A1B2) at (\C,  \HH) {$Q_{z}\otimes B_{2}$};
    \node[text=black!80]      (Q1Q2) at (\C,       0.0) {$Q_1 \otimes Q_{2}$};
    \node[text=red!80!black](B1A2) at (\C, -\HH) {$Q_{x}\otimes A_{2}$};

    \node[text=blue!80!black] at ($(\R,  \H) + (0, 0)$) {$\mathcal{E}_{z}\otimes B_{2}$};
        
    \node[text=black] (B1Q2) at ($(\R, -\H) + (0, 0)$) {$B_{1}\otimes Q_{2}$};

    \node[text=red!80!black] (MxA2) at ($(\R,  -\HHH) + (0, 0)$) {$M_{x}\otimes A_{2}$};

    \draw[->] (-3, 1.2) -- (A1B2) node[midway, above] {};
    \draw[->, ultra thick] (-3, 0.8) -- (Q1Q2) node[midway, above] {};
    \draw[->] (-3, -0.8) -- (Q1Q2) node[midway, above] {};
    \draw[->,red!80!black, ultra thick] (-3, -1.2) -- (B1A2) node[midway, above] {};
    \draw[->,red!80!black, ultra thick] (1, -2.2) -- (MxA2) node[midway, above] {};
    \draw[->,dashed] (B1Q2) -- (MxA2) node[midway, above] {};

    \draw[->,blue!80!black, ultra thick] (A1B2) -- (3, 1.2) node[midway, above] {};
    \draw[->,blue!80!black, ultra thick] (MzB2) -- (-1, 2.2) node[midway, above] {};
    \draw[<-,dashed] (MzB2) -- (-4,1.3) node[midway, above] {};
    
    \draw[->] (Q1Q2) -- (3, 0.8) node[midway, above] {};
    \draw[->,ultra thick] (Q1Q2) -- (3, -0.8) node[midway, above] {};
    \draw[->] (B1A2) -- (3, -1.2) node[midway, above] {};

    \def\pad{0.35}   
    \def\th{0.21}    

\end{tikzpicture}
\caption{Under the coupled layer construction, low-weight $X$-operators labeled by $M_z\otimes B_2$ and $Z$-operators $M_x\otimes A_2$ can arise in the stabilizer group. These are analogs of the metachecks in tensor product.}
\label{fig:CoupLayerCSSaug}
\end{figure}

\subsection{Concatenated and Subsystem Coupled-Layer Codes}

In this section, we relate the coupled-layer code to a generalization of concatenated codes and a related subsystem code. This generalizes the relation between the tensor product code, the concatenated code and the subsystem tensor product code given in our previous work Ref.~\cite{zhang2026coupled}.

First, we recall the code concatenation of two CSS codes. We call CSS$_1$ the outer code, and CSS$_2$ the inner code. The concatenated code is constructed by replacing physical qubits of CSS$_2$ by the logical qubits of CSS$_1$, and the Pauli $X$ and $Z$ on physical qubits of CSS$_2$ by logical $X$ and $Z$ of CSS$_1$. The result is a CSS code with Hilbert space $Q_1\otimes Q_2$. Denoting the $X$-logicals in CSS$_1$ by $\mathcal{X}(\mu):= \prod_{q_1\in \mu} X_{q_1}$, for $\mu \in \Ker d_1^T $; and $Z$-logicals by $\mathcal{Z}(\nu):= \prod_{q_1\in \nu} Z_{q_1}$, for $\nu \in \Ker \delta_1^T $, the stabilizer group of the concatenated code is
\begin{align*}
    &\Big\langle\,
    \bm{a_1}^{(q_2)},\,
    \bm{b_1}^{(q_2)},\,
    \prod_{q_2\in a_2}\mathcal{X}(\mu)^{(q_2)},\,
    \prod_{q_2\in b_2}\mathcal{Z}(\nu)^{(q_2)}
    \\&
    \hspace{40pt}\,\Big|\,
    a_1\in A_1,\,
    b_1\in B_1,\,
    \mu \in \Ker d_1^T,\,
    \nu \in \Ker \delta_1^T
    \,\Big\rangle
\end{align*}
where $\mathcal{X}(\mu)^{(q_2)}$ and $\mathcal{Z}(\nu)^{(q_2)}$ are CSS$_1$ logical operators in layer $q_2$. The stabilizers $\prod_{q_2\in a_2}\mathcal{X}(\mu)^{(q_2)}$ and $\prod_{q_2\in b_2} \mathcal{Z}(\nu)^{(q_2)}$ have large weight if the logicals of CSS$_1$ have large weight, which makes them potentially not LDPC. In Ref.~\cite{zhang2026coupled}, we showed gauging these stabilizers results in exactly the tensor product code, which is LDPC.

The analogue of the concatenated code in relation to the CSS coupled-layer code is the following. Instead of using all the $X$-logicals $\Ker d_1^T$ and $Z$-logicals $\Ker \delta_1^T$ in CSS$_1$, we only use a subset of the logicals in constructing the stabilizer group. That is, choose a subspace of $X$-logicals $L_X\subseteq \Ker d^T_1$, and a subspace of $Z$-logicals $L_Z \subseteq \Ker \delta_1^T$, we define the stabilizer group
\begin{align*}
    \mathcal{S}_{\text{concat}} = 
    &\Big\langle\,
    \bm{a_1}^{(q_2)},\,
    \bm{b_1}^{(q_2)},\,
    \prod_{q_2\in a_2}\mathcal{X}(\mu)^{(q_2)},\,
    \prod_{q_2\in b_2}\mathcal{Z}(\nu)^{(q_2)}
    \\&
    \hspace{40pt}\,\Big|\,
    a_1\in A_1,\,
    b_1\in B_1,\,
    \mu \in L_X,\,
    \nu \in L_Z
    \,\Big\rangle
\end{align*}
This stabilizer group clearly has fracton-like behavior in the following sense. If there exists an $X$-logical $\mu \in \Ker d^T_1$ in CSS$_1$ such that $\mu \notin L_X + \Imaa \delta_1$, and there is no corresponding conjugate $Z$-logical in $L_Z$, then $\mathcal{X}(\mu)^{(q_2)}$ is a non-trivial $X$-logical of $\mathcal{S}_{\text{concat}}$ for each $q_2$. Moreover, $\mathcal{X}(\mu)^{(q_2)}$ is fully supported in the single layer $q_2$, and can be deformed using $\bm{a_1}^{(q_2)}$, but can never move to another layer.

Analogous to the usual concatenated code, if the logicals in $L_X$ and $L_Z$ have large supports, then we might need to perform code surgery to recover the LDPC property. In order to do this, we choose a pair of mapping cones as in Figure~\ref{fig:MappingConeX} and Figure~\ref{fig:MappingConeZ}, such that
\begin{align*}
    \epsilon_x(\Ker \epsilon_x') = L_X
    \hspace{15pt}
    \text{and}
    \hspace{15pt}
    \epsilon_z(\Ker \epsilon_z') = L_Z
\end{align*}
We then use these mapping cones to construct the complex in Figure~\ref{fig:CoupLayerCSS}, leading exactly to the CSS coupled-layer code, with stabilizer group in Eq.~\ref{eq:mc_CSS_stabilizer}. In this way, we see $\bm \alpha(e_x, a_2)$ and $\bm \beta(e_z, b_2)$ product exactly to $\prod_{q_2\in a_2}\mathcal{X}(\mu)^{(q_2)}$ and $\prod_{q_2\in b_2}\mathcal{Z}(\nu)^{(q_2)}$ by canceling ancillas.

On the other hand, instead of adding ancillas, and imposing the Gauss laws $\bm \alpha(e_x, a_2)$ and $\bm \beta(e_z, b_2)$, one can directly measure the operators 
\begin{align*}
    & \bm{e_x}^{(a_2)} := \prod_{q_2 \in a_2} \bm{e_x}^{(q_2)} =
    \prod_{q_1 \in e_x} \prod_{q_2\in a_2} X_{q_1,q_2},\\
    & \bm{e_z}^{(b_2)} := \prod_{q_2 \in b_2} \bm{e_z}^{(q_2)} =
    \prod_{q_1 \in e_z} \prod_{q_2\in b_2} Z_{q_1,q_2}.
\end{align*}
These operators do not commute with $\bm{a_1}^{(q_2)}$ and $\bm{b_1}^{(q_2)}$ in $\mathcal{S}_{\text{concat}}$, hence they do not define a stabilizer group together. Instead, they form the gauge group of a subsystem code~\cite{poulin2005stabilizer}
\begin{align*}
    \mathcal{G} = \langle\,
    i ,\,
    \bm{a_1}^{(q_2)}, \, \bm{b_1}^{(q_2)}, \, 
    \bm{e_x}^{(a_2)}, \,
    \bm{e_z}^{(b_2)} 
    \,\rangle.
\end{align*}
The relation of this subsystem code and previous codes is via gauge fixing the subsystem code\footnote{As an unfortunate terminology, it is worth noting that the word ``gauge" in gauge group and gauge fixing used in the context of subsystem codes has no relation to the concept of gauging in Sec.~\ref{sec:review_g}.}. Operationally, we keep add operators $\bm{a_1}^{(q_2)}$ and  $\bm{b_1}^{(q_2)}$, and remove all operators in $\mathcal{G}$ that do not commute, keeping only the commuting ones. This gives the stabilizer group (modulo $i$)
\begin{align*}
    &\Big\langle\, 
    \bm{a_1}^{(q_2)},\,
    \bm{b_1}^{(q_2)},\,
    \prod_{e_x \in \mu} \bm{e_x}^{(a_2)}\, ,
    \prod_{e_z \in \nu} \bm{e_z}^{(b_2)}
    \,|\,\\
    & \hspace{35pt}
    a_1\in A_1,\,
    b_1\in B_1,\,
    \mu \in \epsilon_x^{-1}(\Ker d^T),\,
    \nu \in \epsilon_z^{-1}(\Ker \delta ^T)
    \,\Big\rangle
\end{align*}
If the maps are chosen such that $\epsilon_x^{-1}(\Ker d^T_1) = \Ker \epsilon_x'$ and $\epsilon_z^{-1}(\Ker \delta_1^T) = \Ker \epsilon_z'$, then we see this is precisely the stabilizer group $\mathcal{S}_{\text{concat}}$.

To summarize, we have defined a generalization of concatenated code, called the concatenated coupled-layer code, by only using a subset of logical qubits of the outer code to replace the physical qubits of the inner code. This potentially leads to a CSS code that is not LDPC. Gauging the high-weight stabilizers, recovers the CSS coupled-layer code. Moreover, the generalized concatenated code can be thought of as a specific gauge fixing of the subsystem coupled-layer code.

\subsection{Example: X-Cube}\label{sec:cssclc_xc}

We illustrate how the X-cube model is a CSS coupled-layer code. This particular construction has appeared in \cite{Liu23} in relation to the bulk-boundary correspondence of fracton phases \cite{Liu23,schuster2023holographic,Liu26}.

We take CSS$_1$ to be the classical code defined in Eq.~\ref{eq:XcubepregaugeIsingterms}, which we repeat here
\begin{align*}
  \bm b_{v,x} &=    \begin{tikzpicture}
    [baseline=0cm, scale = 0.8]
        \node at (0,0) {};
        \node at (-0.7, 0) {$Z$};
        \node at (0.7,0) {$Z$};
        \node at (0, 0.7) {};
        \node at (0,-0.7) {};
        \draw[step = 2] (-1.2,-1.2) grid (1.2,1.2);
    \end{tikzpicture} & \bm b_{v,y}&=\begin{tikzpicture}
    [baseline=0cm, scale = 0.8]
        \node at (0,0) {};
        \node at (-0.7, 0) {};
        \node at (0.7,0) {};
        \node at (0, 0.7) {$Z$};
        \node at (0,-0.7) {$Z$};
        \draw[step = 2] (-1.2,-1.2) grid (1.2,1.2);
    \end{tikzpicture}
\end{align*}
This corresponds to the chain complex
\begin{align*}
    E \xrightarrow{d_1^T} V \oplus V
\end{align*}
Next, we define CSS$_2$ to be a repetition code in the $X$-basis. The chain complex is $\delta_2 = \partial^z$ where
\begin{equation*}
\begin{tikzpicture}[scale=0.8, every node/.style={scale=1}]
    \def\L{-3.5}
    \def\R{3.5}
    \node at (-0.75, -0.05) {$\partial^z:$};
    \node (Q2) at (\R, 0) {$V^z$};
    \node [text=red!80!black] (A2) at (0 , 0) {$E^z$};
    
    \draw[<-] (Q2) -- (A2);
\end{tikzpicture},
\end{equation*}
where $A_2 = E^z$ and $Q_2= V^z$ are edges and vertices of a 1D lattice, respectively. 

The mapping cone for a single layer of CSS$_1$, which partially gauges the logicals (in particular, only pairs of logicals) is constructed in Eq.~\eqref{eqn:Xcube_MC}. We now use this mapping cone to obtain the X-cube model. We augment this into the following mapping cone:

\begin{equation*}
\begin{tikzpicture}[scale=0.8, every node/.style={scale=1}]
    \def\L{-3.5}
    \def\R{3.5}
    \def\H{1}
    \def\HH{2}

    \node (Ex) [text=red!80!black] at (\L, -\H) {$P \otimes E^z$};
    \node (Qx) at (0, -\HH) {$V \otimes  E^z$};
    \node (Q1) at (0, 0) {$E \otimes V^z$};
    \node [text=blue!80!black] (B1) at (\R , -\H) {$(V \oplus V)\otimes V^z$};
    
    \draw[->] (Q1) -- (B1);
    \draw[->] (Ex) -- (Q1);
    \draw[->] (Ex) -- (Qx);
    \draw[->] (Qx) -- (B1);

\end{tikzpicture}
\end{equation*}
The qubits can be identified with edges of the 3D cubic lattice in the $x-y$ plane and the $z$-direction, respectively. The $X$-stabilizers corresponds to cubes
\begin{align*}
    \bm\alpha(p, e^z) 
    \hspace{5pt}
    =
    \hspace{5pt}
    \begin{tikzpicture}[scale = 1, baseline = 0.1cm, scale = 0.8]
    \coordinate (O) at (0,0,0); 
    \coordinate (A) at (2,0,0);
    \coordinate (B) at (2,2,0);
    \coordinate (C) at (0,2,0);
    \coordinate (D) at (0,0,2);
    \coordinate (E) at (2,0,2);
    \coordinate (F) at (2,2,2);
    \coordinate (G) at (0,2,2);
    %
    \node  at (1, 0, 0) {$X$};
    \node  at (1, 0, 2) {$X$};
    \node at (0, 0, 1) {$X$};
    \node at (2, 0, 1) {$X$};
    \node at (0, 1, 0) {$X$};
    \node at (0, 1, 2) {$X$};
    \node at (2, 1, 0) {$X$};
    \node at (2, 1, 2) {$X$};
    \node  at (1, 2, 0) {$X$};
    \node  at (1, 2, 2) {$X$};
    \node at (0, 2, 1) {$X$};
    \node at (2, 2, 1) {$X$};
%
    \draw [dotted] (O) -- (A);
    \draw (A) -- (B) -- (C); 
    \draw [dotted] (O) -- (C);
    \draw (D) -- (E) -- (F) -- (G) -- cycle; 
    \draw [dotted] (O) -- (D); 
    \draw (A) -- (E);
    \draw (B) -- (F);
    \draw (C) -- (G);
    \end{tikzpicture}
\end{align*}
While the two $Z$-stabilizers are defined for each vertex as
\begin{align*}
    \bm\zeta(v_x, v^z) 
    \hspace{5pt}
    =
    \hspace{5pt}
    \begin{tikzpicture}
    [baseline=0.cm, scale = 0.8]
        \node at (0,0) {};
        \node at (-0.7, 0) {$Z$};
        \node at (0.7,0) {$Z$};
        \node at (0, 0.7) {$Z$};
        \node at (0,-0.7) {$Z$};
        \draw (0,0,-1.2) -- (0,0,1.2)
        (0,-1.2,0) -- (0,1.2,0)
        (-1.2,0,0) -- (1.2,0,0);
    \end{tikzpicture}
    \hspace{30pt}
    \bm\zeta(v_y, v^z) 
    \hspace{5pt}
    =
    \hspace{5pt}
    \begin{tikzpicture}
    [baseline=0.cm, scale = 0.8]
        \node at (0,0) {};
        \node at (0,-0.7, 0) {$Z$};
        \node at (0,0.7,0) {$Z$};
        \node at (0,0, 0.7) {$Z$};
        \node at (0,0,-0.7) {$Z$};
        \draw (0,0,-1.2) -- (0,0,1.2)
        (0,-1.2,0) -- (0,1.2,0)
        (-1.2,0,0) -- (1.2,0,0);
    \end{tikzpicture}
\end{align*}
Together they generate the stabilizer group of X-cube model.

Now, to perform the coupled-layer construction, the checks of the repetition code CSS$_2$ tells us that we should actually perform the partial gauging simultaneously for two consecutive stacks. This means that we are gauging not just pairs of line symmetries in a single $xy$ plane, but pairs of such pairs for every adjacent plane in the $z$ direction. Thus, we are gauging four line symmetries that bound a stack of cubes where the stacking is either in the $x$ or $y$ direction.

Hence, starting with a stack of CSS$_1$ given by the cochain complex
\begin{equation*}
\begin{tikzpicture}[scale=0.8, every node/.style={scale=1}]
    \def\L{-3.5}
    \def\R{3.5}
    \def\H{2.5}

    \node (Ex)  at (\L, \H) {$E \otimes V^z$};
    \node (Qx) [text=blue!80!black]at (0, \H) {$(V \oplus V) \otimes C^z$};
    
    \draw[->] (Ex) -- (Qx);

\end{tikzpicture},
\end{equation*}
which consists of 1D Ising models along every $x$ and $y$ line of the cubic lattice. In addition, for each pair $v\in V$ and $e^z \in E^z$ (i.e. every $z$-edge of the cubic lattice), introduce a $Z$-ancilla. The stabilizer group is
\begin{align*}
    \mathcal{S}_0 = \langle \bm b_{v,x}^{(v^z)}, 
    \bm b_{v,y}^{(v^z)}, Z_{v, e^z} \rangle,
\end{align*}
Code switching is given by the twelve-body term
\begin{align*}
    \bm\alpha(p,e^z) =  \bm{\Gamma}^{(e^z)}_x(\bm e_{x,p})\prod_{v^z \in e^z}\bm e_{x,p}^{(v^z)} 
\end{align*} 
This implements the Gauss law of the aforementioned gauging. After the code switching, the commuting terms in $\mathcal{S}_0$ are 
\begin{align*}
    &\bm\zeta(v_x ,v^z) = \bm b_{v,x}^{(v^z)} \prod_{e^z \ni v^z}Z_{v, e^z},\\
    &\bm\zeta(v_y, v^z) = \bm b_{v,y}^{(v^z)} \prod_{e^z \ni v^z}Z_{v, e^z}.
\end{align*}
Together with $\alpha(p,e^z)$, these generate the stabilizer group of the X-cube model.

As discussed in the general case, the following string logicals of the X-cube model are generated. Let $l$ be a straight line in 2D extending in the $x$ or $y$-direction, then $\prod_{v\in l}Z_{v, \Tilde{e}}$
is a $Z$-logical of the X-cube in $\mathcal{S}_0$, whose end point correspond to lineons.

We remark that by simply replacing the three repetition codes $E^\alpha \rightarrow V^\alpha$ for $\alpha = x,y,z$ into an arbitrary classical code, the resulting chain complex is exactly the 3-fold product generalization of X-cube presented in ~\cite{tan2025fracton,RakovszkyLDPC2}. Thus the above  generalized X-cube 3-fold product is also a CSS coupled-layer code. However, we note that \cite{RakovszkyLDPC2} presents a different coupled-layer construction which is more akin to p-string condensation \cite{MaLakeChenHermele2017,Vijay17}.

Next, we discuss the subsystem version of the X-cube model, which is given by
\begin{align*}
    \mathcal{G} = 
    \Big\langle i, \,\bm b_{v,x}^{(v^z)}, \,
    \bm b_{v,y}^{(v^z)}, \, \bm a_{e^z}^{(e_{x,p})}
    \,\Big \rangle,
\end{align*}
\vspace{-0.5cm}
\begin{align*} \begin{tikzpicture}
    [baseline=-0.2cm, scale = 0.8]
        \node at (0, 2){$\underline{\, \bm b_{v,x}^{(\tilde v)} \,}$};
        \node at (0,0) {};
        \node at (-0.7, 0) {$Z$};
        \node at (0.7,0) {$Z$};
        \node at (0, 0.7) {};
        \node at (0,-0.7) {};
        \draw[step = 2] (-1.2,-1.2) grid (1.2,1.2);
    \end{tikzpicture}
    \hspace{10pt}
    ,
    \hspace{15pt}
    \begin{tikzpicture}
    [baseline=-0.2cm, scale = 0.8]
        \node at (0, 2, 0) {$\underline{\, \bm b_{v,y}^{(\tilde v)}\,} $};
        \node at (0,0) {};
        \node at (0,-0.7, 0) {};
        \node at (0,0.7,0) {};
        \node at (0,0, 0.7) {$Z$};
        \node at (0,0,-0.7) {$Z$};
        \draw (0,0,-1.2) -- (0,0,1.2)
        (0,-1.2,0) -- (0,1.2,0);
    \end{tikzpicture}
    \hspace{10pt}
    ,
    \hspace{15pt}
    \begin{tikzpicture}[scale = 1, baseline = 0.3cm, scale = 0.8]
    \node at (1.1,3.0,1) {$\underline{\,\bm a_{\tilde e}^{(e_{x,p})}\,}$};
    \coordinate (O) at (0,0,0); 
    \coordinate (A) at (2,0,0);
    \coordinate (B) at (2,2,0);
    \coordinate (C) at (0,2,0);
    \coordinate (D) at (0,0,2);
    \coordinate (E) at (2,0,2);
    \coordinate (F) at (2,2,2);
    \coordinate (G) at (0,2,2);
    %
    \node  at (1, 0, 0) {$X$};
    \node  at (1, 0, 2) {$X$};
    \node at (0, 0, 1) {$X$};
    \node at (2, 0, 1) {$X$};
    \node  at (1, 2, 0) {$X$};
    \node  at (1, 2, 2) {$X$};
    \node at (0, 2, 1) {$X$};
    \node at (2, 2, 1) {$X$};
%
    \draw [dotted] (O) -- (A);
    \draw (A) -- (B) -- (C); 
    \draw [dotted] (O) -- (C);
    \draw (D) -- (E) -- (F) -- (G) -- cycle; 
    \draw [dotted] (O) -- (D); 
    \draw (A) -- (E);
    \draw (B) -- (F);
    \draw (C) -- (G);
    \end{tikzpicture}
\end{align*}
We perform the gauge fixing by keeping the $Z$-stabilizers $\bm b_{v,x}^{(\tilde v)}$ and $\bm b_{v,y}^{(\tilde v)}$, and keeping only the commuting $X$-operators in $\mathcal{G}$. In order to do this, consider straight lines $l^*$ on the dual of 2D lattice, extending in either $x$ or $y$-direction, then the operator $\mathcal{X}_{l^*, \tilde e} = \prod_{\tilde e \in l^*} \bm a_{\tilde e}^{(e_{x,p})}$
commutes with every term in $\mathcal{G}$. For example, if $l^*$ extends in the $x$-direction, then the operator is
\begin{align*}
    \mathcal{X}_{l^*, \tilde e} 
    \hspace{5pt}
    =
    \hspace{5pt}
    \begin{tikzpicture}[scale = 1, baseline = 0.4cm, scale = 0.8]
    \coordinate (O) at (0,0,0); 
    \coordinate (A) at (2,0,0);
    \coordinate (B) at (2,2,0);
    \coordinate (C) at (0,2,0);
    \coordinate (D) at (0,0,2);
    \coordinate (E) at (2,0,2);
    \coordinate (F) at (2,2,2);
    \coordinate (G) at (0,2,2);
    %
    \node  at (1, 0, 0) {$X$};
    \node  at (1, 0, 2) {$X$};
    \node  at (1, 2, 0) {$X$};
    \node  at (1, 2, 2) {$X$};
%
    \draw [dotted] (O) -- (A);
    \draw (B)-- (C); 
    \draw [dotted] (O) -- (C);
    \draw (D) -- (E) -- (F) -- (G) -- cycle; 
    \draw [dotted] (O) -- (D); 
    \draw (B) -- (F);
    \draw (C) -- (G);
    %
    %
    %
    \coordinate (O) at (2,0,0); 
    \coordinate (A) at (4,0,0);
    \coordinate (B) at (4,2,0);
    \coordinate (C) at (2,2,0);
    \coordinate (D) at (2,0,2);
    \coordinate (E) at (4,0,2);
    \coordinate (F) at (4,2,2);
    \coordinate (G) at (2,2,2);
    %
    \node  at (3, 0, 0) {$X$};
    \node  at (3, 0, 2) {$X$};
    \node  at (3, 2, 0) {$X$};
    \node  at (3, 2, 2) {$X$};
%
    \draw [dotted] (O) -- (A);
    \draw (A) -- (B) -- (C); 
    \draw [dotted] (O) -- (C);
    \draw (D) -- (E) -- (F) -- (G) -- cycle; 
    \draw [dotted] (O) -- (D); 
    \draw (A) -- (E);
    \draw (B) -- (F);
    \draw (C) -- (G);
     \draw[dotted]
     (-1,1,1) -- (5,1,1);
     \node[fill = black,circle, inner sep = 0.7pt] at (0,1,1) {};
     \node[fill = black,circle, inner sep = 0.7pt] at (4,1,1) {};
     \node at (5.3,1,1) {$l^*$}; \end{tikzpicture}
\end{align*}
We thus obtain the corresponding concatenated coupled-layer code
\begin{align*}
    \mathcal{S}_{\text{concat}} = \langle\,
    \bm b_{v,x}^{(\tilde v)}, \,
    \bm b_{v,y}^{(\tilde v)}, \,
    \mathcal{X}_{l^*, \tilde e}
    \,\rangle
\end{align*}
However, the $X$-stabilizers have large weight. In order to recover an LDPC code, we need to gauge these stabilizers. We do this by introducing $Z$-ancilla on all vertical edges, and impose the Gauss law
\begin{equation*}
    \begin{tikzpicture}[scale = 1, baseline = 0.1cm, scale = 0.8]
    \coordinate (O) at (0,0,0); 
    \coordinate (A) at (2,0,0);
    \coordinate (B) at (2,2,0);
    \coordinate (C) at (0,2,0);
    \coordinate (D) at (0,0,2);
    \coordinate (E) at (2,0,2);
    \coordinate (F) at (2,2,2);
    \coordinate (G) at (0,2,2);
    %
    \node  at (1, 0, 0) {$X$};
    \node  at (1, 0, 2) {$X$};
    \node at (0, 0, 1) {$X$};
    \node at (2, 0, 1) {$X$};
    \node at (0, 1, 0) {$X$};
    \node at (0, 1, 2) {$X$};
    \node at (2, 1, 0) {$X$};
    \node at (2, 1, 2) {$X$};
    \node  at (1, 2, 0) {$X$};
    \node  at (1, 2, 2) {$X$};
    \node at (0, 2, 1) {$X$};
    \node at (2, 2, 1) {$X$};
%
    \draw [dotted] (O) -- (A);
    \draw (A) -- (B) -- (C); 
    \draw [dotted] (O) -- (C);
    \draw (D) -- (E) -- (F) -- (G) -- cycle; 
    \draw [dotted] (O) -- (D); 
    \draw (A) -- (E);
    \draw (B) -- (F);
    \draw (C) -- (G);
    \end{tikzpicture}
\end{equation*}
which is nothing but the X-cube stabilizer $\bm\alpha(p, \tilde e)$. It follows the $Z$-stabilizers $\bm b_{v,x}^{(\tilde v)}$ and $\bm b_{v,y}^{(\tilde v)}$ in $\mathcal{S}_{\text{concat}}$ couples minimally to the ancillas, giving the X-cube $Z$-stabilizers $\bm \xi(v_x, \tilde v)$ and $\bm \xi(v_y, \tilde v)$. Together, we recover the X-cube model.

\subsection{Example: 4D Quantum Code}\label{sec:cssclc_4dqc}

As a second example, we take both inputs to be quantum codes. Hence both $X$ and $Z$ excitations are condensed. 

We take both CSS$_1$ and CSS$_2$ to be TC. For the $X$-type mapping cone, we choose it as the partial gauging which gauges all pairs of $X$-logical extending in $x$-direction, which is specified by
\begin{itemize}
    \item Excitations: spanned by vertices $\{e_{x,v} \,|\, v\in V\}$, with
    \begin{align*}
        \bm e_{x,v} = \begin{tikzpicture}
    [baseline=0cm, scale = 0.8]
        \node at (0,0) {};
        \node at (-0.7, 0) {};
        \node at (0.7,0) {};
        \node at (0, 0.7) {$X$};
        \node at (0,-0.7) {$X$};
        \node at (0.2,0.2) {$v$};
        \draw[step = 2] (-1.2,-1.2) grid (1.2,1.2);
    \end{tikzpicture}
    \end{align*}
    \item Algebra preserving map $\bm{\Gamma}_x$,
    \begin{align*}
        \bm e_{x,v}        \hspace{5pt}
        \xrightarrow{\bm{\Gamma}_x}   \hspace{5pt}
        \begin{tikzpicture}[baseline=0cm, scale = 0.8]
        \node at (1.2, 1) {$X$};
        \node at (2.8, 1) {$X$};
        \node at (2.3,0.3) {$v$};
        \draw[ step = 2] (0.5,0) grid (3.5,2);
        \draw (2,-1.5) -- (2,0);
    %
    %
    \end{tikzpicture}
    \hspace{25pt}
        \bm b_{p}
        \hspace{5pt}
        \xrightarrow{\bm{\Gamma}_x}
        \hspace{5pt}
        \begin{tikzpicture}[scale = 0.8, baseline = 1.5cm]
            \node at (1,1) {$Z$};
            \node at (1,3) {$Z$};
            \node at (1.5, 0.4) {$p$};
            \draw[ step = 2] (0,0) grid (2,4);
        \end{tikzpicture}
    \end{align*}
\end{itemize}
Indeed, this is equivalent to the following mapping cone 
\begin{equation*}
    \begin{tikzpicture}[scale=0.8, every node/.style={scale=1}]
    \def\L{-3.5}
    \def\C{0}
    \def\R{3.5}
    \def\shadedwidth{60}
    \def\shadedheight{160}
    \def\H{1}
    \def\HH{2}

    \def\shift{0.3} 

    \node[text=red!80!black] (A) at (\L, \H) {$V$};
    \node (Q) at (0 , 0) {$E$};
    \node[text=blue!80!black] (B) at (\R, -\H) {$P$};

    \node[text=red!80!black] (aux0) at (\L, -\H) {$V$};
    \node (aux1) at (0, -\HH) {$P$};

    \draw[->] (A) -- (Q);
    \draw[->] (Q) -- (B); 
    \draw[->] (aux0) -- (aux1);

    \draw[->] (aux0) -- (Q); 
    \draw[->] (aux1) -- (B);

    \node at ($(A)!0.5!(Q) + (0,\shift)$) {$\delta$};
    \node at ($(B)!0.5!(Q) + (0,\shift)$) {$d^T$};
    \node at ($(aux0)!0.5!(aux1) - (0,\shift)$) {$\epsilon_x'$};

    \node at ($(aux0)!0.5!(Q) + (0.2,\shift)$) {$\epsilon_x$};
    \node at ($(aux1)!0.5!(B) - (0.3,\shift)$) {$d'^T$};
\end{tikzpicture}
\end{equation*}
where the maps $\delta$ and $d^T$ are the usual coboundary maps from vertices to edges, and edges to plaquettes. The rest of the linear maps are given by
\begin{align*}
    &\epsilon_x:
    \hspace{5pt}
    \begin{tikzpicture}[scale = 0.6, baseline = -0.1cm]
        \draw[step = 2] (-1.5, -1.5)grid (1.5, 1.5);
        \node[fill = red,circle, inner sep = 1pt] at (0,0) {};
    \end{tikzpicture}
    \hspace{5pt}
    \mapsto
    \hspace{5pt}
    \begin{tikzpicture}[scale = 0.6, baseline = -0.1cm]
        \draw (0,-1) -- (0,1);
        \node at (0.2, 0) {$e$};
        \node[fill = red,circle, inner sep = 1pt] at (0,-1) {};
    \end{tikzpicture}
    \hspace{5pt}
    +
    \hspace{5pt}
    \begin{tikzpicture}[scale = 0.6, baseline = -0.1cm]
        \draw (0,-1) -- (0,1);
        \node at (0.2, 0) {$e$};
        \node[fill = red,circle, inner sep = 1pt] at (0,1) {};
    \end{tikzpicture}\\
    & \epsilon_x':
    \hspace{5pt}
    \begin{tikzpicture}[scale = 0.6, baseline = -0.1cm]
        \draw[step = 2] (-1.5, -1.5)grid (1.5, 1.5);
        \node[fill = red,circle, inner sep = 1pt] at (0,0) {};
    \end{tikzpicture}
    \hspace{5pt}
    \mapsto
    \hspace{5pt}
    \begin{tikzpicture}[scale = 0.6, baseline = 0.5cm]
        \draw[step = 2] (0, 0) grid (2,2);
        \node at (1, 1) {$p$};
        \node[fill = red,circle, inner sep = 1pt] at (0,0) {};
    \end{tikzpicture}
    \hspace{5pt}
    +
    \hspace{5pt}
    \begin{tikzpicture}[scale = 0.6, baseline = 0.5cm]
        \draw[step = 2] (0, 0) grid (2,2);
        \node at (1, 1) {$p$};
        \node[fill = red,circle, inner sep = 1pt] at (2,0) {};
    \end{tikzpicture}\\
    & d':
    \hspace{5pt}
    \begin{tikzpicture}[scale = 0.6, baseline = 0.5cm]
        \draw[step = 2] (0, 0) grid (2,2);
        \node[fill = red,circle, inner sep = 1pt] at (1,1) {};
    \end{tikzpicture}
    \hspace{5pt}
    \mapsto
    \hspace{5pt}
    \begin{tikzpicture}[scale = 0.6, baseline = 0.5cm]
        \draw[step = 2] (0, 0) grid (2,2);
        \node[fill = red,circle, inner sep = 1pt] at (1,1) {};
        \node at (1.4, 1.4) {$p$};
    \end{tikzpicture}
    \hspace{5pt}
    +
    \hspace{5pt}
    \begin{tikzpicture}[scale = 0.6, baseline = 1cm]
        \draw[step = 2] (0, 0) grid (2,4);
        \node[fill = red,circle, inner sep = 1pt] at (1,1) {};
        \node at (1, 3) {$p$};
    \end{tikzpicture}
\end{align*}
and extend by linearity.

Similarly for the $Z$-type mapping cone, we choose it to be the partial gauging which gauges all pairs of $Z$-logicals extending in $y$-direction, which is given 
\begin{itemize}
    \item Excitations: spanned by plaquettes $\{e_{z,p} \,|\, p\in P\}$, with
    \begin{align*}
        \bm e_{z,p} = \begin{tikzpicture}[scale = 0.8, baseline = 0.7cm]
            \node at (0,1) {$Z$};
            \node at (2,1) {$Z$};
            \node at (1,1) {$p$};
            \draw[ step = 2] (0,0) grid (2,2);
        \end{tikzpicture}
    \end{align*}
    \item Algebra preserving map $\bm{\Gamma}_z$,
    \begin{align*}
        \bm e_{z,p}
        \hspace{5pt}
        \xrightarrow{\bm{\Gamma}_z}
        \hspace{5pt}
        \begin{tikzpicture}[scale = 0.8, baseline = 0.7cm]
            \node at (2,0) {$Z$};
            \node at (2,2) {$Z$};
            \node at (1,1) {$p$};
            \draw[ step = 2] (0,0) grid (2,2);
    \end{tikzpicture}
    \hspace{25pt}
        \bm a_{v}
        \hspace{5pt}
        \xrightarrow{\bm{\Gamma}_z}
        \hspace{5pt}
        \begin{tikzpicture}[scale = 0.8, baseline = -0.1cm]
            \draw[step = 2] (-1.5, -1.5)grid (1.5, 1.5);
            \node at (0,0) {$X$};
            \node at (1.5,0) {$X$};
            \node at (0.4, 0.4) {$v$};
        \end{tikzpicture}
    \end{align*}
\end{itemize}
This is equivalent to the mapping cone
\begin{equation*}
    \begin{tikzpicture}[scale=0.8, every node/.style={scale=1}]
    \def\L{-3.5}
    \def\C{0}
    \def\R{3.5}
    \def\shadedwidth{60}
    \def\shadedheight{160}
    \def\H{1}
    \def\HH{2}

    \def\shift{0.3}
    
    \node [text=red!80!black] (A) at (\L, \H) {$V$};
    \node (Q) at (0 , 0) {$E$};
    \node [text=blue!80!black] (B) at (\R, -\H) {$P$};

    \node (aux0) at (0, \HH) {$V$};
    \node [text=blue!80!black] (aux1) at (\R, \H) {$P$};

    \draw[<-] (A) -- (Q);
    \draw[<-] (Q) -- (B); 
    \draw[<-] (aux0) -- (aux1);

    \draw[->] (aux0) -- (A); 
    \draw[->] (aux1) -- (Q);

    \node at ($(A)!0.5!(Q) - (0,\shift)$) {$\delta^T$};
    \node at ($(B)!0.5!(Q) - (0,\shift)$) {$d$};
    \node at ($(aux0)!0.5!(aux1) + (0,\shift)$) {$\epsilon_z'$};

    \node at ($(aux0)!0.5!(A) + (0,\shift)$) {$\delta'^T$};
    \node at ($(aux1)!0.5!(Q) + (0,\shift)$) {$\epsilon_z$};
\end{tikzpicture}
\end{equation*}
where $d$ and $\delta^T$ are again the usual boundary maps from plaquettes to edges, and edges to vertices. The rest of the linear maps are
\begin{align*}
    &\epsilon_z:
    \hspace{5pt}
    \begin{tikzpicture}[scale = 0.6, baseline = 0.5cm]
        \draw[step = 2] (0, 0) grid (2,2);
        \node[fill = red,circle, inner sep = 1pt] at (1,1) {};
    \end{tikzpicture}
    \hspace{5pt}
    \mapsto
    \hspace{5pt}
    \begin{tikzpicture}[scale = 0.6, baseline = 0.5cm]
        \draw[step = 2] (0, 0) grid (2,2);
        \node[fill = red,circle, inner sep = 1pt] at (1,1) {};
        \node at (0,1) {$e$};
    \end{tikzpicture}
    \hspace{5pt}
    +
    \hspace{5pt}
    \begin{tikzpicture}[scale = 0.6, baseline = 0.5cm]
        \draw[step = 2] (0, 0) grid (2,2);
        \node[fill = red,circle, inner sep = 1pt] at (1,1) {};
        \node at (2,1) {$e$};
    \end{tikzpicture}\\
    & \epsilon_z':
    \hspace{5pt}
    \begin{tikzpicture}[scale = 0.6, baseline = 0.5cm]
        \draw[step = 2] (0, 0) grid (2,2);
        \node[fill = red,circle, inner sep = 1pt] at (1,1) {};
    \end{tikzpicture}
    \hspace{5pt}
    \mapsto
    \hspace{5pt}
    \begin{tikzpicture}[scale = 0.6, baseline = 0.5cm]
        \draw[step = 2] (0, 0) grid (2,2);
        \node[fill = red,circle, inner sep = 1pt] at (1,1) {};
        \node at (2,0) {$v$};
    \end{tikzpicture}
    \hspace{5pt}
    +
    \hspace{5pt}
    \begin{tikzpicture}[scale = 0.6, baseline = 0.5cm]
        \draw[step = 2] (0, 0) grid (2,2);
        \node[fill = red,circle, inner sep = 1pt] at (1,1) {};
        \node at (2,2) {$v$};
    \end{tikzpicture}\\
    & \delta': 
    \hspace{5pt}
    \begin{tikzpicture}[scale = 0.6, baseline = -0.1cm]
        \draw[step = 2] (-1.5, -1.5)grid (1.5, 1.5);
        \node[fill = red,circle, inner sep = 1pt] at (0,0) {};
    \end{tikzpicture}
    \hspace{5pt}
    \mapsto
    \hspace{5pt}
    \begin{tikzpicture}[scale = 0.6, baseline = -0.1cm]
        \draw[step = 2] (-1.5, -1.5)grid (1.5, 1.5);
        \node[fill = red,circle, inner sep = 1pt] at (0,0) {};
        \node at (0.3, 0.3) {$v$};
    \end{tikzpicture}
    \hspace{5pt}
    +
    \hspace{5pt}
    \begin{tikzpicture}[scale = 0.6, baseline = -0.1cm]
        \draw[step = 2] (-1.5, -1.5)grid (1.5, 1.5);
        \node[fill = red,circle, inner sep = 1pt] at (0,0) {};
        \node at (1.5, 0) {$v$};
    \end{tikzpicture}
\end{align*}
and extend by linearity.

Recall that using the mapping cones that gauges all the logicals in each layer, the usual coupled-layer construction that recovers the tensor product gives the 4D toric code. Instead, we will use the above mapping cones to perform the coupled-layer construction. We label the vertices, edges and plaquettes in the two TC factors by $v_i$, $e_i$ and $p_i$ for $i = 1,2$ respectively. For each $e_2$, introduce a copy of TC. For each 
$v_2$, introduce a copy of the Hilbert space with qubits on plaquettes with terms $\bm{\Gamma}_x^{(v_2)}(\bm b_{p_1})$ for each plaquette $p_1$. Similarly, for each $p_2$, introduce a copy of the Hilbert space with qubits on vertices with terms $\bm{\Gamma}_z^{(p_2)}(\bm a_{v_1})$ for each vertex $v_1$. We see the total Hilbert space is
\begin{align*}
    Q \,=\, (E_1\otimes E_2) \oplus (V_1\otimes P_2) \oplus (P_1\otimes V_2)
\end{align*}
which exactly spans the $2$-cells of the $4$D hypercubic lattice. The stabilizer group of the layered system is
\begin{align*}
    \mathcal{S}_0 = 
    \langle \,
    \bm a_{v_1}^{(e_2)},\,
    \bm b_{p_1}^{(e_2)},\,
    \bm{\Gamma}_x^{(v_2)}(\bm b_{p_1}),\,
    \bm{\Gamma}_z^{(p_2)}(\bm a_{v_1})
    \,\rangle
\end{align*}

The code switching is given by
\begin{align*}
    &\bm \alpha(v_1, v_2) = \bm{\Gamma}_x^{(v_2)}(\bm e_{x,{v_1}}) \prod_{e_2 \in v_2} \bm e_{x,v_1}^{(e_2)}\\
    &\bm \beta(p_1, p_2) = \bm{\Gamma}_z^{(p_2)}(\bm e_{z,{p_1}}) \prod_{e_2 \in p_2} \bm e_{z,p_1}^{(e_2)}
\end{align*}
The $X$-stabilizers $\bm \alpha(v_1, v_2)$ span all $0$-cells, and $Z$-stabilizers $\bm \beta(p_1, p_2)$ span all $4$-cells of the hypercubic lattice. The remaining commuting terms $\mathcal{S}_0$ are
\begin{align*}
    & \bm \xi(v_1, e_2) = \bm a_{v_1}^{(e_2)} \prod_{p_2\ni e_2} \bm{\Gamma}_z^{(p_2)}(\bm a_{v_1})\\
    & 
    \bm \zeta(p_1, e_2) = \bm b_{p_1}^{(e_2)} \prod_{v_2\ni e_2} \bm{\Gamma}_x^{(v_2)}(\bm b_{p_1})
\end{align*}
The $X$-stabilizers $\bm \xi(v_1, e_2)$ and $Z$-stabilizers $\bm \zeta(p_1, e_2)$ each span half of $1$-cells and $3$-cells respectively.

This code can be understood in the following way. Starting with $2$D square lattice, with a copy of TC placed on each edge. For each vertex, we partial gauge pairs of $X$-logicals along $x$-direction simultaneously in all four copies of TC on the edge neighboring the vertex. Similarly, for each plaquette, we partial gauge pairs of $Z$-logicals along $y$-direction simultaneously in all four copies of TC on the edge neighboring the plaquette. Due to the difficulty in drawing the final form of the code, we instead opt to express it in polynomial form~\cite{haah2013commuting}. Since the qubits live on $2$-cells of the hypercubic lattice, there are $6$ qubits per unit cell, and up to an ordering, the polynomials of the stabilizers are given by
\onecolumngrid
\begin{align*}
    \begin{blockarray}{cccccc}
    \bm\alpha(v_1,v_2) & \bm\xi(v_1,e_2^{H}) &\bm \xi(v_1,e_2^{V})& \bm \beta(p_1,p_2)&\bm\zeta(p_1,e_2^{H}) &\bm \zeta(p_1,e_2^{V}) \\
    \vspace{-0.2cm}\\
    \begin{block}{(cccccc)}
      1 + \bar x 
      & 0 
      & 0    
      & 0
      & 0
      & 0\\
      0
      & 1+\bar x 
      & 0   
      & 0
      & 0
      & 0\\
      0 
      & 0 
      & 1+\bar x     
      & 0
      & 0
      & 0\\
      (1+\bar y)(1+\bar z) 
      & 1+\bar y 
      & 0 
      & 0
      & 0
      & 0\\
      (1+\bar y)(1+\bar w)
      & 0 
      & 1+\bar y 
      & 0
      & 0
      & 0\\
      0
      & (1+\bar x)(1+\bar w)
      & (1+\bar x)(1+\bar z)
      & 0
      & 0
      & 0\\
      \vspace{-0.3cm}\\
      \cline{1-6}
      \vspace{-0.3cm}\\
      0
      & 0
      & 0
      & 0
      & (1+y)(1+z)
      & (1+y)(1+w)\\
      0
      & 0
      & 0
      & 0
      & 1+y
      & 0\\
      0
      & 0
      & 0
      & 0
      & 0
      & 1+y\\
      0
      & 0
      & 0
      & (1+x)(1+w)
      & 1+x
      & 0\\
      0
      & 0
      & 0
      & (1+x)(1+z)
      & 0
      & 1+x\\
      0 
      & 0 
      & 0 
      & 1+y
      & 0
      & 0\\
    \end{block}
  \end{blockarray}
\end{align*}
\twocolumngrid
Note that we do not claim the codes constructed in this way are guaranteed to have good distances. This is not surprising, since the vanilla tensor product can also give rise to codes with constant distance (see Ref.~\cite{zhang2026coupled} for further discussion). Indeed, for the above code, one can check that the following are weight-5 logicals:
\begin{align*}
    (\,
    1 \hspace{5pt} 1+\bar z \hspace{5pt} 1+\bar w \hspace{5pt} 0 \hspace{5pt} 0 \hspace{5pt} 0 
    \hspace{5pt} |
    \hspace{5pt} 0 \hspace{5pt} 0 \hspace{5pt} 0 \hspace{5pt} 0 \hspace{5pt} 0 
    \hspace{5pt} 0 
    \,)^T,\\
    (\,
    0 \hspace{5pt} 0 \hspace{5pt} 0 \hspace{5pt} 0 \hspace{5pt} 0 
    \hspace{5pt} 0 
    \hspace{5pt} |
    \hspace{5pt} 0 \hspace{5pt} 1+w \hspace{5pt} 1+z \hspace{5pt} 0 \hspace{5pt} 0 
    \hspace{5pt} 1 
    \,)^T.
\end{align*}
This makes the above code have constant distance. Nevertheless, one can show that adding the above logicals as stabilizers results in a code with distance linear in the system size.

\section{Coupled-Layer Codes}\label{sec:clc}

In this section, we present a product construction inspired by the coupled-layer construction in Sec.~\ref{sec:cssclc_clc}. From the coupled-layer perspective, there is no reason that the code being stacked and condensed must be a CSS code. Indeed, as we will present below, the two input codes in this construction are not required to be CSS. However, the idea is the same as before. Excitations in layers of the first code are being condensed, while the second code provides a recipe for the condensation to commute. For this reason, we refer the resulting code as the \textit{coupled-layer code}. Finally, we will see the coupled-layer code indeed reduce to the CSS coupled-layer code as a special case.

In order to generalize to the non-CSS case, we recall the mapping cone used to generalize tensor product contains the same data as the algebra-preserving map $\bm \Gamma_x$ (resp. $\bm\Gamma_z$) acting on the excitation algebra $\bm{\mathcal{A}}_x$ (resp. $\bm{\mathcal{A}}_z$). While the former can only be defined when a chain complex is available, the latter is not. Instead, they can be defined on any operator algebra. Treating the operator maps as fundamental objects, it is then straightforward to generalize to non-CSS codes.

We replace CSS$_1$ with an arbitrary stabilizer code with stabilizer group $\mathcal{H} = \langle\bm h_t \rangle$ defined on qubits $Q_1$, and replace CSS$_2$ with a stabilizer group generated from a set that only consists of pure $X$,$Y$, or $Z$ type stabilizers. We will denote these sets $\{\mathcal{X}_\alpha\}$, $\{\mathcal{Y}_\beta\}$ and $\{\mathcal{Z}_\gamma\}$, defined on an auxiliary set of qubits $Q_2$, where $\alpha$, $\beta$, $\gamma$ are indices for these Pauli operators.

Next, instead of defining the mapping cone, we specify the condensation algebras and algebra-preserving maps as appropriate in the non-CSS case,
\vspace{-0.5em}
\begin{itemize}
    \item Excitations: For each $\mathcal{X}_\alpha$ (resp. $\mathcal{Y}_\beta$, $\mathcal{Z}_\gamma$), choose a set of $X$ (resp. $Y$, $Z$) type Pauli operators $\{
    \bm a_{\alpha, i}
    \}_{i\in I_\alpha}$ $\left( \text{resp. } \{
    \bm b_{\beta, j}
    \}_{j\in J_\beta}\right.$, $ \left.\{
    \bm c_{\gamma,k}
    \}_{k\in K_\gamma} \right)$ in the system $\mathcal{H}$. $I_\alpha$, $J_\beta$ and $K_\gamma$ are index sets associated by $\mathcal{X}_\alpha$, $\mathcal{Y}_\beta$ and $\mathcal{Z}_\gamma$ respectively. These excitations define the excitation algebras
    \begin{gather*}
        \hspace{25pt}
        \mathcal{A}^{(\alpha)}_x := \langle\, \bm h_t ,\, \bm a_{\alpha, i} \,\rangle\\
        \hspace{25pt}
        \left(\text{resp. }
        \hspace{5pt}
        \mathcal{A}^{(\beta)}_y := \langle \bm h_t ,\, \bm b_{\beta, j} \,\rangle
        ,\hspace{10pt}
        \mathcal{A}_z^{(\gamma)} := \langle\,
        \bm h_t ,\, \bm c_{\gamma, k} \,\rangle
        \right)
    \end{gather*}
    \vspace{-0.5em}
    \item Algebra-preserving maps: For each $\mathcal{X}_\alpha$ (resp. $\mathcal{Y}_\beta$, $\mathcal{Z}_\gamma$), choose map $\bm{\Gamma}^{(\alpha)}_x$ $\left( \text{resp. }\bm{\Gamma}^{(\beta)}_y\right.$, $\left.\bm{\Gamma}^{(\gamma)}_z\right)$ defined on the excitation algebra $\mathcal{A}^{(\alpha)}_x$ (resp. $\mathcal{A}^{(\beta)}_y$, $\mathcal{A}^{(\gamma)}_z$), which takes each generator to a Pauli operator on an auxiliary Hilbert space $Q_x^{(\alpha)}$ (resp. $Q_y^{(\beta)}$, $Q_z^{(\gamma)}$), while preserving commutation relations. 
\end{itemize}
\vspace{-0.5em}

Now we describe stacking and code switching.  
For each $q_2 \in Q_2$, introduce a copy of $\mathcal{H}$. For each $\mathcal{X}_\alpha$, $\mathcal{Y}_\beta$ and $\mathcal{Z}_\gamma$, introduce a copy of $\left\{\bm{\Gamma}^{(\alpha)}_x(\bm h_t)\right\}$, $\left\{\bm{\Gamma}^{(\beta)}_y(\bm h_t)\right\}$ and $\left\{\bm{\Gamma}^{(\gamma)}_z(\bm h_t)\right\}$ respectively. Thus the stacked stabilizer group is defined on the total Hilbert space
\begin{align*}
    \left(Q_1\otimes Q_2\right)
    \oplus
    \left(\bigoplus_\alpha Q_x^{(\alpha)}\right)
    \oplus
    \left(\bigoplus_\beta Q_y^{(\beta)}\right)
    \oplus
    \left(\bigoplus_\gamma Q_z^{(\gamma)}\right)
\end{align*}
whose stabilizer group is generated by
\begin{align*}
    \mathcal{S}_0 = \left\langle \bm h_t^{(q)}, \bm{\Gamma}^{(\alpha)}_x(\bm h_t), \bm{\Gamma}^{(\beta)}_y(\bm h_t), \bm{\Gamma}^{(\gamma)}_z(\bm h_t) \right\rangle,
\end{align*}
where $\bm h_t^{(q)}$ is the stabilizer $\bm h_t$ in layer $q$.

Next we perform code switching by enforcing 
\begin{align*}
    \bm h(i, \alpha) = \bm{\Gamma}^{(\alpha)}_x(\bm a_{\alpha, i}) \prod_{q\in \mathcal{X}_\alpha} \bm a_{\alpha, i}^{(q)},
\end{align*} 
where $\bm a_{\alpha, i}^{(q)}$ is the excitation $\bm a_{\alpha, i}$ in layer $q$. Physically, this condenses the excitations created by $\bm a_{\alpha, i}$ in multiple layers determined by the $X$-check $\mathcal{X}_\alpha$. Similarly, we condense the $Y$ and $Z$-type excitations by enforcing 
\begin{align*}
    &\bm h(j, \beta) = \bm{\Gamma}^{(\beta)}_y (\bm b_{\beta, j}) \prod_{q\in \mathcal{Y}_\beta} \bm b_{\beta, j}^{(q)},
    \\
    &\bm h(k, \gamma) = \bm{\Gamma}^{(\gamma)}_z(\bm c_{\gamma, k}) \prod_{q\in \mathcal{Z}_\gamma} \bm c_{\gamma, k}^{(q)}.
\end{align*}
We need to check that these condensations commute with each other. Given $\bm h(i, \alpha)$ and $\bm h(i', \alpha')$, they only overlap on layers of $\mathcal{H}$ labeled by supp$(\mathcal{X}_\alpha) \cap$ supp$(\mathcal{X}_{\alpha'})$. But they only act as Pauli $X$ on the intersection, hence they commute. Similarly for $\bm h(j, \beta)$ and $\bm h(k, \gamma)$. Next, consider $\bm h(i, \alpha)$ and $\bm h(j, \beta)$. Again they overlap on layers supp$(\mathcal{X}_\alpha) \cap$ supp$(\mathcal{Y}_\beta)$. But $\mathcal{X}_\alpha$ and $\mathcal{Y}_\beta$ always have even overlap since they were chosen to commute with each other. Hence $\bm h(i, \alpha)$ and $\bm h(j, \beta)$ always commute. 

After the code switching, only commuting terms in $\mathcal{S}_0$ remain. Consider
\begin{align*}
    \bm h(t,q) = \bm h_t^{(q)}
    \prod_{\mathcal{X}_\alpha \ni q} \bm{\Gamma}^{(\alpha)}_x (\bm h_t) 
    \prod_{\mathcal{Y}_\beta \ni q} \bm{\Gamma}^{(\beta)}_y (\bm h_t)
    \prod_{\mathcal{Z}_\gamma \ni q} \bm{\Gamma}^{(\gamma)}_z (\bm h_t),
\end{align*}
which obviously belongs to $\mathcal{S}_0$. To see it commutes with $\bm h(i,\alpha)$, note that they overlap if and only if $q\in \mathcal{X}_\alpha$ and $a_{\alpha,i}$ overlaps with $\bm h_t$, in which case 
\begin{align*}
    \text{sgn}(\bm h(t,q), \bm h(i,\alpha)) = \text{sgn}(\bm h_t, \bm a_{\alpha,i})\text{sgn}(\bm{\Gamma}^{(\alpha)}_x (\bm h_t), \bm{\Gamma}^{(\alpha)}_x (\bm a_{\alpha,i})).
\end{align*}
But $\bm{\Gamma}^{(\alpha)}_x$ preserves commutation relation, so this is always $1$. Similarly, $\bm h(t,q)$ commutes with $\bm h(j,\beta)$ and $\bm h(k,\gamma)$. We have
\begin{align*}
    \mathcal{S} \supseteq \langle 
    \bm h(t,q), \bm h(i,\alpha), \bm h(j,\beta), \bm h(k,\gamma)
    \rangle.
\end{align*}
We take the above four terms as defining the stabilizer group of the coupled-layer code. Note that from the code switching perspective, there can be more terms in $\mathcal{S}_0$ that remain in $\mathcal{S}$, as in the case of the tensor product. For example, if certain product $\prod_{t\in T} \bm h_t$ commutes with $\bm a_{\alpha,i}$ for all $i$ with some fixed $\alpha$, then $\prod_{t\in T} \bm{\Gamma}^{(\alpha)}_x(\bm h_t)$ remains in $\mathcal{S}$. In particular, if there is a nontrivial relation between the stabilizers $\prod_{t\in T} \bm h_t = 1$, then $\prod_{t\in T} \bm{\Gamma}^{(\alpha)}_x (\bm h_t)$ is in $\mathcal{S}$ for all $\alpha$. We do not attempt to write down all such terms in the most general form.

This construction reduces to the CSS coupled-layer code in the following way: $\mathcal{H}$ is taken to be CSS$_1$ with stabilizers $\bm{a_1}$ and $\bm{b_1}$. The auxiliary operators $\{\mathcal{Y}_\beta\}$ are trivial, while $\{\mathcal{X}_\alpha\}$ and $\{\mathcal{Z}_\gamma\}$ generate the stabilizer group of CSS$_2$, hence we replace the label $\alpha$ and $\gamma$ by $a_2$ and $b_2$ respectively. The excitation algebras are independent of $\alpha$ and $\gamma$, that is, $\bm{\mathcal{A}}^{(a_2)}_x \equiv \bm{\mathcal{A}}_x$ and $\bm{\mathcal{A}}^{(b_2)}_z \equiv \bm{\mathcal{A}}_z$ for all $a_2$ and $b_2$. We label the excitations in $\bm{\mathcal{A}}_x$ and $\bm{\mathcal{A}}_z$ by $\{\bm e_{x,i}\}$ and $\{\bm e_{z,k}\}$ respectively. Similarly, the algebra-preserving maps are chosen $\bm \Gamma_x^{(a_2)} \equiv \bm \Gamma_x$ for all $a_2$ (resp. $\bm \Gamma_z^{(b_2)} \equiv \bm \Gamma_z$ for all $b_2$) such that all $\bm{a_1}$ (resp. $\bm{b_1}$) are mapped to identity. In this way, as discussed in Sec.~\ref{sec:review_mc}, the pair $(\bm{\mathcal{A}}_x, \bm\Gamma_x)$ and $(\bm{\mathcal{A}}_z, \bm\Gamma_z)$ each defines a mapping cone as in Figure~\ref{fig:MappingConeX} and Figure~\ref{fig:MappingConeZ}. It is straightforward to check
\begin{align*}
    &\bm h(i, a_2) = \bm \alpha(e_{x,i}, a_2)\\
    &\bm h(k, b_2) = \bm \beta(e_{z,k}, b_2)\\
    &\bm h(a_1, q_2) = \bm \xi(a_1, q_2)\\
    &\bm h(b_1, q_2) = \bm \zeta(b_1, q_2)
\end{align*}
Hence the CSS coupled-layer code is indeed recovered.


\subsection{Example: XYZ Product Code} 
In this section, we give a concrete example by showing that the quantum XYZ product code \cite{2022Quant...6..766L} is a coupled-layer code. The XYZ product code was inspired from the Chamon fracton code~\cite{chamon2005quantum}. In particular, the quantum XYZ product takes three classical codes as input, and reproduces the Chamon model when the classical codes are 1D repetition codes. We show how to obtain this construction using our framework.

Let us first review the XYZ product code.  Given three classical codes $\mathcal{C}_i$ with parity checks $H_i: Q_i \rightarrow C_i$. By taking tensor products of the length-1 chain complexes, one obtains a length-4 chain complex. The XYZ product is given by the following diagram,
\begin{align*}
    \begin{tikzpicture}[scale = 1.0]
        \node at (0,0) {$QQQ$};
        \node at (2.5,1) {$CQQ$};    \node at (2.5,0) {$QCQ$};
        \node at (2.5,-1) {$QQC$};
        \node at (5,1) {$CCQ$};
        \node at (5,0) {$CQC$};
        \node at (5,-1) {$QCC$};
        \node at (7.5,0) {$CCC$};
        \node at (3.85,1.2) {$y$};
        \node at (3.8,0.7) {$z$};
        \node at (3.8,0.2) {$x$};
        \node at (3.8,-0.2) {$z$};
        \node at (3.8,-0.7) {$x$};
        \node at (3.85,-1.2) {$y$};
        \draw[->]
        (0.5,0) -- (1.9,1)  node[midway, above] {$x$};
        \draw[->]
        (0.5,0) -- (1.9,0) node[midway, above] {$y$};
        \draw[->]
        (0.5,0) -- (1.9,-1) node[midway, above] {$z$};
        \draw[->]
        (3.1,1) -- (4.5,1);
        \draw[->]
        (3.1,1) -- (4.5,0.1);
        \draw[->]
        (3.1,0) -- (4.5,0.8);
        \draw[->]
        (3.1,0) -- (4.5,-0.8);
        \draw[->]
        (3.1,-1) -- (4.5,-0.1);
        \draw[->]
        (3.1,-1) -- (4.5,-1);
        \draw[->]
        (5.6,1) -- (7,0.1)  node[midway, above] {$z$};
        \draw[->]
        (5.6,0) -- (7,0) node[midway, above] {$y$};
        \draw[->]
        (5.6,-1) -- (7,-0.1) node[midway, above] {$x$};
    \end{tikzpicture}
\end{align*}
where $ABC$ is an abbreviation for $A_1\otimes B_2 \otimes C_3$. One then interprets $QQQ$, $CCQ$, $CQC$, $QCC$ as qubits, and $CQQ$, $QCQ$, $QQC$, $CCC$ as stabilizers. The $x$, $y$, and $z$ indicate which Pauli type a stabilizer acts on the qubits. All four types of stabilizers can be easily written. For example, take $c_1q_2q_3\in CQQ$, the corresponding stabilizer term is 
\begin{align*}
    \bm h_{c_1q_2q_3} = \prod_{q_1\in c_1}X_{q_1q_2q_3} \prod_{c_2\ni q_2}Y_{c_1c_2q_3} \prod_{c_3\ni q_3}Z_{c_1q_2c_3}.
\end{align*}
Indeed, when all three inputs are the 1D classical Ising model, the output is Chamon's fracton model.

The coupled-layer construction of the XYZ product is most easily described in terms of products between two classical codes, in particular the hypergraph product and homological product, which we review first. Each of these products is a special case of tensor product at the chain complex level. Given a pair of classical codes $\mathcal{C}_1$ and $\mathcal{C}_2$ with parity checks $H_1: Q_1 \ra C_1$ and $H_2: Q_2 \ra C_2$, each can be viewed as a $2$-term chain complex. Taking the tensor product gives the following chain complex
\begin{equation*}
    \begin{tikzpicture}[scale=0.8, every node/.style={scale=1}]
    \def\L{-4}
    \def\R{4}
    \def\H{1}
    \def\HH{2}
    \def\shift{0.3}
    
    \node (QQ) [text=red!80!black] at (\L, -\H) {$Q_1\otimes Q_2$};
    \node (CQ) at (0, -\HH) {$C_1\otimes Q_2$};
    \node (QC) at (0, 0) {$Q_1\otimes C_2$};
    \node [text=blue!80!black] (CC) at (\R , -\H) {$C_1 \otimes C_2$};
    
    \draw[->] (QQ) -- (QC);
    \draw[->] (QQ) -- (CQ);
    \draw[->] (QC) -- (CC);
    \draw[->] (CQ) -- (CC);
    \node at ($(QC)!0.5!(CC) + (0.1,\shift)$) {$H_1 \otimes id$};
    \node at ($(QQ)!0.5!(CQ) - (0.4,\shift)$) {$H_1 \otimes id$};

    \node at ($(QQ)!0.5!(QC) + (-0.4,\shift)$) {$id \otimes H_2$};
    \node at ($(CQ)!0.5!(CC) - (-0.3,\shift)$) {$id \otimes H_2$};
\end{tikzpicture}
\end{equation*}
where the homological product corresponds to putting the qubits on the middle layer\footnote{The tensor product in our convention would instead put the qubits on $Q_1 \otimes Q_2$}.

The qubits of the homological product are labeled by $(Q_1\otimes C_2) \oplus (C_1\otimes Q_2$), the $X$-stabilizers and $Z$-stabilizers are given respectively as
\begin{align}
    \bm s_{q_1q_2}' = \prod_{c_1\ni q_1} X_{c_1q_2}\prod_{c_2\ni q_2} X_{q_1 c_2}\\
    \bm s_{c_1c_2}' = \prod_{q_1\in c_1} Z_{q_1c_2}\prod_{q_2\in c_2} Z_{c_1 q_2}
    \label{eq:homologicalproduct}
\end{align}
On the other hand, the hypergraph product only differs from the homological product by taking the transpose of the second parity check before taking the tensor product. That is, the total chain complex is
\begin{equation*}
    \begin{tikzpicture}[scale=0.8, every node/.style={scale=1}]
    \def\L{-4}
    \def\R{4}
    \def\H{1}
    \def\HH{2}
    \def\shift{0.3}
    
    \node (QC) [text=red!80!black] at (\L, -\H) {$Q_1\otimes C_2$};
    \node (CC) at (0, -\HH) {$C_1\otimes C_2$};
    \node (QQ) at (0, 0) {$Q_1\otimes Q_2$};
    \node [text=blue!80!black] (CQ) at (\R , -\H) {$C_1 \otimes Q_2$};
    
    \draw[->] (QC) -- (QQ);
    \draw[->] (QC) -- (CC);
    \draw[->] (QQ) -- (CQ);
    \draw[->] (CC) -- (CQ);
    \node at ($(QQ)!0.5!(CQ) + (0.1,\shift)$) {$H_1 \otimes id$};
    \node at ($(QC)!0.5!(CC) - (0.4,\shift)$) {$H_1 \otimes id$};

    \node at ($(QC)!0.5!(QQ) + (-0.5,\shift)$) {$id \otimes H_2^T$};
    \node at ($(CC)!0.5!(CQ) - (-0.3,\shift)$) {$id \otimes H_2^T$};
\end{tikzpicture}
\end{equation*}
The qubits of the hypergraph product are labeled by $Q_1\otimes Q_2$ and $C_1\otimes C_2$, $X$-stabilizers and $Z$-stabilizers are given respectively as
\begin{align*}
    &\bm t_{q_1c_2}' = \prod_{c_1\ni q_1} X_{c_1c_2}\prod_{q_2\in c_2} X_{q_1q_2}\\
    &\bm t_{c_1q_2}' = \prod_{q_1\in c_1} Z_{q_1q_2}\prod_{c_2\ni q_2} Z_{c_1c_2}
\end{align*}

Now we are ready to describe coupled-layer construction of the XYZ product. This is given by a two-step process. In the first step, we simply take the hypergraph product between $\mathcal{C}_1$ and $\mathcal{C}_2$, followed by a particular Clifford deformation in order to match our final result with the original definition in \cite{2022Quant...6..766L}. The Clifford rotation is given by
\begin{align}
  \prod_{q_1 ,q_2} \frac{X_{q_1q_2}+Y_{q_1q_2}}{\sqrt{2}}   \prod_ {q\in Q} \frac{Y_q+Z_q}{\sqrt{2}}
  \label{eq:Cliffordrotation}
\end{align}
Thus the stabilizers of the hypergraph product in this basis is
\begin{align*}
    &\bm t_{q_1c_2} = \prod_{c_1\ni q_1} X_{c_1c_2}\prod_{q_2\in c_2} Y_{q_1q_2}\\
    &\bm t_{c_1q_2} =\prod_{q_1\in c_1} X_{q_1q_2}\prod_{c_2\ni q_2} Y_{c_1c_2}
\end{align*}
We note that since the hypergraph product is a special case of tensor product with classical codes as inputs, it can be obtained with a coupled-layer construction by following the procedure in Sec.~\ref{sec:review_cltp}. This concludes the first step.

In the second step, we perform another coupled layer construction using the above code and  $\mathcal{C}_3$. In order to define the excitation algebra and the algebra-preserving map, it is insightful to define a second ``dual" code which we will map the current code into. In fact, this turns out to be precisely the homological product Eq.~\eqref{eq:homologicalproduct} after the Clifford transformation Eq.~\eqref{eq:Cliffordrotation}
\begin{align*}
    \bm s_{q_1q_2}=\prod_{c_1\ni q_1} X_{c_1q_2}\prod_{c_2\ni q_2} Y_{q_1 c_2}\\
    \bm s_{c_1c_2} = \prod_{q_1\in c_1} X_{q_1c_2}\prod_{q_2\in c_2} Y_{c_1 q_2}
\end{align*}

Taking $\mathcal{C}_3$ to be in the $Z$ basis, the excitations are single Pauli $Z$, that is, $Z_{q_1q_2}$ and $Z_{c_1c_2}$. The action of algebra-preserving map on excitation algebra $\mathcal{A}_z = \langle\, \bm t_{q_1c_2},\, \bm t_{c_1q_2},\, Z_{q_1q_2}, \,Z_{c_1c_2}  \,\rangle$ is given by
\begin{align*}
    \bm{\Gamma}_z(\bm t_{q_1c_2}) &= Z_{q_1c_2},
    &\bm{\Gamma}_z(\bm t_{c_1q_2}) &= Z_{c_1q_2},\\
    \bm{\Gamma}_z(Z_{q_1q_2}) &= \bm s_{q_1q_2},
    &\bm{\Gamma}_z(Z_{c_1c_2}) &=\bm s_{c_1c_2}.
\end{align*}

Having defined $\mathcal{A}_z$ and $\bm\Gamma_z$, we proceed as usual. For each qubit $q_3\in Q_3$, introduce a copy of the hypergraph product $\langle \bm t_{q_1c_2}, \bm t_{c_1q_2} \rangle$. For each check $c_3\in C_3$, introduce a copy of the dual system $\langle Z_{q_1c_2}, Z_{c_1q_2} \rangle$. The total Hilbert space is thus $QQQ\oplus CCQ \oplus QCC \oplus CQC$, which agrees with that of the XYZ product. The stacked stabilizer group is
\begin{align*}
    \mathcal{S}_0 = \langle \bm t_{q_1c_2}^{(q_3)}, \bm t_{c_1q_2}^{(q_3)}, Z_{q_1c_2c_3}, Z_{c_1q_2c_3}  \rangle,
\end{align*}
We code switch by adding the following terms to the stabilizer group is given by (see, e.g., $\bm h(k,\gamma)$ above)
\begin{align*}
    \bm h_{q_1q_2c_3} 
    :=
    \bm h(q_1q_2, c_3) = 
    \bm s^{(c_3)}_{q_1q_2}
    \prod_{q_3\in c_3} Z_{q_1q_2q_3},\\
    \bm h_{c_1c_2c_3} 
    :=
    \bm h(c_1c_2, c_3) = 
    \bm s^{(c_3)}_{c_1c_2} \prod_{q_3\in c_3} Z_{c_1c_2q_3}.
\end{align*}
Keeping only the commuting terms, following the general construction, this includes (see, e.g., $\bm h(t,q)$ above)
\begin{align*}
    \bm h_{q_1c_2q_3} :=
    \bm h(q_1c_2, q_3) = 
    \bm t_{q_1c_2}^{(q_3)} \prod_{c_3\ni q_3} Z_{q_1c_2c_3},\\
    \bm h_{c_1q_2q_3} := 
    \bm h(c_1q_2, q_3) = 
    \bm t_{c_1q_2}^{(q_3)} \prod_{c_3\ni q_3} Z_{c_1q_2c_3}.
\end{align*}
These four terms precisely give the stabilizers in the XYZ product. 
We now give an example construction of the Chamon model using our condensation procedure.

\subsubsection{Example: Chamon Code}
    The Chamon code is the XYZ product constructed out of three repetition codes. We now show that it arises explicitly from a coupled-layer construction. 
    
    We take the qubits and checks of the repetition code to correspond to vertex and edges of a 1D lattice. The first step is to take the hypergraph product of the first two repetition codes. Qubits live on vertex and faces of the square lattice, and the stabilizers are defined on the edges. After the Clifford deformation, this gives the Wen plaquette model, (which can also be called the XYYX surface code)\cite{Wenplaquette03,XZZX}.
\begin{equation}
    \begin{aligned}
        \bm t_{q_1c_2} &= \begin{tikzpicture}[baseline = 0.8cm, scale = 0.8]
        \node at (1,1) {$X$};
        \node at (3, 1) {$X$};
        \node at (2,0) {$Y$};
        \node at (2,2) {$Y$};
        \draw[step = 2] (0,0) grid (4,2);
    \end{tikzpicture},\\
    \bm t_{c_1q_2} &= \begin{tikzpicture}[baseline = 1.6cm, scale = 0.8]
        \node at (1,1) {$Y$};
        \node at (1, 3) {$Y$};
        \node at (0, 2) {$X$};
        \node at (2,2) {$X$};
        \draw[ step = 2] (0,0) grid (2,4);
    \end{tikzpicture}
    \end{aligned}
\end{equation}

For the second step, we chose the excitations to be a single $Z$ on every qubit (vertices and plaquettes). The action of the algebra-preserving map $\bm{\Gamma}_z$ on the TC terms is
\begin{align*}
\begin{tikzpicture}[baseline = 0.8cm, scale = 0.8]
        \node at (1,1) {$X$};
        \node at (3, 1) {$X$};
        \node at (2,0) {$Y$};
        \node at (2,2) {$Y$};
        \draw[step = 2] (0,0) grid (4,2);
    \end{tikzpicture}
    \hspace{5pt}
    &\xmapsto{\bm{\Gamma}_z}
    \hspace{5pt}
    \begin{tikzpicture}[baseline = 0.8cm, scale = 0.8]
        \node at (0,1) {$Z$};
        \draw (0,0) -- (0,2);      
    \end{tikzpicture}
    \\
      \begin{tikzpicture}[baseline = 1.6cm, scale = 0.8]
        \node at (1,1) {$Y$};
        \node at (1, 3) {$Y$};
        \node at (0, 2) {$X$};
        \node at (2,2) {$X$};
        \draw[ step = 2] (0,0) grid (2,4);
    \end{tikzpicture}
    \hspace{5pt}
    &\xmapsto{\bm{\Gamma}_z}
    \hspace{5pt}
      \begin{tikzpicture}[baseline = 0cm, scale = 0.8]
        \node at (1, 0) {$Z$};
        \draw (0,0) -- (2, 0);      
    \end{tikzpicture}
\end{align*}
and its action on the excitations are
\begin{align*}
    \begin{tikzpicture}
    [baseline = 0cm, scale = 0.8]
        \node at (0,0) {$Z$};
        \draw[step = 2] (-1.2,-1.2) grid (1.2,1.2) ;
    \end{tikzpicture}
    \hspace{5pt}
    &\xmapsto{\bm{\Gamma}_z}
    \hspace{5pt}
         \begin{tikzpicture}
    [baseline = 0cm, scale = 0.8]
        \node at (-0.7,0) {$X$};
        \node at (0.7,0) {$X$};
        \node at (0,-0.7) {$Y$};
        \node at (0,0.7) {$Y$};
        \draw[step = 2] (-1.2,-1.2) grid (1.2,1.2) ;
    \end{tikzpicture}
    \\
    \begin{tikzpicture}
    [baseline = 0.8cm, scale = 0.8]
        \node at (1,1) {$Z$};
        \draw[step = 2] (0,0) grid (2,2) ;
    \end{tikzpicture}
    \hspace{5pt}
    &\xmapsto{\bm{\Gamma}_z}
    \hspace{5pt}
        \begin{tikzpicture}
    [baseline = 0.8cm, scale = 0.8]
        \node at (1,0) {$Y$};
        \node at (0,1) {$X$};
        \node at (2,1) {$X$};
        \node at (1,2) {$Y$};
        \draw[step = 2] (0,0) grid (2,2) ;
    \end{tikzpicture}
\end{align*}
That is, the Wen plaquette is mapped to a trivial code $Z$, while each set of excitations created by $Z$ in the Wen plaquette model is mapped to four charges. One can check this indeed preserves the commutation relations. 

Now we stack the Wen plaquette using the third repetition code, which is in Pauli $Z$ basis. For each qubit in the repetition code, we stack a copy of Wen plaquette model in the $xy$ plane. For each check in the repetition code, we stack a layer of $Z$-paramagnet, whose qubits are placed on the $xy$ and $xz$ plaquettes. Together, the qubits live on the vertices and plaquettes of the cubic lattice, which form the FCC lattice. The condensation terms are
\begin{align*}
    \bm h_{q_1q_2c_3}  &= \begin{tikzpicture}[scale = 1, baseline = 0.5cm, scale = 0.8]
    \coordinate (O) at (0,0,0); 
    \coordinate (A) at (2,0,0);
    \coordinate (B) at (-2,0,0);
    \coordinate (C) at (0,0,2);
    \coordinate (D) at (0,0,-2);
    \coordinate (E) at (0,2,0);
    \coordinate (F) at (2,2,0);
    \coordinate (G) at (-2,2,0);
    \coordinate (H) at (0,2,2);
    \coordinate (I) at (0,2,-2);
    \node  at (0, 1, 1) {$Y$};
    \node  at (0, 1, -1) {$Y$};
    \node at (1, 1, 0) {$X$};
    \node at (-1, 1, 0) {$X$};
    \node at (0, 0, 0) {$Z$};
    \node at (0, 2, 0) {$Z$};
    \draw (A) -- (O);
    \draw [dotted] (O) -- (B);
    \draw (O) -- (C);
    \draw [dotted] (O) -- (D);
    \draw (O) -- (E);
    \draw (F) -- (G);
    \draw (H) -- (E) -- (I);
    \draw 
    (A) -- (F)
    (B) -- (G)
    (C) -- (H);
    \draw [dotted] (D) -- (I);
    \draw (0,1.23, -2) -- (0,2,-2)
    (-2,0,0) -- (-0.77,0,0);
    \end{tikzpicture}
    \\
    \bm h_{c_1c_2c_3}  &=  \begin{tikzpicture}[scale = 1, baseline = 0cm, scale = 0.8]
    \coordinate (O) at (0,0,0); 
    \coordinate (A) at (2,0,0);
    \coordinate (B) at (2,2,0);
    \coordinate (C) at (0,2,0);
    \coordinate (D) at (0,0,2);
    \coordinate (E) at (2,0,2);
    \coordinate (F) at (2,2,2);
    \coordinate (G) at (0,2,2);
    \node  at (1, 1, 0) {$Y$};
    \node  at (1, 1, 2) {$Y$};
    \node at (0, 1, 1) {$X$};
    \node at (2, 1, 1) {$X$};
    \node at (1, 0, 1) {$Z$};
    \node at (1, 2, 1) {$Z$};
    \draw [dotted] (O) -- (A);
    \draw (A) -- (B) -- (C); 
    \draw [dotted] (O) -- (C);
    \draw (D) -- (E) -- (F) -- (G) -- cycle; 
    \draw [dotted] (O) -- (D); 
    \draw (A) -- (E);
    \draw (B) -- (F);
    \draw (C) -- (G);
    \end{tikzpicture}
\end{align*}
and the remaining stabilizer terms are
\begin{align*}
   \bm h_{q_1c_2q_3} &= \begin{tikzpicture}[baseline = -0.3cm, scale = 0.8]
        \node  at (0, 0, 0) {$Y$};
        \node  at (0, 0, 2) {$Y$};
        \node at (-1, 0, 1) {$X$};
        \node at (1, 0, 1) {$X$};
        \node at (0, 1, 1) {$Z$};
        \node at (0, -1, 1) {$Z$};
        \draw [dotted]
        (-1, 0, 0) -- (0, 0, 0);
        \draw (-2, 0, 0) -- (-0.77,0,0);
        \draw 
        (0,0,0) -- (2, 0, 0)
        (-2, 0, 2) -- (2, 0, 2);
        \draw (-2,0,0) -- (-2,0,2)
        (2,0,0) -- (2,0,2);
        \draw [dotted] (0,-1, 0) -- (0,0, 0);
        \draw (0,-2, 0) -- (0, -0.77,0);
        \draw
        (0, 2, 0) -- (0,0,0)
        (0,-2, 2) -- (0,2, 2);
        \draw (0,-2,0) -- (0,-2,2)
        (0,2,0) -- (0,2,2);
        \draw
        (0,0,0) -- (0,0,2);
    \end{tikzpicture}\\
    \bm h_{c_1q_2q_3} &=\begin{tikzpicture}[baseline = 0cm, scale = 0.8]
        \node  at (1, 0, -1) {$Y$};
        \node  at (1, 0, 1) {$Y$};
        \node at (0, 0, 0) {$X$};
        \node at (2, 0, 0) {$X$};
        \node at (1, 1, 0) {$Z$};
        \node at (1, -1, 0) {$Z$};
        \draw (0,0,0) -- (2,0,0);
        \draw [dotted](0, -2, 0) -- (0,0,0);
        \draw 
        (0,0,0) -- (0,2,0)
        (2,-2,0) -- (2,2,0);
        \draw (0,-2,0) -- (0, -0.77,0);
        \draw (0,-2,0) -- (2, -2,0) (0,2,0) -- (2, 2,0);
        \draw [dotted] (0, 0, 0) -- (0,0,-2)
        (0,0,-2) -- (2, 0,-2);
        \draw 
        (0,0,0) -- (0,0,2)
        (2,0,-2) -- (2,0,2);
        \draw
        (2,0.77,0) -- (2,0, -2);
        \draw 
        (0,0,2) -- (2, 0,2);
    \end{tikzpicture}
\end{align*}
These are the stabilizers of the Chamon code. 

Note that the code switching also generates diagonal lines of $Z$ in a fixed layer of paramagnet. This non-local operator is a logical of the Chamon code.


\section{Generalized Quotient Code}\label{sec:gqc}

In this section, we discuss a new quotient construction on Pauli stabilizer codes with certain symmetry. This quotient construction will be applied in Sec.~\ref{sec:embp} to define a generalization of the balanced product construction, therefore we first study it here. More specifically, we allow the symmetry to act by a permutation of qubits, accompanied by an onsite unitary. As we will see below, such a quotient (and product in Sec.~\ref{sec:embp}) naturally produces non-CSS codes even if the input code is CSS.

Given a Pauli stabilizer group $\mathcal{S} = \langle \, \bm s \,\rangle$ generated by a set of stabilizers $\{\bm s\}$, and defined on qubits $Q$, which is not necessarily CSS. We write $\bm s = \prod_{q\in s} \bm s_q$, where $\bm s_q$ is the Pauli operator acting on qubit $q$ in $\bm s$. Now, suppose there exists a right $G$-symmetry on the code that has two components. First, $G$ freely permutes the qubits and the supports of stabilizers. That is, $q \mapsto q\cdot  g$. Second, each $g\in G$ applies a transversal single-qubit Clifford unitary $\bar U_g$, which acts by $U_g$ on all qubits $q \in Q$, satisfying $U_{gg'}  =U_g U_{g'}$. We denote the total action by $\mathcal{U}_g$, and explicitly it acts on stabilizers by
\begin{align*}
    \mathcal{U}_g :
        \,\bm s 
        \,\mapsto \,
        \mathcal{U}_g(\bm s)
        \hspace{25pt}
        (\mathcal{U}_g(\bm s))_{q\cdot g} = U_g^\dagger \bm s_q U_g
\end{align*}
We can quotient $\mathcal S$ by $G$ in the following way. The qubits and stabilizers of $\mathcal S/G$ are simply orbits of the qubits and stabilizers in $\mathcal S$. Fixing a set of representatives $\{\tilde q\}$ and $\{\tilde{\bm s}\}$ of $Q/G$ and $\mathcal{S}/G$ respectively, define the stabilizer of quotient code
\begin{align}\label{eq:em_quotient}
    \tilde{\bm s}_G := \prod_{\tilde q} \bm T^{\tilde{\bm s}}_{\tilde q},
    \hspace{35pt}
    \bm T^{\tilde{\bm s}}_{\tilde q} = \prod_{\substack{q\in \tilde{\bm s} \\ q\sim \tilde q}} U_{g_q} \tilde{\bm s}_q U_{g_q}^\dagger.
\end{align}
where $g_q$ is the unique group element relating $q$ and its representative $\tilde q$ by $q =  \tilde q \cdot g_q$. In words, $\tilde{\bm s}_G$ is a Pauli operator acting on qubits $Q/G = \{\tilde q\}$, with $\bm T^{\tilde{\bm s}}_{\tilde q}$ the single-qubit Pauli in $\tilde{\bm s}_G$ acting on qubit $\tilde q$. Concretely, $\bm T^{\tilde{\bm s}}_{\tilde q}$ is a product of all Pauli's $\tilde{\bm s}_q$ with acts on qubits in the orbit $[\tilde q]$ after conjugating by an appropriate $U_g$. Note that this definition of $\tilde{\bm s}_G$ is ambiguous, but only up to a phase factor. For instance, it could happen that $U_{g_q}$ is the Hadamard $\mathfrak{h}$ which switches $X$ and $Z$, so that the terms defining $\bm T_{\tilde q}^{\tilde{\bm s}}$ are not commuting, and we need to choose an ordering. Nonetheless, this does not affect the commutation relations of stabilizers, and one can show that they will always commute:
\begin{lemma}
    Given $\tilde{\bm s}$ and $\tilde{\bm t}$, pairs $S = \{ (q, q') \,|\, q\in \tilde{\bm s}, q'\in \tilde{\bm t},\, q \sim q' \}$ is in bijection with pairs $P = \{(g, q) \,|\, q \in \supp{\tilde{\bm s}} \cap \supp{\tilde{\bm t} \cdot g} \}$
\end{lemma}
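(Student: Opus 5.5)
We will write down an explicit map $\Phi: P \to S$ and its inverse, using only that the right $G$-action on $Q$ is free. Here $q\in\tilde{\bm s}$ means $q\in\supp{\tilde{\bm s}}$. We also use that the support of the translated stabilizer is $\supp{\tilde{\bm t}\cdot g} = \supp{\tilde{\bm t}}\cdot g$, which follows from the definition of $\mathcal{U}_g$: $(\mathcal{U}_g(\bm t))_{q\cdot g} = U_g^\dagger \bm t_q U_g$, and conjugation by the Clifford $U_g$ sends the identity to the identity and non-identity Paulis to non-identity Paulis.

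Given $(g,q)\in P$, we have $q\in\supp{\tilde{\bm s}}$ and $q\in \supp{\tilde{\bm t}}\cdot g$. So there is a qubit $q'\in\supp{\tilde{\bm t}}$ with $q = q'\cdot g$, and it is unique because $q' = q\cdot g^{-1}$. Then $q\sim q'$, and we set $\Phi(g,q) = (q, q\cdot g^{-1})\in S$.

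Conversely, given $(q,q')\in S$, the relation $q\sim q'$ means $q$ and $q'$ lie in the same orbit. Hence there is some $g$ with $q = q'\cdot g$, and freeness makes this $g$ unique. Then $q\in\supp{\tilde{\bm s}}$ and $q = q'\cdot g\in \supp{\tilde{\bm t}}\cdot g$, so $(g,q)\in P$. We set $\Psi(q,q') = (g,q)$.

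It remains to check that the two maps are inverse to each other. For $\Phi\circ\Psi$: $\Psi(q,q') = (g,q)$ with $q' = q\cdot g^{-1}$, so $\Phi(g,q) = (q,q')$. For $\Psi\circ\Phi$: $\Phi(g,q) = (q, q\cdot g^{-1})$, and the unique group element relating $q\cdot g^{-1}$ to $q$ is $g$, so $\Psi$ returns $(g,q)$.

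The only step needing care is the uniqueness of $g$ in $\Psi$. If the action were not free, several $g$ would map to the same pair in $S$ and the correspondence would fail. So freeness of the permutation action, assumed at the start of this section, is exactly the hypothesis that makes the map well defined. Everything else is bookkeeping.

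This bijection sets up the intended application. It pairs each contribution to the commutation sign of $\tilde{\bm s}_G$ and $\tilde{\bm t}_G$ with a local overlap between $\tilde{\bm s}$ and $\mathcal{U}_g(\tilde{\bm t})$. The sign then factorizes as a product over $g$ of the signs $\text{sgn}(\tilde{\bm s}, \mathcal{U}_g(\tilde{\bm t}))$, each of which is trivial because $\mathcal{U}_g(\tilde{\bm t})\in\mathcal{S}$.
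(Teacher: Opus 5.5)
Your proof is correct and is the same as the paper's. You use the same two maps, $(q,q')\mapsto(g,q)$ with $q=q'\cdot g$ and $(g,q)\mapsto(q,q\cdot g^{-1})$, with freeness of the action giving uniqueness of $g$; you also spell out $\supp{\tilde{\bm t}\cdot g}=\supp{\tilde{\bm t}}\cdot g$ and check both compositions, which the paper leaves implicit.
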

\begin{proof}
    Given a pair $(q, q')$ in $S$, there exists a unique $g$ such that $q = q' \cdot g$, hence $q \in \supp{\tilde{\bm s}} \cap \supp{\tilde{\bm t} \cdot g}$, and we map $(q,q') \mapsto (g, q)$. Conversely, given $(g,q)$ in $P$, we have $q\cdot g^{-1} \in \tilde{\bm t}$, so we map $(g, q) \mapsto (q, q\cdot g^{-1})$. These maps are inverses of each other, giving bijections.
\end{proof}
Assume we have fixed an ordering to define $\tilde{\bm s}_G$ and $\tilde{\bm t}_G$, we can now show they commute as follows
\begin{align*}
    \sgn{&\tilde{\bm s}_G, \tilde{\bm t}_G} = \prod_{(q, q') \in S} \sgn{U_{g_q} \tilde{\bm s}_q U_{g_q}^\dagger, U_{g_{q'}} \tilde{\bm t}_{q'} U_{g_{q'}}^\dagger}\\
    &= \prod_{(g, q) \in P} \sgn{U_{g_q} \tilde{\bm s}_q U_{g_q}^\dagger, U_{g_q g^{-1}} \tilde{\bm t}_{q\cdot g^{-1}} U_{g_q g^{-1}}^\dagger}\\
    &= \prod_{(g, q) \in P} \sgn{\tilde{\bm s}_q, U_g^\dagger \tilde{\bm t}_{q\cdot g^{-1}} U_g}\\
    &= \prod_{g\in G}
    \,
    \prod_{q \in \supp{\tilde{\bm s}} \cap \supp{\tilde{\bm t} \cdot g} } \sgn{\tilde{\bm s}_q, U_g^\dagger \tilde{\bm t}_{q\cdot g^{-1}} U_g}\\
    &= \prod_{g\in G}
    \sgn{\tilde{\bm s}, \tilde{\bm t} \cdot g} =1
\end{align*}

We may choose different representatives to define the quotient $\mathcal S/G$, but the outcome is independent of this choice. To see this explicitly, assume we choose a different qubit representative $\tilde q' \sim \tilde q$, related by $\tilde q = \tilde q' \cdot g$. Then
\begin{align*}
    \bm T_{\tilde q'}^{\tilde{\bm s}} = \prod_{\substack{q\in \tilde{\bm s} \\ q\sim \tilde q'}}  U_{gg_q} \tilde{\bm s}_q U_{gg_q}^\dagger = U_g \bm T_{\tilde q}^{\tilde{\bm s}} U_g^\dagger.
\end{align*}
We thus see that choosing a different representative for the orbit $\tilde q$ amounts to applying a single unitary $U_g$ on this qubit. On the other hand, if we choose a different representative for stabilizers $\tilde{\bm s} \sim \tilde{\bm s}' $ related by $\tilde{\bm s} = \tilde{\bm s}' \cdot g$, then
\begin{align*}
    \bm T_{\tilde q}^{\tilde{\bm s}'} &= \prod_{\substack{q\in \tilde{\bm s} \cdot g^{-1} \\ q\sim \tilde q}}  U_{g_q} \tilde{\bm s}_q' U_{g_q}^\dagger 
    \\
    &=
    \prod_{\substack{q\in \tilde{\bm s} \\ q\sim \tilde q}}  U_{g_q g^{-1}} \tilde{\bm s}'_{q\cdot g^{-1}}   U_{g_q g^{-1}}^\dagger
    \\
    &=
    \prod_{\substack{q\in \tilde{\bm s} \\ q\sim \tilde q}}  U_{g_q g^{-1}} U_{g^{-1}}^\dagger \tilde{\bm s}_q U_{g^{-1}}  U_{g_q g^{-1}}^\dagger
    \\
    &=
    \prod_{\substack{q\in \tilde{\bm s} \\ q\sim \tilde q}}  U_{g_q} \tilde{\bm s}_q  U_{g_q}^\dagger = \bm T_{\tilde q}^{\tilde{\bm s}}.\\
\end{align*}
We thus see that a different representative for $\tilde{\bm s}$ has no effect on the definition.

In the rest of the paper, we focus on a special case of the action where $U_g$ is either the Hadamard $\mathfrak{h}$ or identity. We refer to it as an $\ee-\mm$ symmetry, whose quotient is referred to as the $\ee-\mm$ quotient. The same notion is introduced in \cite{Breuckmann2024foldtransversal}, where an $\ee-\mm$ symmetry is referred to as a ZX-duality, and a code with ZX-duality is self-ZX-dual.

We illustrate this with the simple 2D TC, which has $\ee-\mm$ symmetry by exchanging the $\ee$ and $\mm$ anyons. On the lattice, this can be achieved by a $G=\mathbb{Z}$ action, generated by a shift
\begin{align*}
    \begin{tikzpicture}
        [scale = 0.8, baseline=0.7cm]
            \node [circle,fill, inner sep=1.5pt] at (1,0) {};
            \node [circle,fill, inner sep=1.5pt]at (0, 1) {};
            \node [circle,fill, inner sep=1.5pt]at (2, 1) {};
            \node [circle,fill, inner sep=1.5pt]at (1, 2) {};
            \draw[step = 2] (-0.5,-0.5) grid (2.5,2.5);
            \draw[->] (1.1,1.9) -- (1.9,1.1);
            \draw[->] (0.1,0.9) -- (0.9,0.1);
        \end{tikzpicture}
\end{align*}
followed by $\mathfrak{h}$. We perform the quotient by choosing representatives along the $x$-axis, then we obtain (modulo phases)
\begin{align*}
    \text{TC}/\mathbb{Z} = -\sum_i Y_i Y_{i+1}.
\end{align*}

A more interesting example, which we will discuss in detail later is to take the quotient of the $\ee-\mm$ in the 4D toric code. Surprisingly, this gives rise to the 3D fermionic toric code! We will revisit this point in Sec.~\ref{sec:quotient4D}.

\section{$\ee-\mm$ Balanced product}\label{sec:embp}

In this section, we apply the $\ee-\mm$ quotient defined in Sec.~\ref{sec:gqc} to a product construction in an analogous way as the balanced product construction \cite{breuckmann2021balanced}, for which reason we refer to this product as the $\ee-\mm$ balanced product. We will then give a coupled-layer construction to the $\ee-\mm$ balanced product code. Finally, we illustrate the construction using the 3D fermionic TC as an example, as well as constructing a family of non-CSS codes.

Consider a pair of CSS codes CSS$_1$ and CSS$_2$, with a group $G$ acting freely from the right and left respectively, and the onsite unitary part of the action is either trivial or the Hadamard gate $\mathfrak{h}$, defining a right and left $\ee-\mm$ symmetry. The corresponding tensor product code is naturally equipped with an induced $\ee-\mm$ action, the form of which is similar to the usual balanced product
\begin{align*}
    (x\otimes y) \cdot g = x\cdot g \otimes g^{-1} \cdot y,
\end{align*}
where $x\otimes y$ is from one of the spaces $Q_1\otimes Q_2$, $A_1\otimes B_2$ or $B_1\otimes A_2$. The difference in this case is that the action can switch between qubits in $A_1\otimes B_2$ and $B_1\otimes A_2$. We now check that  this action permutes the stabilizers. For example, consider $\bm\alpha(q_1, a_2)$, 
\begin{align*}
    \bm\alpha(q_1, a_2) \cdot g 
    &= U_g^{\dagger} \left(
    \prod_{b_1 \ni q_1} X_{b_1 \cdot g, g^{-1} \cdot a_2} \prod_{q_2\in a_2} X_{q_1\cdot g, g^{-1} \cdot  q_2} \right)
    U_g
    \\
    &=
    \begin{cases}
        \bm \alpha(q_1 \cdot g, g^{-1} \cdot a_2), 
        \hspace{5pt}
        U_g = id,\\
        \bm \beta(q_1 \cdot g, g^{-1} \cdot a_2), 
        \hspace{5pt}
        U_g = \mathfrak{h}.
    \end{cases}
\end{align*}
The other cases are similar. Moreover, due to the boundary map in the product chain complex on $x\otimes y$ always fixes one side, while the group action always acts on both sides simultaneously, the tensor product has the property that qubits checked by a single stabilizer all belong to different orbits. As a result, the quotient of stabilizers is well-defined in this case. We denote this the $\ee-\mm$ balanced product CSS$_1 \otimes_{\ee-\mm}$ CSS$_2$.

\subsection{Coupled-Layer Construction}

The coupled layer construction here generalizes that of the usual balance product construction in our previous work~\cite{zhang2026coupled}.
Choose a set of qubit representatives $\{\tilde q_2\}$ of $G\backslash Q_2$  
and stabilizer representatives $\tilde{ s}$ of $G\backslash A_2\oplus B_2$. The stabilizer representatives are partitioned into two sets, either $X$-types $\{\tilde{ s}_{ a}\} \subseteq A_2$ or $Z$-types $\{\tilde{ s}_{ b}\} \subseteq B_2$. Note that the group action now mixes the $X$-stabilizers and $Z$-stabilizers, given $a_2 \in A_2$, we still denote by $\tilde{ a}_{ 2}$ the representative of $a_2$ which might be a $Z$-stabilizer. Similarly, given $b_2 \in B_2$, its representative $\tilde{ b}_{2}$ might be an $X$-stabilizer. The stabilizer group of the layered system is
\begin{align*}
    \mathcal{S}_0 = 
    \left\langle
    \bm{a_1}^{(\tilde q_2)}, \, \bm{b_1}^{(\tilde q_2)}, \, X_{a_1, \tilde s_b},\, Z_{b_1, \tilde s_a}
    \right\rangle.
\end{align*}
Next, the code switching is performed by imposing
\begin{align*}
    & \bm\alpha(q_1, \tilde s_a) = \prod_{b_1\ni q_1} X_{b_1, \tilde s_a} \prod_{q_2\in \tilde s_a} U^\dagger _{{g_{q_2}}}
    X_{q_1\cdot g_{q_2}, \tilde q_2} U_{{g_{q_2}}}\\
    & \bm \beta(q_1, \tilde s_b) = \prod_{a_1\ni q_1} Z_{a_1, \tilde s_b} \prod_{q_2\in \tilde s_b} U_{{g_{q_2}}}^\dagger Z_{q_1\cdot g_{q_2}, \tilde q_2} U_{{g_{q_2}}}
\end{align*}
After the code switching, the commuting terms are
\begin{align*}
    & \bm\xi(a_1, \tilde q_2) = \bm{a_1}^{(\tilde q_2)} \prod_{b_2\ni \tilde q_2} U_{{g_{b_2}}}^\dagger X_{a_1 \cdot g_{b_2} , \tilde b_2} U_{{g_{b_2}}}, \\
    & \bm\zeta(b_1, \tilde q_2) = \bm{b_1}^{(\tilde q_2)} \prod_{a_2\ni \tilde q_2} U_{{g_{a_2}}}^\dagger Z_{b_1 \cdot g_{a_2} , \tilde a_2} U_{{g_{a_2}}}.
\end{align*}
It is not obvious that the terms in the products are in $\mathcal{S}_0$ in this form. First consider the $\bm \xi(a_1, \tilde q_2)$, given $b_2 \ni \tilde q_2$, if $b_{2}$ is represented by a stabilizer $\tilde{s}_{b}$, that is $\tilde b_2 \in B_2$, then $U_{g_{b_2}} = id$ and $a_1\cdot g_{b_2} \in A_1$,  so we have $X_{a_1\cdot g_{b_2}, \tilde s_b} \in \mathcal{S}_0$; otherwise $ b_{ 2}$ is represented by a $\tilde s_a$ in $A_2$, in which case $U_{g_{b_2}} = \mathfrak{h}$ and $a_1\cdot g_{b_2} \in B_1$, hence we have $Z_{a_1\cdot g_{b_2}, \tilde s_a} \in \mathcal{S}_0$. Therefore, this shows $\bm \xi(a_1, \tilde q_2)$ is in $\mathcal{S}_0$. A similar argument holds for $\bm \zeta(b_1, \tilde q_2)$. This shows the final stabilizer contains
\begin{align*}
    \mathcal{S} \supseteq 
    \left\langle\,
    \bm\alpha(q_1, \tilde s_a),
    \bm\beta(q_1, \tilde s_b),
    \bm\xi(a_1, \tilde q_2),
    \bm\zeta(b_1, \tilde q_2)
    \,\right\rangle.
\end{align*}
One can see that this is CSS$_1\otimes_{\ee-\mm}$ CSS$_2$, with qubit representatives chosen to be $q_1 \otimes \tilde q_2$, $a_1 \otimes \tilde s_b$ and $b_1 \otimes \tilde s_a$, and stabilizer representatives chosen to be $q_1\otimes \tilde s_a$, $q_1\otimes \tilde s_b$, $a_1\otimes \tilde q_2$ and $b_1\otimes \tilde q_2$.

As shown in Sec.\ref{sec:gqc}, the result does not depend on the choice of representatives. If we choose a different representative for qubit $\tilde q_2' \sim \tilde q_2$ related by $\tilde q_2 = g \cdot \tilde q_2'$, we only need to relabel qubits $q_1\otimes \tilde q_2$ by $q_1 \cdot g\otimes \tilde q_2'$, and apply $U_g$ transversally in the layer of CSS$_1$ labeled by $\tilde q_2'$. Similarly for stabilizers, without loss of generality, assume we replace $\tilde{ s}_{ a}$ by another representative $\tilde{s}'$, related by $\tilde{ s}_{ a} = g \cdot \tilde{ s}'$. If $U_g = id$ so that $\tilde{ s}'$ is also an $X$-stabilizer, then we only need to relabel qubits $b_1\otimes \tilde s_a$ by $b_1 \cdot g \otimes \tilde s'$. Otherwise $U_g = \mathfrak{h}$, and $\tilde{ s}'$ is a $Z$-stabilizer, so we need to relabel qubits $b_1\otimes \tilde s_a$ by $b_1 \cdot g \otimes \tilde s'$, followed by a transversal $\mathfrak{h}$ in this layer of ancilla labeled by $\tilde s'$. Moreover, it is not obvious that the coupled-layer construction is symmetric upon exchanging the two CSS codes. However, it is shown from the last section that the result is symmetric, hence we can equivalently couple layers of CSS$_2$ using the stabilizers of CSS$_1$.

\subsection{Example: 3D Fermionic TC}
\label{sec:embp_3dftc}

In this section, we illustrate the above abstraction by taking both CSS$_1$ and CSS$_2$ to be the 2D TC. Their tensor product is the 4D TC, which admits an $\ee-\mm$ duality symmetry. Interestingly, performing the $\ee-\mm$ balancing produces the 3D fermionic TC.

\subsubsection{Coupled-Layer Construction}\label{sec:embp_3dftc_clc}

It turns out that in order to reproduce exactly the stabilizers for the fermionic TC, we will need to express the two input codes in a particular way on the lattice. We express CSS$_1$ as the usual 2D TC on the square lattice with coordinates $(x,y)$, with stabilizer group
\begin{align*}
    \text{TC} = \langle\,
    \bm a_v, \,\bm b_p
    \,|\,
    v\in V,\, p\in P
    \,\rangle
\end{align*}
with qubits on edges, $X$-stabilizers on vertices, and $Z$-stabilizers on plaquettes. On the other hand, we will write CSS$_2$ as a rotated and slanted toric code. More specifically, the qubits are vertices on the square lattice with coordinates $(w,z)$, and the stabilizer group is
\begin{align*}
    \text{CSS}_2 =  \langle\,
    \bm s_{(w,z)} 
    \,|\,
    w,\,z\in \mathbb{Z} 
    \,\rangle
\end{align*}
\begin{equation*}
\bm s_{(w,z)} = \begin{cases}
    \begin{tikzpicture}
    [scale = 0.8, baseline=0.7cm]
        \node at (-1, 0) {$z$};
        \node at (-1.2, 2) {$z+1$};
        \node at (0, 0) {$Z$};
        \node at (2, 0) {$Z$};
        \node at (2, 2) {$Z$};
        \node at (4, 2) {$Z$};

        \node at (2,2.7) {$w$};
        
        \draw[step = 2] (-0.5,-0.5) grid (4.5,2.5);
    \end{tikzpicture} 
    \hspace{15pt}; w \text{ even}\\
    \vspace{0.5pt}
    \hdashrule{0.8\linewidth}{0.5pt}{2pt}
    \vspace{0.5pt}\\
    \begin{tikzpicture}
    [scale = 0.8, baseline=0.7cm]
         \node at (-1, 0) {$z$};
         \node at (-1.2, 2) {$z+1$};
         \node at (0, 0) {$X$};
        \node at (2, 0) {$X$};
        \node at (2, 2) {$X$};
        \node at (4, 2) {$X$};

        \node at (2,2.7) {$w$};
        
        \draw[step = 2] (-0.5,-0.5) grid (4.5,2.5);
    \end{tikzpicture}
    \hspace{15pt}; w \text{ odd}
\end{cases}
\end{equation*}
We take group $\mathbb{Z}$, and define the action of its generator to be consisted of the following:
\vspace{-0.5em}
\begin{itemize}
    \item Clifford gates are chosen to be $U_{\text{even}} = id$, $U_{\text{odd}} = \mathfrak{h}$, where $\mathfrak{h}$ is the Hadamard gate.
    \item Permutation of qubits is chosen to be translation
    \begin{equation*}
        \text{CSS}_1: \,
        \begin{tikzpicture}
        [scale = 0.8, baseline=0.7cm]
            \node [circle,fill, inner sep=1.5pt] at (1,0) {};
            \node [circle,fill, inner sep=1.5pt]at (0, 1) {};
            \node [circle,fill, inner sep=1.5pt]at (2, 1) {};
            \node [circle,fill, inner sep=1.5pt]at (1, 2) {};

            \draw[step = 2] (-0.5,-0.5) grid (2.5,2.5);
    
            \draw[->] (1.1,1.9) -- (1.9,1.1);
            \draw[->] (0.1,0.9) -- (0.9,0.1);
        \end{tikzpicture}
        \hspace{35pt}
        \text{CSS}_2: 
        \begin{tikzpicture}
        [scale = 0.8, baseline=0cm]
            \node [circle,fill, inner sep=1.5pt] at (0,0) {};
            \node [circle,fill, inner sep=1.5pt] at (2,0) {};
            
            \draw[step = 2] (-0.5,-0.5) grid (2.5,0.5);

            \draw[->,  yscale=0.7] (0.1,0.1) arc[start angle=180, end angle=5, radius=0.9cm];

        \end{tikzpicture}
    \end{equation*}
\end{itemize}
\vspace{-0.5em}
which is in fact the $\ee$-$\mm$ duality for both toric codes. The representatives of the quotient CSS$_2/\mathbb{Z}$ are chosen to be:
\vspace{-0.5em}
\begin{itemize}
    \item Qubits in CSS$_2/\mathbb{Z}$ are represented by $(0,z)$. That is, those with coordinates $w=0$
    \item Stabilizers in CSS$_2/\mathbb{Z}$ are represented by $s_{(0,z)}$
\end{itemize}
\vspace{-0.5em}

Next we describe stacking and code switching. For each integer $z$, stack a copy of TC in the $x-y$ plane. Since all the representatives have been chosen to be $Z$ stabilizers in CSS$_2$, we only need to introduce $X$-ancillas, which are defined on pairs $(v,z)$, which we associate to $z$-edges $e_z$. Thus our original stabilizer code lives on edges of the cubic lattice with stabilizers
\begin{align*}
    \mathcal{S}_0 = \langle \bm a_v^{(z)}, \bm b_p^{(z)}, X_{e_z} \rangle.
\end{align*}
The code switching terms are defined for each $x$ or $y$ edge, along with a $z$ edge. Thus they can be associated to $x-z$ and $y-z$ plaquettes of the cubic lattice.
\begin{align*}
    \bm\beta\left(\hspace{5pt}
    \begin{tikzpicture}[scale = 0.8, baseline = -0.4cm]
            \draw (0,0,0) -- (0,0,2);
            \node [fill, circle, black, inner sep = 1pt] at (0,0, 1) {};            
    \end{tikzpicture}
    \hspace{5pt}    
    ,
    \hspace{5pt}
    s_{(0,z)}
    \right)
    \hspace{5pt}
    = 
    \hspace{5pt}
    \begin{tikzpicture}[scale = 0.8, baseline = 0.5cm]
            \coordinate (A) at (0,0,0);
            \coordinate (B) at (0,0,2);
            \coordinate (C) at (0,2,2);
            \coordinate (D) at (0,2,0);
%
            \coordinate (U) at (-2,0,0);
            \coordinate (O) at (2,2,2);
%
%
            %
            \draw (A) -- (B) --(C) -- (D) -- cycle;
            \draw [dotted] (A) -- (U);
            \draw (C) -- (O);
            \foreach \X/\Y in { A/B, B/C, C/D, D/A}
            \node at ($( \X )!0.5!( \Y )$) {$Z$}; 
            \node at (-1.1, 0, 0) {$X$};
            \node at (1.1, 2, 2) {$X$};
    \end{tikzpicture}
\end{align*}
\begin{align*}
    \bm\beta\left(
    \hspace{5pt}
    \begin{tikzpicture}[scale = 0.5, baseline =0 cm]
            \draw (0,0) -- (2,0);
            \node [fill, circle, black, inner sep = 1pt] at (1, 0) {};            
    \end{tikzpicture}
    \hspace{5pt}
    ,
    \hspace{5pt}
s_{(0,z)}    
    \right)
    \hspace{5pt}
    = 
    \hspace{5pt}
    \begin{tikzpicture}[scale = 0.8, baseline = 0.7cm]
%
            \coordinate (A) at (0,0,0);
            \coordinate (B) at (0,2,0);
            \coordinate (C) at (2,2,0);
            \coordinate (D) at (2,0,0);
%
            \coordinate (U) at (0,0,-2);
            \coordinate (O) at (2,2,2);
%
%
  %
            \draw (A) -- (B) --(C) -- (D) -- cycle;
            \draw [dotted] (A) -- (U);
            \draw (C) -- (O);
            \foreach \X/\Y in { A/B, B/C, C/D, D/A}
            \node at ($( \X )!0.5!( \Y )$) {$Z$}; 
            \node at (0, 0, -1.1) {$X$};
            \node at (2, 2, 1.1) {$X$};
    \end{tikzpicture}
\end{align*}
and the remaining commuting terms in $\mathcal{S}_0$ are
\begin{align*}
    \bm\xi\left( a_v, z
    \right)
    \hspace{5pt}
    =
    \hspace{5pt}
    \begin{tikzpicture}[scale = 0.8, baseline = -0.1cm]
    %
        \coordinate (O) at (0,0,0);
        \coordinate (A) at (-1.5,0,0);
        \coordinate (B) at (1.5,0,0);
        \coordinate (C) at (0,-1.5,0);
        \coordinate (D) at (0,1.5,0);
        \coordinate (E) at (0,0,1.5);
        \coordinate (F) at (0,0,-1.5);
        %
        \draw (A) -- (B);
        \draw (C) -- (D); 
        \draw (E) -- (F);
        \foreach \X/\Y in { O/E, O/F}
        \node at ($( \X )!0.5!( \Y )$) {$X$};
        \foreach \X/\Y in { O/C, O/D}
        \node at ($( \X )!0.5!( \Y )$) {$X$};
        \foreach \X/\Y in {O/A, O/B}
        \node at ($( \X )!0.5!( \Y )$) {$X$}; 
    \end{tikzpicture}
\end{align*}
\begin{align*}
    \bm\zeta\left(b_p, z
    \right)
    \hspace{5pt}
    =
    \hspace{5pt}
    \begin{tikzpicture}[scale = 0.8, baseline = -0.4cm]
        \coordinate (A) at (0,0,0);
        \coordinate (B) at (2,0,0);
        \coordinate (C) at (2,0,2);
        \coordinate (D) at (0,0,2);
%
        \coordinate (O) at (2,2,2);
        \coordinate (U) at (0,-2,0);
%
%
        \draw  (A) -- (B) -- (C) -- (D) -- cycle;
        \draw [dotted] (A) -- (U);
        \draw (C) -- (O);
%
 %
        \foreach \X/\Y in { A/B, B/C, C/D, D/A}
        \node[circle, inner sep=1.5pt] at ($( \X )!0.5!( \Y )$) {$Z$};
        \node[circle, inner sep=1.5pt] at (2,1.1,2) {$X$};
        \node[circle, inner sep=1.5pt] at (0,-1.1,0) {$X$};        
    \end{tikzpicture}
\end{align*}
Together, they form the stabilizer group of 3D fermionic TC \cite{2023ScPP...14...65B}. Indeed, the condensation terms can be physically interpreted as the condensation of anyons $\ff_z \ff_{z+1}$ in the stack of 2D toric codes, which indeed produces the 3D fermionic toric code.

\subsubsection{Different Representatives}\label{sec:embp_3dftc_dr}

As shown in Sec.~\ref{sec:gqc}, the resulting code is independent of the representatives. To demonstrate this in this particular example, let us choose another set of representatives of qubits and stabilizers in TC when setting up the coupled-layer construction, and show that we recover the same code, up to a basis transformation.

Let us write the stabilizers of CSS$_2$ as that of the rotated (but not slanted) toric code, and the group action generated by translation to the right by one site, followed by a global Hadamard. The stabilizers are now
\begin{equation*}
\bm s_{(w,z)} = \begin{cases}
    \begin{tikzpicture}[baseline = 0cm]
            \node at (-1.6, -0.8) {$z$};
            \node at (-1.6, 0.8) {$z+1$};
    
            \node at (-0.8,-0.8) {$Z$};
            \node at (-0.8,0.8) {$Z$};
            \node at (0.8,-0.8) {$Z$};
            \node at (0.8,0.8) {$Z$};
            
            \node at (-0.8,1.4) {$w$};
    
            \draw (-0.8,-0.8) -- (0.8,-0.8) -- (0.8,0.8) -- (-0.8,0.8) -- cycle;
        \end{tikzpicture}
        \hspace{15pt}
        ; w+z \text{ even}\\
    \vspace{0.5pt}
    \hdashrule{0.68\linewidth}{0.5pt}{2pt}
    \vspace{0.5pt}\\
        \begin{tikzpicture}[baseline = 0cm]
            \node at (-1.6, -0.8) {$z$};
            \node at (-1.6, 0.8) {$z+1$};
    
            \node at (-0.8,-0.8) {$X$};
            \node at (-0.8,0.8) {$X$};
            \node at (0.8,-0.8) {$X$};
            \node at (0.8,0.8) {$X$};
            
            \node at (-0.8,1.4) {$w$};
    
            \draw (-0.8,-0.8) -- (0.8,-0.8) -- (0.8,0.8) -- (-0.8,0.8) -- cycle;
        \end{tikzpicture} 
        \hspace{15pt}
        ; w+z \text{ odd}
\end{cases}
\end{equation*}
where we label each plaquette by the vertex on the bottom left corner of that plaquette. We say a vertex, or qubit, is of $Z$-type or $X$-type if its corresponding plaquette is a $Z$-stabilizer or $X$-stabilizer, respectively.

Now, we choose the representatives for the quotient.
\vspace{-0.5em}
\begin{itemize}
    \item Qubits: representatives are along the $w=0$-axis, labeled by integers $z$. With this, even integers are $Z$-type and odd integers are $X$-type.
    \vspace{-0.5em}
    \item Stabilizers: representives are $s_{(0,z)}$, which we will label by $z$ edges. Even integers are $Z$-stabilizers and odd integers are $X$-stabilizers.
\end{itemize}
\vspace{-0.5em}
Note that the qubit and stabilizer representatives chosen in Sec.~\ref{sec:embp_3dftc_clc} correspond respectively to the vertices and plaquettes on the diagonal line $z=-w+1$, whereas in this section the representatives live along the $w=0$ axis.

Next, we describe the stacking and code switching. For convenience, we define $z = 2\ell + 1$ for $z$ odd and $z=2\ell$ for $z$ even. Again, for each integer $z$, we introduce a copy of the TC in the $x-y$ plane. For each vertex $v$ and interval $[2l,2l+1]$, we introduce an $X$-ancilla, which lives a $z$-edge of the cubic lattice. For each plaquette $p$ and interval $[2l+1,2l+2]$, we introduce a $Z$-ancilla, which lives a cube of the cubic lattice. We can shift the qubits in cubes in $(1,1,0)$-direction (in 3D coordinates with $x$-axis points out-of-page, $y$-axis points to the right, and $z$-axis points up), so that they also live on a $z$-edge. In this way, all qubits in the stacked system live on edges of the 3D cubic lattice. Together, the TC layers and the ancillas form the stabilizer group $\mathcal{S}_0$.

The code switching terms are
\begin{align*}
    \bm \alpha\left(
    \hspace{5pt}
    \begin{tikzpicture}[scale = 0.5, baseline =-0.1 cm]
            \draw (0,0) -- (2,0);
            \node[fill, circle, black, inner sep = 1pt] at (1, 0) {};     
    \end{tikzpicture}
    \hspace{5pt}
    ,
    \hspace{5pt}
    s_{(0,2\ell+1)}
    \right) 
    \hspace{5pt}
    &= 
    \hspace{5pt}
    \begin{tikzpicture}[scale = 0.8, baseline = 0.5cm]
            \coordinate (A) at (0,0,0);
            \coordinate (B) at (0,0,2);
            \coordinate (C) at (0,2,2);
            \coordinate (D) at (0,2,0);
%
            \coordinate (U) at (-2,0,0);
            \coordinate (O) at (-2,2,0);
%
%
 %
            \draw (A) -- (B) --(C) -- (D) -- cycle;
            \draw [dotted] (A) --  (-0.8,0,0);\draw (-0.78,0,0) -- (U);
            \draw (D) -- (O);
            \foreach \X/\Y in {  B/C,  D/A}
            \node at ($( \X )!0.5!( \Y )$) {$X$};
            \foreach \X/\Y in { A/B,  C/D}
            \node at ($( \X )!0.5!( \Y )$) {$Z$};
            \node at (-1.1, 0, 0) {$X$};
            \node at (-1.1, 2, 0) {$X$};
            \node at (1.3, 0 ,0) {$2\ell+1$};
            \node at (1.3, 2 ,0) {$2\ell+2$};
    \end{tikzpicture}
\end{align*}
\begin{align*}
    \bm \alpha\left(
    \hspace{5pt}
    \begin{tikzpicture}[scale = 0.75, baseline = -0.4cm]
            \draw (0,0,0) -- (0,0,2);
            \node [fill, circle, black, inner sep = 1pt] at (0,0, 1) {};    
    \end{tikzpicture}
    \hspace{5pt},
    \hspace{5pt}
    s_{(0,2\ell+1)}
    \right)
    \hspace{5pt}
    &= 
    \hspace{5pt}
    \begin{tikzpicture}[scale = 0.8, baseline = 0.7cm]
            \coordinate (A) at (0,0,0);
            \coordinate (B) at (0,2,0);
            \coordinate (C) at (2,2,0);
            \coordinate (D) at (2,0,0);
%
            \coordinate (U) at (0,0,-2);
            \coordinate (O) at (0,2,-2);
%
%
 %
            \draw (A) -- (B) --(C) -- (D) -- cycle;
            \draw [dotted](A) -- (U);
            \draw (B) -- (O);
            \foreach \X/\Y in { B/C, D/A}
            \node at ($( \X )!0.5!( \Y )$) {$Z$}; 
            \foreach \X/\Y in { A/B, C/D}
            \node at ($( \X )!0.5!( \Y )$) {$X$}; 
            \node at (0, 0, -1.1) {$X$};
            \node at (0, 2, -1.1) {$X$};
            \node at (3.3, 0 ,0) {$2\ell+1$};
            \node at (3.3, 2 ,0) {$2\ell+2$};
    \end{tikzpicture}
\end{align*}
\begin{align*}
    \bm \beta\left(
    \hspace{5pt}
    \begin{tikzpicture}[scale = 0.5, baseline = -0.1 cm]
            \draw (0,0) -- (2,0);
            \node[fill, circle, black, inner sep = 1pt] at (1, 0) {};            
    \end{tikzpicture}
    \hspace{5pt}
    ,
    \hspace{5pt}
    s_{(0,2\ell)}
    \right)
    \hspace{5pt}
    &=
    \hspace{5pt}
    \begin{tikzpicture}[scale = 0.8, baseline = 0.7cm]     
            \coordinate (A) at (0,0,0);
            \coordinate (B) at (0,2,0);
            \coordinate (C) at (2,2,0);
            \coordinate (D) at (2,0,0);
%
            \coordinate (U) at (2,0,2);
            \coordinate (O) at (2,2,2);
%
%
 %
            \draw (A) -- (B) --(C) -- (D) -- cycle;
            \draw (D) -- (U);
            \draw (C) -- (O);
            \foreach \X/\Y in { A/B, B/C, C/D, D/A}
            \node at ($( \X )!0.5!( \Y )$) {$Z$}; 
            \node at (2, 0, 1.1) {$X$};
            \node at (2, 2, 1.1) {$X$};
            \node at (3, 0 ,0) {$2\ell$};
            \node at (3, 2 ,0) {$2\ell+1$};
    \end{tikzpicture}
\end{align*}
\begin{align*}
    \bm \beta\left(
    \hspace{5pt}
    \begin{tikzpicture}[scale = 0.75, baseline = -0.4cm]
            \draw (0,0,0) -- (0,0,2);
            \node[fill, circle, black, inner sep = 1pt] at (0,0, 1) {};     
    \end{tikzpicture}
    \hspace{5pt},
    \hspace{5pt}s_{(0,2\ell)}
    \right) 
    \hspace{5pt}
    &= 
    \hspace{5pt}
    \begin{tikzpicture}[scale = 0.8, baseline = 0.1cm]
            \coordinate (A) at (0,0,0);
            \coordinate (B) at (0,0,2);
            \coordinate (C) at (0,2,2);
            \coordinate (D) at (0,2,0);
%
            \coordinate (U) at (2,0,2);
            \coordinate (O) at (2,2,2);
%
%
 %
            \draw (B) -- (C) -- (D);
            \draw [dotted](A) -- (B);
            \draw [dotted] (A) -- (0.78,2,2);
            \draw (0.78,2,2)--(D);
            \draw (B) -- (U);
            \draw (C) -- (O);
            \foreach \X/\Y in { A/B, B/C, C/D, D/A}
            \node at ($( \X )!0.5!( \Y )$) {$Z$}; 
            \node at (1.1, 0, 2) {$X$};
            \node at (1.1, 2, 2) {$X$};
            \node at (3, 0 ,2) {$2\ell$};
            \node at (3, 2 ,2) {$2\ell+1$};
    \end{tikzpicture}
\end{align*}
The remaining commuting terms in $\mathcal{S}_0$ are
\begin{align*}
    \bm\xi\left( a_v, 2\ell
    \right)
    \hspace{5pt}
    = 
    \hspace{5pt}
    \begin{tikzpicture}[scale = 0.8, baseline = 0cm]
        \coordinate (O) at (0,0,0);
        \coordinate (A) at (-1.5,0,0);
        \coordinate (B) at (1.5,0,0);
        \coordinate (C) at (0,-1.5,0);
        \coordinate (D) at (0,1.5,0);
        \coordinate (E) at (0,0,1.5);
        \coordinate (F) at (0,0,-1.5);
 %
        \draw (A) -- (B);
        \draw (C) -- (D); 
        \draw (E) -- (F);
        \foreach \X/\Y in { O/E, O/F}
        \node at ($( \X )!0.5!( \Y )$) {$X$};
        \node at ($(O)!0.5!(C)$) {$Z$};
        \node at ($(O)!0.5!(D)$) {$X$};
        \foreach \X/\Y in {O/A, O/B}
        \node at ($( \X )!0.5!( \Y )$) {$X$}; 
        \node at (2,0,0) {$2\ell$};
    \end{tikzpicture}
\end{align*}
\begin{align*}
    \bm\xi\left( a_v, 2\ell+1
    \right)
    \hspace{5pt}
    = 
    \hspace{5pt}
    \begin{tikzpicture}[scale = 0.8, baseline = 0cm]
        \coordinate (O) at (0,0,0);
        \coordinate (A) at (-1.5,0,0);
        \coordinate (B) at (1.5,0,0);
        \coordinate (C) at (0,-1.5,0);
        \coordinate (D) at (0,1.5,0);
        \coordinate (E) at (0,0,1.5);
        \coordinate (F) at (0,0,-1.5);
 %
        \draw (A) -- (B);
        \draw (C) -- (D); 
        \draw (E) -- (F);
        \foreach \X/\Y in { O/E, O/F}
        \node at ($( \X )!0.5!( \Y )$) {$X$};
        \node at ($(O)!0.5!(C)$) {$X$};
        \node at ($(O)!0.5!(D)$) {$Z$};
        \foreach \X/\Y in {O/A, O/B}
        \node at ($( \X )!0.5!( \Y )$) {$X$};
        \node at (2.4,0,0) {$2\ell+1$};
    \end{tikzpicture}
\end{align*}
\vspace{-0.3cm}
\begin{align*}
    \bm\zeta\left(b_p, 2\ell
    \right)
    \hspace{5pt}
    = 
    \hspace{5pt}
    \begin{tikzpicture}[scale = 0.8, baseline = -0.5cm]
        \coordinate (A) at (0,0,0);
        \coordinate (B) at (2,0,0);
        \coordinate (C) at (2,0,2);
        \coordinate (D) at (0,0,2);
%
        \coordinate (O) at (0,1.5,0);
        \coordinate (U) at (2,-1.5,2);
%
%
        \draw  (A) -- (B) -- (C) -- (D) -- cycle;
        \draw (C) -- (U);
        \draw (A) -- (O);
%
 %
        \foreach \X/\Y in { A/B, B/C, C/D, D/A}
        \node[circle, inner sep=1.5pt] at ($( \X )!0.5!( \Y )$) {$Z$};
        \node at (0,0.8,0) {$X$};
        \node at (2,-0.8,2) {$Z$};
        \node at (3,0,1) {$2\ell$};
    \end{tikzpicture}
\end{align*}
\begin{align*}
    \bm\zeta\left(b_p, 2\ell+1
    \right)
    \hspace{5pt}
    = 
    \hspace{5pt}
    \begin{tikzpicture}[scale = 0.8, baseline = -0.5cm]   
        \coordinate (A) at (0,0,0);
        \coordinate (B) at (2,0,0);
        \coordinate (C) at (2,0,2);
        \coordinate (D) at (0,0,2);
%
        \coordinate (O) at (2,2,2);
        \coordinate (U) at (0,-2,0);
%
%
        \draw  (A) -- (B) -- (C) -- (D) -- cycle;
        \draw [dotted] (A) -- (U);
        \draw (C) -- (O);
%
 %
        \foreach \X/\Y in { A/B, B/C, C/D, D/A}
        \node[circle, inner sep=1.5pt] at ($( \X )!0.5!( \Y )$) {$Z$};
        \node at (2,1.1,2) {$Z$};
        \node at (0,-1.1,0) {$X$};
        \node at (3.2,0,1) {$2\ell+1$};
    \end{tikzpicture}
\end{align*}

In order to recover the usual form of the stabilizers in fermionic TC according to the procedure in previous section, we need to act by $\mathbb{Z}$ to shift the representatives on the $z=-w+1$ line to the $w=0$ axis. In the horizontal layer $z$, we shift the qubits $(e,z)$ $z-1$ times to $(e \cdot t^{z-1}, z)$, followed by $\mathfrak{h}^{z-1}$, where $t$ is the generator of $\mathbb{Z}$. On the vertical edges $(v, [2l, 2l+1])$, we shift $2l$ times to $(v\cdot t^{2l}, [2l, 2l+1])$. Note $v \cdot t$ is a plaquette, so an even number of actions take vertices to vertices. Similarly, on vertical edges $(v, [2l + 1, 2l+2])$, we need to shift $2l + 1$ times according to the general procedure in Sec.~\ref{sec:gqc}. However, recall we have already shifted once when we move the qubits in the cubes $(p, [2l + 1, 2l+2])$ to edges, hence we only need to shift $2l$ more times to $(v\cdot t^{2l}, [2l +1, 2l+2])$, followed by $\mathfrak{h}$. It is easy to check

\begin{align*}
    \left.
    \begin{array}{ll}
         & \bm \alpha\left(
        \hspace{5pt}
        \begin{tikzpicture}[scale = 0.5, baseline =-0.1 cm]
            \draw (0,0) -- (2,0);
            \node[fill, circle, black, inner sep = 1pt] at (1, 0) {};     
        \end{tikzpicture}
        \hspace{5pt}
        ,
        \hspace{5pt}
        s_{(0,2\ell+1)}
        \right) 
        \\[0.5cm]
        & \bm \beta\left(
        \hspace{5pt}
        \begin{tikzpicture}[scale = 0.75, baseline = -0.4cm]
                \draw (0,0,0) -- (0,0,2);
                \node[fill, circle, black, inner sep = 1pt] at (0,0, 1) {};     
        \end{tikzpicture}
        \hspace{5pt},
        \hspace{5pt}s_{(0,2\ell)}
        \right) 
    \end{array}
    \right\}
    \hspace{5pt}
    \mapsto
    \begin{tikzpicture}[scale = 0.8, baseline = 0.5cm]
            %
            \coordinate (A) at (0,0,0);
            \coordinate (B) at (0,0,2);
            \coordinate (C) at (0,2,2);
            \coordinate (D) at (0,2,0);
%
            \coordinate (U) at (-2,0,0);
            \coordinate (O) at (2,2,2);
%
%
            %
            \draw (A) -- (B) --(C) -- (D) -- cycle;
            \draw [dotted] (A) -- (U);
            \draw (C) -- (O);
            \foreach \X/\Y in { A/B, B/C, C/D, D/A}
            \node at ($( \X )!0.5!( \Y )$) {$Z$}; 
            \node at (-1.1, 0, 0) {$X$};
            \node at (1.1, 2, 2) {$X$};
    \end{tikzpicture}
\end{align*}

\begin{align*}
    \left. 
    \begin{array}{ll}
         &  \bm \alpha\left(
        \hspace{5pt}
        \begin{tikzpicture}[scale = 0.75, baseline = -0.4cm]
                \draw (0,0,0) -- (0,0,2);
                \node [fill, circle, black, inner sep = 1pt] at (0,0, 1) {};    
        \end{tikzpicture}
        \hspace{5pt},
        \hspace{5pt}
        s_{(0,2\ell+1)}
        \right) 
        \\[0.5cm]
        & \bm \beta\left(
        \hspace{5pt}
        \begin{tikzpicture}[scale = 0.5, baseline = -0.1 cm]
                \draw (0,0) -- (2,0);
                \node[fill, circle, black, inner sep = 1pt] at (1, 0) {};            
        \end{tikzpicture}
        \hspace{5pt}
        ,
        \hspace{5pt}
        s_{(0,2\ell)}
        \right)
    \end{array}
    \right\}
    \hspace{5pt}
    \mapsto
    \begin{tikzpicture}[scale = 0.8, baseline = 0.8cm]
%
            \coordinate (A) at (0,0,0);
            \coordinate (B) at (0,2,0);
            \coordinate (C) at (2,2,0);
            \coordinate (D) at (2,0,0);
%
            \coordinate (U) at (0,0,-2);
            \coordinate (O) at (2,2,2);
%
%
  %
            \draw (A) -- (B) --(C) -- (D) -- cycle;
            \draw [dotted] (A) -- (U);
            \draw (C) -- (O);
            \foreach \X/\Y in { A/B, B/C, C/D, D/A}
            \node at ($( \X )!0.5!( \Y )$) {$Z$}; 
            \node at (0, 0, -1.1) {$X$};
            \node at (2, 2, 1.1) {$X$};
    \end{tikzpicture}
\end{align*}

\begin{align*}
    \left. 
    \begin{array}{ll}
         &  \bm\xi(a_v, 2l)
        \\[0.5cm]
        & \bm\zeta(b_p, 2l+1)
    \end{array}
    \right\}
    \hspace{5pt}
    \mapsto
    \hspace{5pt}
    \begin{tikzpicture}[scale = 0.8, baseline = -0.4cm]
    %
        \coordinate (A) at (0,0,0);
        \coordinate (B) at (2,0,0);
        \coordinate (C) at (2,0,2);
        \coordinate (D) at (0,0,2);
%
        \coordinate (O) at (2,2,2);
        \coordinate (U) at (0,-2,0);
%
%
        \draw  (A) -- (B) -- (C) -- (D) -- cycle;
        \draw [dotted] (A) -- (U);
        \draw (C) -- (O);
%
 %
        \foreach \X/\Y in { A/B, B/C, C/D, D/A}
        \node[circle, inner sep=1.5pt] at ($( \X )!0.5!( \Y )$) {$Z$};
        \node[circle, inner sep=1.5pt] at (2,1.1,2) {$X$};
        \node[circle, inner sep=1.5pt] at (0,-1.1,0) {$X$};        
    \end{tikzpicture}
\end{align*}

\begin{align*}
    \left. 
    \begin{array}{ll}
         &  \bm\xi(a_v, 2l+1)
        \\[0.5cm]
        & \bm \zeta(b_p, 2l)
    \end{array}
    \right\}
    \hspace{5pt}
    \mapsto
    \hspace{5pt}
    \begin{tikzpicture}[scale = 0.8, baseline = -0.1cm]
    %
        \coordinate (O) at (0,0,0);
        \coordinate (A) at (-1.5,0,0);
        \coordinate (B) at (1.5,0,0);
        \coordinate (C) at (0,-1.5,0);
        \coordinate (D) at (0,1.5,0);
        \coordinate (E) at (0,0,1.5);
        \coordinate (F) at (0,0,-1.5);
        %
        \draw (A) -- (B);
        \draw (C) -- (D); 
        \draw (E) -- (F);
        \foreach \X/\Y in { O/E, O/F}
        \node at ($( \X )!0.5!( \Y )$) {$X$};
        \foreach \X/\Y in { O/C, O/D}
        \node at ($( \X )!0.5!( \Y )$) {$X$};
        \foreach \X/\Y in {O/A, O/B}
        \node at ($( \X )!0.5!( \Y )$) {$X$}; 
    \end{tikzpicture}
\end{align*}

Indeed, we recover the stabilizers of fermionic TC.

\subsubsection{Quotient 4D TC}
\label{sec:quotient4D}

The usual tensor product of two 2D TC is the 4D loop-only TC, which has an $\ee-\mm$ duality inherited from the $\ee-\mm$ duality in 2D TC. To see this, recall the qubits of 4D TC live on $2$-cells, which are labeled by $(e,e')$, $(v,p')$ and $(p, v')$, where $v$, $e$, $p$ are the vertices, edges and plaquettes in 2D, the un-primed and primed labels correspond to the first and second copy respectively. The $X$-stabilizers live on $1$-cells, labeled by $\bm a(v,e')$ and $\bm a(e, v')$. The $Z$-stabilizers live on $3$-cells, labeled by $\bm b(p,e')$ and $\bm b(e, p')$. The generator of the $\mathbb{Z}$ permutes the qubits by shifts $(x,y') \mapsto (x \cdot t, \, t^{-1} \cdot y')$, where $(x,y')$ is one of $(e,e')$, $(v,p')$ or $(p, v')$, followed by a global $\mathfrak{h}$. To see the action on stabilizers, consider $\bm a(v,e')$ as an example
\begin{align*}
    \bm a(v,e') \cdot t = \prod_{e\in v} Z_{e\cdot t, \,t^{-1} \cdot e'} \prod_{p'\ni e'} Z_{v\cdot t, \,t^{-1} \cdot p'} = \bm b(v\cdot t, \, t^{-1}\cdot e').
\end{align*}
Similarly for the other stabilizers. Thus indeed the action switches the $X$-stabilizers and $Z$-stabilizers in 4D TC. Physically this exchanges the $\ee$ and $\mm$ loop excitations.

If we further quotient the 4D TC  by this $\ee-\mm$ duality, we obtain the 3D fermionic TC. Even though the explicit form of stabilizers depends on the representatives, the phase does not. If we choose to move all group actions onto the second copy of TC, then this corresponds to the coupled-layer construction.


\subsection{3-fermion Walker-Wang and a family of codes}

\begin{figure*}[t]
    \begin{tikzpicture}
    [scale = 0.6, baseline=-3.5cm]
        \node at (-2.5, 8) {\textbf{(a)}};

        \node at (2,7.2) {$\underline{\,\bm \beta(v_1,e_2)\,}$};

        \node at (2, 0) {$Z$};
        \node at (3, 3) {$Z$};
        
        \node at (1, 3) {$Z$};
        \node at (2, 6) {$Z$};
        \node at (2, 4) {$Z$};
        \node at (2, 2) {$Z$};

        \node at (0 , 5.25) {$\vdots$};
        \node at (2 , 5.25) {$\vdots$};
        \node at (4 , 5.25) {$\vdots$};

        \node at (0 , 1.25) {$\vdots$};
        \node at (2 , 1.25) {$\vdots$};
        \node at (4 , 1.25) {$\vdots$};

        \draw[step = 2] (-0.5,-0.5) grid (4.5,0.5);
        \draw[step = 2] (-0.5,1.5) grid (4.5,4.5);
        \draw[step = 2] (-0.5,5.5) grid (4.5,6.5);

        \node at (-1, 0) {$z$};
        \node at (-1.5, 6) {$z+n$};
    \end{tikzpicture}
    \hspace{20pt}
    \begin{tikzpicture}
    [scale = 0.6, baseline=0.1cm]

        \node at (1,13.2) {$\underline{\,\bm \xi(e_1,v_2)\,}$};
        
        \node at (1, 1) {$X$};
        \node at (1, 5) {$X$};
        \node at (1, 7) {$X$};
        \node at (1, 11) {$X$};
        \node at (0, 6) {$X$};
        \node at (2, 6) {$X$};

        \node at (0 , 9.25) {$\vdots$};
        \node at (2 , 9.25) {$\vdots$};

        \node at (0 , 3.25) {$\vdots$};
        \node at (2 , 3.25) {$\vdots$};

        \draw[step = 2] (-0.5,-0.5) grid (2.5,2.5);
        \draw[step = 2] (-0.5,3.5) grid (2.5,8.5);
        \draw[step = 2] (-0.5,9.5) grid (2.5,12.5);

        \node at (-1, 0) {$z$};
        \node at (-1.5, 10) {$z+n$};
    \end{tikzpicture}
    \hspace{55pt}
    \begin{tikzpicture}
    [scale = 0.6, baseline = -4.1cm]
        \node at (-2.5, 7) {\textbf{(b)}};

        \node at (0 , 3.25) {$\vdots$};
        \node at (2 , 3.25) {$\vdots$};

        \draw[step = 2] (-0.5,-0.5) grid (2.5,2.5);
        \draw[step = 2] (-0.5,3.5) grid (2.5,6.5);

        \draw[red] (-0.2,-0.6) -- (2.2,6.6);
        \draw[red] (0.46667,-0.6) -- (2.2,4.6);
        \draw[red] (-0.2,1.4) -- (1.53333,6.6);

        \draw[red] (1.13333, -0.6) -- (2.2, 2.6666);
        \draw[red] (-0.2,3.4) -- (0.866666,6.6);
       
        \node at (-1, 0) {$z$};
        \node at (-1.5, 6) {$z+n$};
    \end{tikzpicture}
    \caption{(a) Tensor product of Ising$(2;X)$ with Ising$(n+1,Z)$ for $n+1$ even. (b) Lattice deformation.}
    \label{fig:ising_hyper}
\end{figure*}

In this section, we generalize the construction of 3D fermionic TC in Sec.~\ref{sec:embp_3dftc}, and construct a family of codes indexed by positive integers $n \in \mathbb{Z}_+$, using the $\ee-\mm$ balanced product construction. The first member of this family is the 3D fermionic TC, the second is the $3$-fermion Walker-Wang model \cite{walker20123+,Burnell14,2023CMaPh.398..469H}, and so on. We will show, from the coupled-layer perspective, these models generalize the coupled-layer construction in \cite{wang2013boson}.

First we recall, the 2D TC is the tensor product of two repetition codes. 
\begin{align*}
    \text{TC} = \text{Ising}(2;X) \otimes \text{Ising}(2;Z),
\end{align*}
where Ising$(2;X)$ and Ising$(2;Z)$ denotes the 1D Ising chain in $X$-basis and $Z$-basis respectively. That is,
\begin{align*}
    &\text{Ising}(2; X) = \langle X_iX_{i+1} \rangle,\\
    &\text{Ising}(2; Z) = \langle Z_iZ_{i+1}\rangle .
\end{align*}
A straightforward generalization is to simply replace one of the factors by a 1D multi-spin Ising code
\begin{align*}
    &\text{Ising}(n+1; X) =\bigg\langle\, \prod_{j=0}^{n} X_{i+j}\,\bigg\rangle,\\
    &\text{Ising}(n+1; Z) = \bigg\langle\, \prod_{j=0}^{n} Z_{i+j}\,\bigg\rangle .
\end{align*}
For each $n\in \mathbb{Z}_+$, we define the following generalized toric code
\begin{align*}
    \text{TC}^{(n)} := \text{Ising}(2;X) \otimes \text{Ising}(n+1;Z),
\end{align*}
where for $n=1$ we recover TC$^{(1)} = $ TC.

It turns out that all TC$^{(n)}$ possess an $\ee-\mm$ symmetry. This is because there is an isomorphism between bits in checks in the $n$-spin Ising code $\text{Ising}(n+1; Z)$~\cite{Gorantla25}.  We can also see this explicitly by inspecting the stabilizers. As an example, take $n+1$ to be even. In this case, we can take the chain complexes
\begin{align*}
& \begin{tikzpicture}[scale=0.8, every node/.style={scale=1}]
    \node at (0,-0.05) {$\text{Ising}(2; X):$};
    \node [text=red!80!black] at (1.5, 0) {$E_1$};
    \node at (3.5, 0) {$V_1$};
    \draw[->] (1.8, 0) -- (3.2,0);
\end{tikzpicture}\\
& \begin{tikzpicture}[scale=0.8, every node/.style={scale=1}]
    \node at (0,-0.05) {$\text{Ising}(n+1; Z):$};
    \node at (2.0, 0) {$V_2$};
    \node [text=blue!80!black] at (4, 0) {$E_2$};
    \draw[->] (2.3, 0) -- (3.7,0);
\end{tikzpicture}
\end{align*}
Thus the qubits in CSS$^{(n)}$ can be placed on vertices and plaquettes. The case where $n+1$ is odd is identical except qubits are placed on vertices and horizontal edges. The $X$-stabilizers $\bm \xi(e_1, v_2)$ and $Z$-stabilizers $\bm \beta(v_1, e_2)$ are shown in Figure~\ref{fig:ising_hyper}(a). Clearly, there is an $\ee-\mm$ symmetry generated by the translation
\begin{equation*}
    \begin{tikzpicture}
        [scale = 0.8, baseline=0.7cm]
            \node [circle,fill, inner sep=1.5pt] at (0,0) {};
            \node [circle,fill, inner sep=1.5pt]at (1, 1) {};
            \node [circle,fill, inner sep=1.5pt]at (2, 2) {};

            \draw[step = 2] (-0.5,-0.5) grid (2.5,2.5);
    
            \draw[->] (0.1,0.1) -- (0.9,0.9);
            \draw[->] (1.1,1.1) -- (1.9,1.9);
        \end{tikzpicture}
\end{equation*}
followed by a Hadamard. This $\ee-\mm$ symmetry reduces to the usual $\ee-\mm$ duality in TC for $n=1$. We can now perform the $\ee-\mm$ balanced product using these codes as inputs. To be concrete, for simplicity we will discuss the case where CSS$_1$ is chosen to be the 2D TC, with the translational part of the $\ee-\mm$ symmetry the same as before, given by
\begin{equation*}
    \begin{tikzpicture}
        [scale = 0.8, baseline=0.7cm]
            \node [circle,fill, inner sep=1.5pt] at (1,0) {};
            \node [circle,fill, inner sep=1.5pt]at (0, 1) {};
            \node [circle,fill, inner sep=1.5pt]at (2, 1) {};
            \node [circle,fill, inner sep=1.5pt]at (1, 2) {};

            \draw[step = 2] (-0.5,-0.5) grid (2.5,2.5);
    
            \draw[->] (1.1,1.9) -- (1.9,1.1);
            \draw[->] (0.1,0.9) -- (0.9,0.1);
        \end{tikzpicture}
\end{equation*}
and we choose CSS$_2$ to be TC$^{(n)}$ with the $\ee-\mm$ symmetry discussed above. Therefore, the coupled-layer construction of TC $\otimes_{\ee-\mm}$ TC$^{(n)}$ is given by stacking layers of TC and condensing bound states of anyons across several layers, and we expect some topological phase as an outcome.

As seen in Sec.~\ref{sec:embp_3dftc_dr}, even though the outcome does not depend on the representatives of the quotient, the specific form of the stabilizers might not look intuitive if the representatives are not chosen carefully. In order to make the stabilizers similar to that of the fermionic TC, we first perform a couple of lattice deformations, similar to what we did for $n=1$. First we choose a new horizontal axis extending in the $(1,n)$ direction, given by the red lines in Figure~\ref{fig:ising_hyper}(b). On the new lattice, the qubits live on vertices and horizontal edges, which correspond to vertices and plaquettes of the original lattice. By further doubling the grid points along the horizontal axis, we can put all qubits on vertices, and the stabilizers on the new lattice become
\begin{equation*}
    \begin{tikzpicture}
    [scale = 0.6, baseline=2cm]
        \node at (2, 9.5) {$\underline{\,\bm \beta(v_1, e_2) \,}$};
        
        \node at (2, 0) {$Z$};
        \node at (4, 0) {$Z$};
        
        \node at (0, 8) {$Z$};
        \node at (2, 8) {$Z$};
        \node at (2, 6) {$Z$};
        \node at (2, 4) {$Z$};
        \node at (2, 2) {$Z$};

        \node at (0 , 5.25) {$\vdots$};
        \node at (2 , 5.25) {$\vdots$};
        \node at (4 , 5.25) {$\vdots$};
        
        \node at (2,-0.8) {$2l$};        
        \node at (-1, 0) {$z$};
        \node at (-1.5, 8) {$z+n$};
        
        \draw[step = 2] (-0.5,-0.5) grid (4.5,4.5);
        \draw[step = 2] (-0.5,5.5) grid (4.5,8.5);
    \end{tikzpicture}
    \hspace{30pt}
    \begin{tikzpicture}
    [scale = 0.6, baseline=2cm]
        \node at (2, 9.5) {$\underline{\,\bm \xi(e_1, v_2) \,}$};
        
        \node at (2, 0) {$X$};
        \node at (4, 0) {$X$};
        
        \node at (0, 8) {$X$};
        \node at (2, 8) {$X$};
        \node at (2, 6) {$X$};
        \node at (2, 4) {$X$};
        \node at (2, 2) {$X$};

        \node at (0 , 5.25) {$\vdots$};
        \node at (2 , 5.25) {$\vdots$};
        \node at (4 , 5.25) {$\vdots$};
        
        \node at (2,-0.8) {$2l+1$};        
        
        \draw[step = 2] (-0.5,-0.5) grid (4.5,4.5);
        \draw[step = 2] (-0.5,5.5) grid (4.5,8.5);
    \end{tikzpicture}
\end{equation*}
Note that $\bm \beta(v_1, e_2)$ center on even coordinates $2l$, while $\bm \xi(e_1, v_2)$ center on odd coordinates $2l+1$. Keeping track of the $\ee-\mm$ symmetry, we see the symmetry now is generated by a simple translation to the right, followed by a Hadamard.

Next we perform a second lattice deformation. For each $k\in \mathbb{Z}$, in the $n$ horizontal layers labeled by $z = kn+1, \,\cdots ,\, (k+1)n$, we shift the qubits to the right by $2k+1$ sites. Note for $k<0$, shifting to the right by a negative number of sites means shifting to the left. By doing this, we see the $Z$-stabilizers become
\begin{align*}
    & \bm \beta_{2l, kn-a} = Z_{2l,kn-a} Z_{2l, (k+1)n-a}\\
    &\hspace{65pt} \prod_{i=0}^a Z_{2l-1,kn-a+i}\prod_{i=a+1}^{n}Z_{2l+1, kn-a+i},
\end{align*} 
where we have relabeled the stabilizers $\bm \beta(v_1, e_2)$ by $\bm \beta_{2l, kn-a}$ for $l$, $k\in \mathbb{Z}$ and $0\leq a \leq n-1$.
Note that $\bm \beta_{2l, kn-a}$ written in this form is $n$-site translational invariant along the vertical axis, and two-site translational invariant along the horizontal axis. Pictorially, these stabilizers are shown in Figure~\ref{fig:FnZStab} below. Similarly, the $X$-stabilizers become  
\begin{align*}
    & \bm \xi_{2l+1, kn-a} = X_{2l+1,kn-a} X_{2l+1, (k+1)n-a}\\
    &\hspace{65pt}  \prod_{i=0}^a X_{2l,kn-a+i}\prod_{i=a+1}^{n}X_{2l+2, kn-a+i}
\end{align*}
with $l$, $k\in \mathbb{Z}$ and $0\leq a \leq n-1$. The $n=1$ case reduces to the rotated toric code in Sec.~\ref{sec:embp_3dftc_clc}. In this basis, the translational part of $\ee-\mm$ symmetry is still a shift to the right. Thus the stabilizer group is given by
\begin{align*}
    \text{TC}^{(n)} = 
    \langle \,
    \bm \beta_{2l, z} ,\,
    \bm \xi_{2l+1, z}
    \,|\,
    l,\, z \in \mathbb{Z}
    \, \rangle
\end{align*}

\begin{figure*}
\begin{minipage}{0.3\textwidth}
    \begin{tikzpicture}
    [scale = 0.6, baseline=2cm]
        \node at (0, 0) {$Z$};
        \node at (2, 0) {$Z$};
        
        \node at (2, 8) {$Z$};
        \node at (4, 8) {$Z$};
        \node at (4, 6) {$Z$};
        \node at (4, 4) {$Z$};
        \node at (4, 2) {$Z$};

        \node at (0 , 5.25) {$\vdots$};
        \node at (2 , 5.25) {$\vdots$};
        \node at (4 , 5.25) {$\vdots$};
        
        \node at (2,-0.8) {$2l$};        
        \node at (-1, 0) {$kn$};
        
        \draw[step = 2] (-0.5,-0.5) grid (4.5,4.5);
        \draw[step = 2] (-0.5,5.5) grid (4.5,8.5);
    \end{tikzpicture}
\end{minipage}
\begin{minipage}{0.3\textwidth}
    \begin{tikzpicture}
    [scale = 0.6, baseline=2cm]
        \node at (0, 0) {$Z$};
        \node at (2, 0) {$Z$};
        \node at (0, 2) {$Z$};
        
        \node at (2, 8) {$Z$};
        \node at (4, 8) {$Z$};
        \node at (4, 6) {$Z$};
        \node at (4, 4) {$Z$};

        \node at (0 , 5.25) {$\vdots$};
        \node at (2 , 5.25) {$\vdots$};
        \node at (4 , 5.25) {$\vdots$};
        
        \node at (2,-0.8) {$2l$};
        \node at (-1.5, 0) {$kn-1$};
        
        \draw[step = 2] (-0.5,-0.5) grid (4.5,4.5);
        \draw[step = 2] (-0.5,5.5) grid (4.5,8.5);
    \end{tikzpicture}
\end{minipage}
\begin{minipage}{0.3\textwidth}
    \begin{tikzpicture}
    [scale = 0.6, baseline=2cm]
        \node at (0, 0) {$Z$};
        \node at (2, 0) {$Z$};
        \node at (0, 2) {$Z$};
        \node at (0, 6) {$Z$};
        \node at (0, 4) {$Z$};
        
        \node at (2, 8) {$Z$};
        \node at (4, 8) {$Z$};

        \node at (0 , 5.25) {$\vdots$};
        \node at (2 , 5.25) {$\vdots$};
        \node at (4 , 5.25) {$\vdots$};
        
        \node at (2,-0.8) {$2l$};
        \node at (-2, 0) {$kn-(n-1)$};
        
        \draw[step = 2] (-0.5,-0.5) grid (4.5,4.5);
        \draw[step = 2] (-0.5,5.5) grid (4.5,8.5);
    \end{tikzpicture}
\end{minipage}
\begin{minipage}{0.95\textwidth}
    \caption{From left to right, $Z$-stabilizers $\bm \beta_{2l,kn}$, $\bm \beta_{2l,kn-1}$, and $\bm \beta_{2l,kn-(n-1)}$ are shown. The $X$-stabilizers $\bm \xi_{2l+1,z}$ are obtained by shifting $\bm \beta_{2l,z}$ to the right by one site, and replace $Z$ by $X$.}
    \label{fig:FnZStab}
\end{minipage}
\end{figure*}

Using this basis, we present the coupled-layer construction TC $\otimes_{\ee - \mm}$ TC$^{(n)}$. First we choose representatives of $\mathbb{Z}\backslash$TC$^{(n)}$,
\vspace{-0.5em}
\begin{itemize}
    \item Qubits: representatives have coordinates $(0,z)$.
    \item Stabilizers: representatives are $\bm \beta_{0,z}$. 
\end{itemize}
\vspace{-0.5em}
We group these qubits into sets of $n$ in the following way. For each $k\in \mathbb{Z}$, define $k[i]:= kn-(i-1)$, with $i = 1, \cdots, n$. We then let the $k$th set contain qubits $\{\,(0, k[i]) \,|\, i = 1, \cdots, n\,\}$. It follows that there are $n$ stabilizer representatives supported on the qubit set $k$ and $k+1$, which are $\{ \, \bm \beta_{0, k[i]} \,|\, i = 1, \cdots, n  \,\}$

The stacking is given as follows. For each $k\in \mathbb{Z}$, we introduce $n$ copies of TC, putting $n$ qubits on each horizontal edge in the 3D cubic lattice. For each pair of vertex and $\bm \beta_{0,z}$, introduce an $X$ ancilla. Since there are $n$ stabilizers support on the qubit set $k$ and $k+1$, we put $n$ qubits on each vertical edge. Therefore the total Hilbert space is spanned by $n$ copies of edges of the 3D cubic lattice. The stabilizer group is
\begin{align*}
    \mathcal{S}_0 = 
    \Bigg\langle\,
    \bm a_v^{(k[i])},\,
    \bm b_p^{(k[i])},\,
     \begin{tikzpicture}[baseline = 0.5cm, scale = 0.6]
        \node at (0,1) {$X_1$};
        \draw (0,0) -- (0,2);      
    \end{tikzpicture}
    ,\,\cdots,\,
    \begin{tikzpicture}[baseline = 0.5cm, scale = 0.6]
        \node at (0,1) {$X_n$};
        \draw (0,0) -- (0,2);      
    \end{tikzpicture}
    \,\Bigg\rangle.
\end{align*}

The code switching terms are given by, for $i = 1, \cdots n$,
\begin{align*}
    \bm \beta\left(
    \hspace{5pt}
    \begin{tikzpicture}[scale = 0.5, baseline =-0.1 cm]
            \draw (0,0) -- (2,0);
            \node[fill, circle, black, inner sep = 1pt] at (1, 0) {};     
    \end{tikzpicture}
    \hspace{5pt}
    ,
    \hspace{5pt}
    \beta_{0,k[i]}
    \right) 
    \hspace{5pt}
    = 
    \hspace{5pt}
    \begin{tikzpicture}[scale = 1, baseline = 1cm]          
            \coordinate (A) at (0,0,0);
            \coordinate (B) at (0,2,0);
            \coordinate (C) at (2,2,0);
            \coordinate (D) at (2,0,0);
%
            \coordinate (U) at (0,0,-2);
            \coordinate (O) at (2,2,2);
%
%
 %
            \draw (A) -- (B) --(C) -- (D) -- cycle;
            \draw [dotted] (A) -- (U);
            \draw (C) -- (O);
            \foreach \X/\Y in { A/B, B/C, C/D, D/A}
            \node at ($( \X )!0.5!( \Y )$) {$Z_i$}; 
            \node at (0, 0, -1.1) {$X_1\cdots X_i$};
            \node at (2, 2, 1.1) {$X_i\cdots X_n$};
    \end{tikzpicture}
\end{align*}
\begin{align*}
    \bm\beta\left(\hspace{5pt}
    \begin{tikzpicture}[scale = 0.8, baseline = -0.4cm]
            \draw (0,0,0) -- (0,0,2);
            \node [fill, circle, black, inner sep = 1pt] at (0,0, 1) {};            
    \end{tikzpicture}
    \hspace{5pt}    
    ,
    \hspace{5pt}
    \beta_{0, k[i]}
    \right)
    \hspace{5pt}
    = 
    \hspace{5pt}
    \begin{tikzpicture}[scale = 1, baseline = 0.5cm]          
            \coordinate (A) at (0,0,0);
            \coordinate (B) at (0,0,2);
            \coordinate (C) at (0,2,2);
            \coordinate (D) at (0,2,0);
%
            \coordinate (U) at (-2,0,0);
            \coordinate (O) at (2,2,2);
%
%
 %
            \draw (A) -- (B) --(C) -- (D) -- cycle;
            \draw [dotted] (A) -- (U);
            \draw (C) -- (O);
            \foreach \X/\Y in { A/B, C/D}
            \node at ($( \X )!0.5!( \Y )$) {$Z_i$}; 
            \node at (0,1.2,2) {$Z_i$};
            \node at (0,0.8,0) {$Z_i$};
            \node at (-1.1, 0, 0) {$X_1\cdots X_i$};
            \node at (1.1, 2, 2) {$X_i\cdots X_n$};
    \end{tikzpicture}
\end{align*}
and the remaining commuting terms in $\mathcal{S}_0$ are
\begin{align*}
    \bm \xi \left( a_v, k[i]
    \right)
    \hspace{5pt}
    = 
    \hspace{5pt}
    \begin{tikzpicture}[scale = 1, baseline = 0cm]
        \coordinate (O) at (0,0,0);
        \coordinate (A) at (-1.5,0,0);
        \coordinate (B) at (1.5,0,0);
        \coordinate (C) at (0,-1.5,0);
        \coordinate (D) at (0,1.5,0);
        \coordinate (E) at (0,0,1.5);
        \coordinate (F) at (0,0,-1.5);
 %
        \draw (A) -- (B);
        \draw (C) -- (D); 
        \draw (E) -- (F);
        \foreach \X/\Y in { O/E, O/F}
        \node at ($( \X )!0.5!( \Y )$) {$X_i$};
        \foreach \X/\Y in { O/C, O/D}
        \node at ($( \X )!0.5!( \Y )$) {$X_i$};
        \foreach \X/\Y in {O/A, O/B}
        \node at ($( \X )!0.5!( \Y )$) {$X_i$}; 
    \end{tikzpicture}
\end{align*}
\begin{align*}
    \bm \zeta\left(b_p, k[i]
    \right)
    \hspace{5pt}
    = 
    \hspace{5pt}
    \begin{tikzpicture}[scale = 1, baseline = -0.5cm]   
        \coordinate (A) at (0,0,0);
        \coordinate (B) at (2,0,0);
        \coordinate (C) at (2,0,2);
        \coordinate (D) at (0,0,2);
%
        \coordinate (O) at (2,2,2);
        \coordinate (U) at (0,-2,0);
%
%
        \draw  (A) -- (B) -- (C) -- (D) -- cycle;
        \draw [dotted] (A) -- (U);
        \draw (C) -- (O);
%
%
        \foreach \X/\Y in { A/B, B/C, C/D, D/A}
        \node[circle, inner sep=1.5pt] at ($( \X )!0.5!( \Y )$) {$Z_i$};
        \node[circle, inner sep=1.5pt] at (2,1.2,2) {$X_i \cdots X_n$};
        \node[circle, inner sep=1.5pt] at (0,-1.2,0) {$X_1 \cdots X_i$};
    \end{tikzpicture}
\end{align*}
As mentioned in the beginning of the section, $n = 1$ gives the 3D fermionic TC, and $n = 2$ gives the 3-fermion Walker-Wang model. Indeed, this coupled-layer construction can be viewed as anyon condensation shown in Figure~\ref{fig:ww_fermion}. For each $k\in \mathbb{Z}$, the bound state in $n+1$ neighboring layers $\mathsf{f}_k \mm_{k+1} \cdots \mm_{k+n-1} \mathsf{f}_{k+n}$ is condensed, where the subscripts label the TC layers \cite{wang2013boson}. After the condensation, the boundary is generated by $n$ fermions $\{f_1, \cdots, f_n\}$, which also have fermionic mutual statistics. The corresponding topological action of the bulk is \cite{chen2023higher}
\begin{align*}
    S_n = \frac{1}{2}\int \sum_{i=1}^n B_i \cup B_i + \sum_{i<j} B_i \cup B_j.
\end{align*}
where $B_i$ are dynamical 2-form gauge fields. We conclude that the resulting code is the Walker-Wang model for the anyon theory $\{f_1, \cdots, f_n\}$
\begin{align*}
 \text{TC} \otimes_{\ee - \mm} 
     \text{TC}^{(n)} =
    \text{WW}(\{f_1, \cdots, f_n\}).
\end{align*}
On a three torus, this code has 3 logicals if $n$ is odd, and zero logicals if $n$ is even.

\begin{figure*}
    \begin{minipage}{0.28\textwidth}
        \includegraphics[width=1\linewidth]{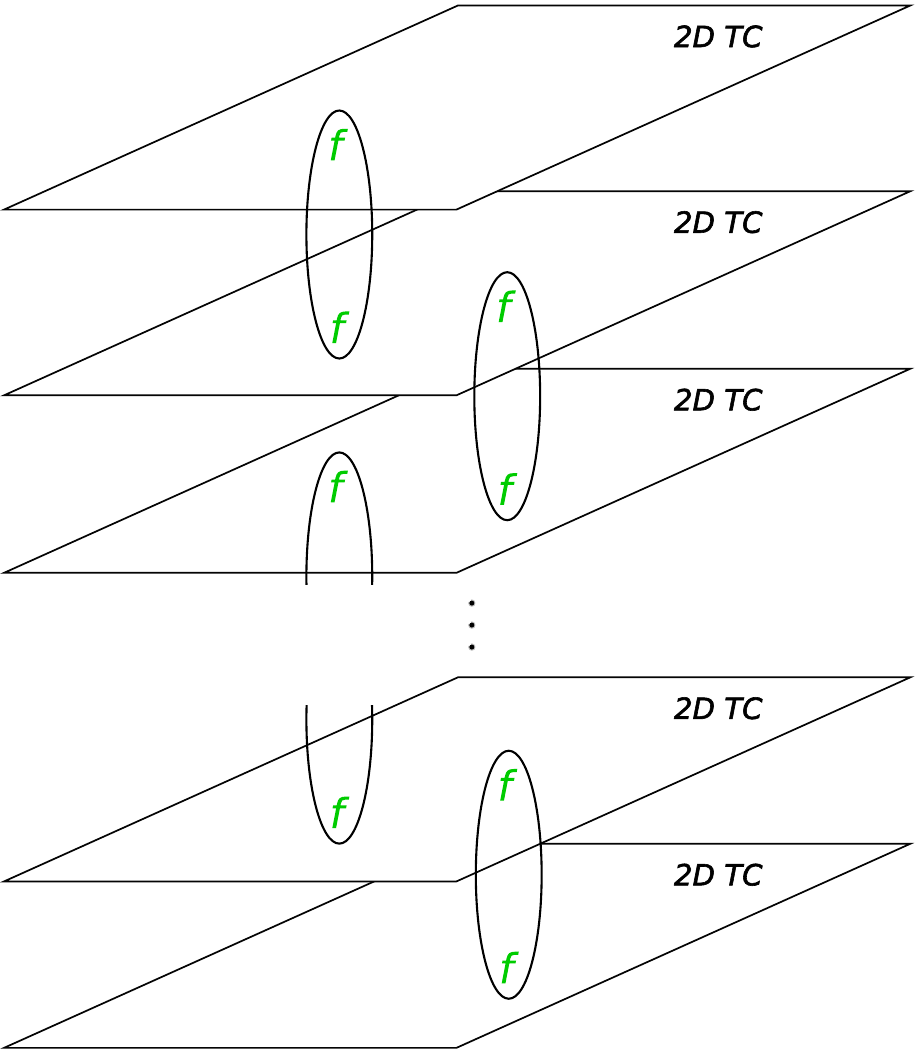}
    \end{minipage}
    \hspace{0.5cm}
    \begin{minipage}{0.28\textwidth}
        \includegraphics[width=1\linewidth]{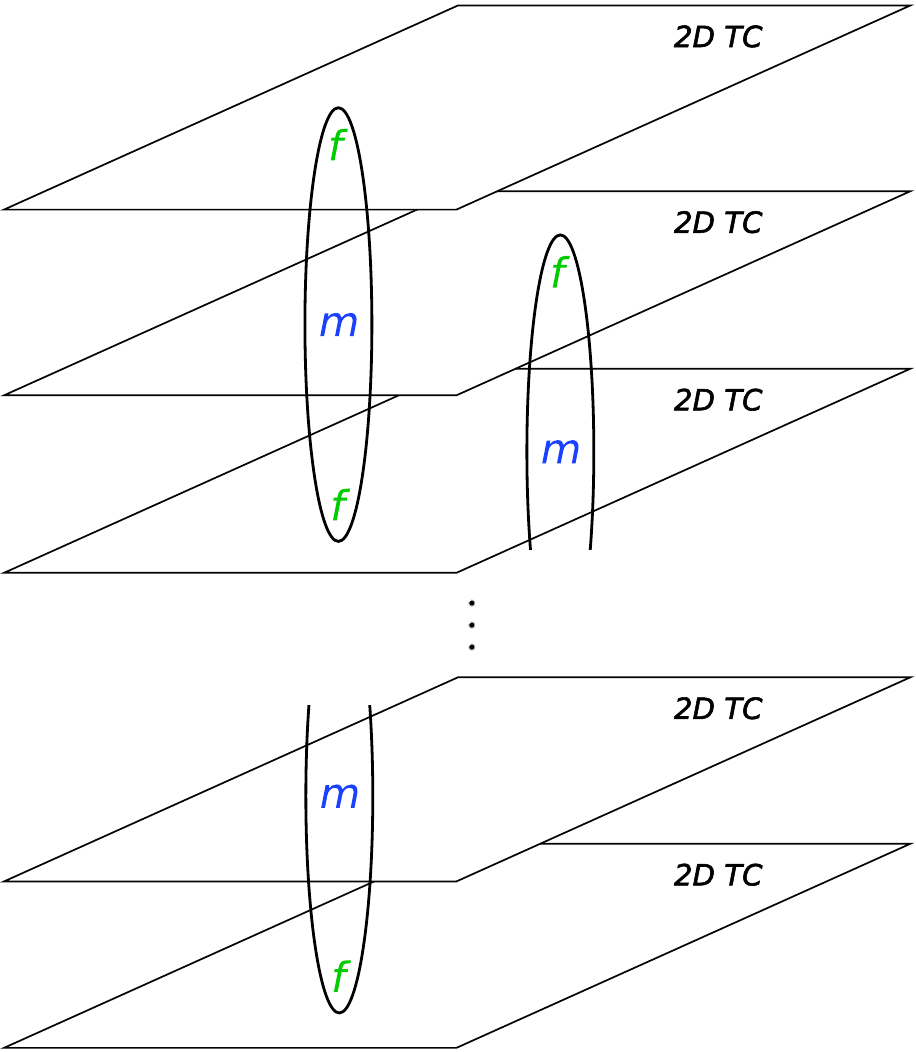}
    \end{minipage}
    \hspace{0.5cm}
    \begin{minipage}{0.28\textwidth}
        \includegraphics[scale=0.32]{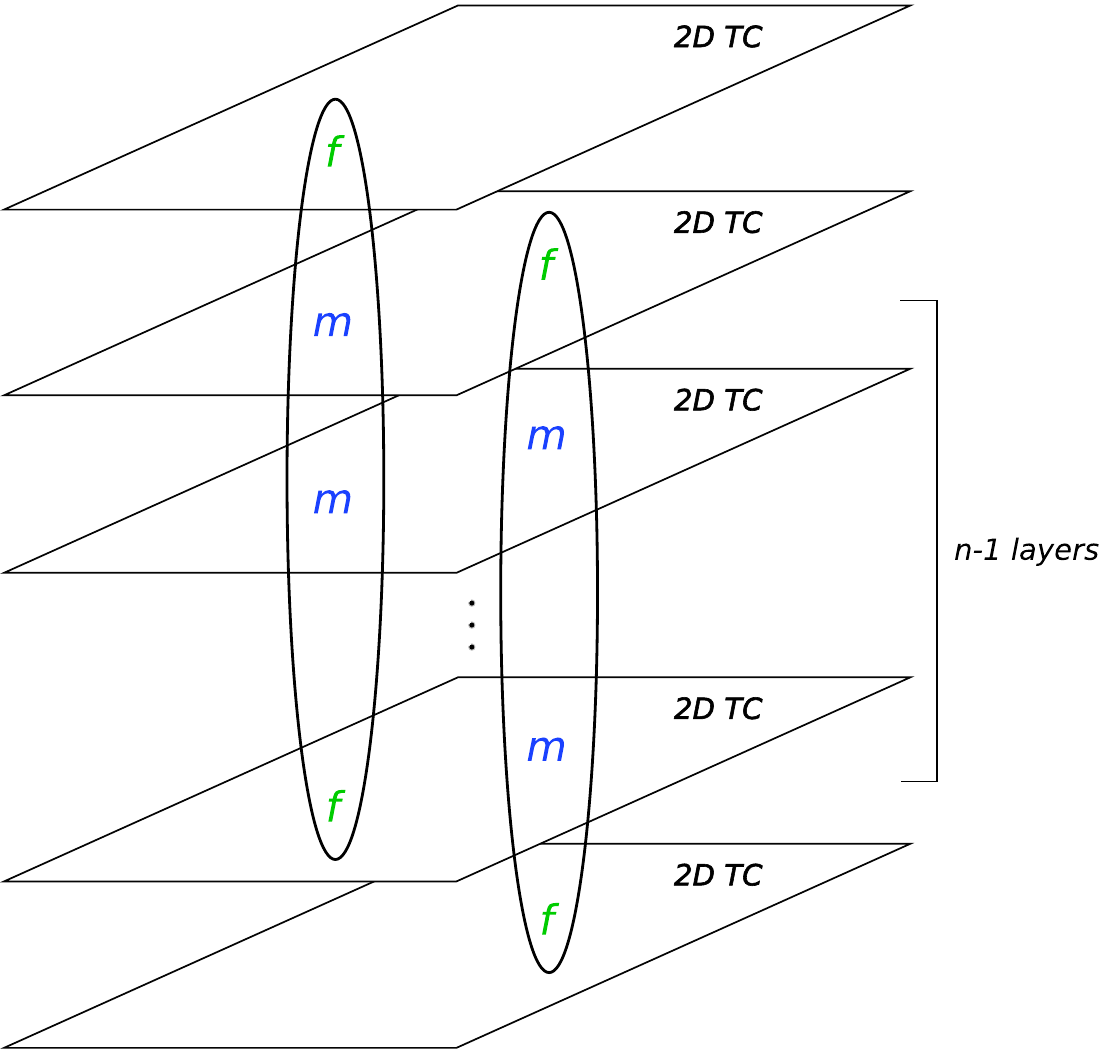}
    \end{minipage}
    \caption{The condensation schemes of the codes TC $\otimes_{\ee-\mm}$ TC$^{(n)}$. From left to right, it gives respectively $n=1$, fermionic TC; $n=2$, 3-fermion Walker-Wang model; and generic $n$.}
    \label{fig:ww_fermion}
\end{figure*}

\section{Balanced Coupled-Layer Codes}\label{sec:bclc}

In this section, we introduce a framework which combines ideas from the coupled-layer codes in Sec.~\ref{sec:clc} and the $\ee-\mm$ balanced product from Sec.~\ref{sec:embp}, which includes the traditional balanced product. The construction can be performed on non-CSS codes, hence cannot be written in terms of the three-term chain complex. As a result, we directly describe the construction from the coupled-layer perspective. 

Start with a pair of Pauli stabilizer codes, $\mathcal{H}$ with stabilizers $\{ \bm h_t \}$ defined on qubits $Q_1$; and $\mathcal{F}$ with stabilizers $\{\bm f_s\}$ defined on qubits $Q_2$, we moreover assume each $\bm f_s$ is of a pure Pauli type similar to Sec.~\ref{sec:clc} and denote the corresponding Pauli type $\tau(\bm f_s)$. We will let $\mathcal{H}$ take the role of CSS$_1$, and will be stacked; and $\mathcal{F}$ take the role of CSS$_2$, which is used as a recipe for coupling the layers together. We assume $\mathcal{H}$ and $\mathcal{F}$ are invariant under a right $G$-action and left $G$-action respectively. Similar to Sec.~\ref{sec:gqc}, we assume that the actions are defined by two components. First, the action freely permutes qubits and supports of stabilizers by $q_1\mapsto q_1\cdot g$ and $q_2 \mapsto g\cdot q_2$. Second, each $g\in G$ applies an onsite Clifford unitary $\bm h_t \mapsto \bar U^\dagger_g \bm h_t \bar U_g$ (resp. $\bm f_s \mapsto \bar U_g \bm f_s \bar U_g^\dagger$), where $\bar U_g$ is the product of single-qubit Clifford unitaries $U_g$ over all qubits, which satisfies $U_gU_h = U_{gh}$. The total action on stabilizers by $g$ is denoted $\mathcal{U}_g$. For instance, given operator $\bm O = \prod_{q_1} \bm O_{q_1}$ on $Q_1$, we have
\begin{align*}
    \mathcal{U}_g(\bm O)_{q_1\cdot g} = U_g^\dagger \bm O_{q_1} U_g.
\end{align*}
Note that we require the unitary part of $G$-action is the same for $\mathcal{H}$ and $\mathcal{F}$ (up to conjugation).

Both $\mathcal{H}$ and $\mathcal{F}$ can be quotient by $G$ as prescribed in Sec.~\ref{sec:gqc}. However, we want to use $\mathcal{F}$ to keep track of the code switching on stacks of $\mathcal{H}$, thus we need to quotient $\mathcal{F}$ and choose representatives of $G\backslash \mathcal{F}$. To this end, we choose qubit representatives $\{\tilde q_2\}$, and stabilizer representatives $\{\tilde{\bm f}\}$ of $G\backslash \mathcal{F}$. Given $q_2$ (resp. $\bm f$), we define $g_{q_2}$ (resp. $g_f$) to be the unique group element relating to its representative, that is $q_2 = g_{q_2} \cdot \tilde{q}_2$ (resp. $\bm f = \mathcal{U}_{g_f}(\tilde{\bm f})$). In addition, for each representative $\tilde{\bm f}$, we need to specify
\vspace{-0.5em}
\begin{itemize}
    \item Excitations: $\mathcal{E}^{(\tilde f)} = \{ e^{(\tilde f)} \}$, where each $\bm e^{(\tilde f)}$ is of Pauli type $\tau(\tilde{\bm f})$. Formally, the excitations can be written in terms of a linear map $\mathcal{E}^{(\tilde f)} \ra Q_1$ similar to the mapping cone. Moreover, we require the supports of $\bm e^{(\tilde f)}$ to be mutually disjoint and freely permuted by $G$.
    \vspace{-0.5em}
    \item Algebra-preserving map: choose map $\bm{\Gamma}^{(\tilde f)}$ defined on the excitation algebra $\mathcal{A}^{(\tilde f)} := \langle \bm h_t, \, \bm e^{(\tilde f)}\rangle$ which takes each generator to a Pauli operator on an auxiliary Hilbert space $Q^{(\tilde f)}$, while preserving commutation relation.
\end{itemize}
\vspace{-0.5em}
The condition on the support of $\bm e^{(\tilde f)}$ essentially says the excitations behave like single Pauli operators under the group action.

Next, we describe the coupled-layer construction. For each qubit representative $\tilde q_2$, introduce a copy of $\mathcal{H}$, with stabilizers $\{\,\bm h_t^{(\tilde q_2)} \,\}$. For each stabilizer representative $\tilde{\bm f}$, introduce the set of stabilizers $\{\, \bm{\Gamma}^{(\tilde f)} (\bm h_t) \,\}$ living on the auxiliary qubits $Q^{(\tilde f)}$. Thus the stabilizer group of the stacked system is
\begin{align*}
    \mathcal{S}_0 = \langle  \,\bm 
    h_t^{(\tilde q_2)}, \bm{\Gamma}^{(\tilde f)}(\bm h_t)  
    \,  \rangle
\end{align*}
defined on the total Hilbert space
\begin{align*}
    \left(Q_1\otimes (G\backslash Q_2)\right) \oplus \left( \bigoplus_{\tilde f} Q^{(\tilde f)}\right).
\end{align*}

Next we perform code switching. For each pair of $\tilde f$ and $e \in \mathcal{E}^{(\tilde f)}$, enforce
\begin{align}
    \bm A(e, \tilde f) := \bm{\Gamma}^{(\tilde f)}(\bm e) \prod_{q_2\in \tilde f}  \mathcal{U}_{g_{q_2}}(\bm e^{(\tilde q_2)}), 
    \label{eq:Atermbalanced}
\end{align}
where $\bm e^{(\tilde q_2)}$ is the excitation $\bm e$ in layer $\tilde q_2$. Note the ordering of terms in the product does not matter. Even if multiple $q_2 \in \tilde f$ belong to the same orbit so that multiple $\mathcal{U}_{g_{q_2}}(\bm e^{(\tilde q_2)})$ live on the same layer, their supports are disjoint by assumption, therefore they always commute. We show these terms pairwise commute in Appendix \ref{app:balancedcommute}, hence the code switching is well-defined.

After the code switching, the commuting terms in $\mathcal{S}_0$ remain, which are
\begin{align*}
    \bm B(h_t, \tilde q_2) := \bm h_t^{(\tilde q_2)} \prod_{f\ni \tilde q_2} \bm{\Gamma}^{(\tilde f)}(\mathcal{U}_{g_f}(\bm h_t)). 
\end{align*}
Again, the ordering of the product does not matter since this term is in the stabilizer group $\mathcal{S}_0$.

\vspace{0.1cm}

We can check $\bm A(e, \tilde f)$ and $\bm B(h_t, \tilde q_2)$ commute explicitly. The two terms have potential non-zero overlap only when there exists $q_2'\in \tilde f$ such that $[q_2'] = [\tilde q_2]$, which is equivalent to the existence of $f' \ni \tilde q_2$ such that $[f'] = [\tilde f]$. In that case the potential overlap is on layer $\tilde q_2$ and $\tilde f$. Collect the set of group elements
\begin{align*}
    C = \{g_{q_2'} 
    \,|\,
    q_2'\in \tilde f,\,
    \supp{\mathcal{U}_{g_{q_2'}}(\bm e)} \cap \supp{\bm h_t} \neq \emptyset\}.
\end{align*}
We write
\begin{align*}
    & \bm A(e, \tilde f) =  \bm{\Gamma}^{(\tilde f)}(\bm e) \prod_{g\in C}  \mathcal{U}_{g}(\bm e^{(\tilde q_2)})  \times [\cdots],\\
    & \bm B(h_t, \tilde q_2) = \bm h_t^{(\tilde q_2)} \prod_{g\in C} \bm{\Gamma}^{(\tilde f)}(\mathcal{U}_{g^{-1}}(\bm h_t))
    \times[\cdots].
\end{align*}
The $[\cdots]$ in $\bm A(e,\tilde f)$ have no overlap with any term in $\bm B(h_t, \tilde q_2)$, hence commutes. On the other hand, given $\bm{\Gamma}^{(\tilde f')}(\mathcal{U}_{g_{f'}}(\bm h_t))$ in $[\cdots]$ in $\bm B(h_t, \tilde q_2)$, if $\tilde f' \neq \tilde f$, then this term has no overlap with $\bm A(e, \tilde f)$; if $\tilde f' = \tilde f$, then we must have $g_{f'}^{-1} \notin C$, therefore
\begin{align*}
    \supp{\mathcal{U}_{g_{f'}^{-1}}(\bm e)} \cap \supp{\bm h_t} = \emptyset.
\end{align*}
Therefore
\begin{align*}
    \bm e \cdot \mathcal{U}_{g_{f'}}(\bm h_t) 
    &= \mathcal{U}_{g_{f'}} (\mathcal{U}_{g_{f'}^{-1}}(\bm e) \cdot \bm h_t)\\
    &=\mathcal{U}_{g_{f'}} (
     \bm h_t \cdot \mathcal{U}_{g_{f'}^{-1}}(\bm e))\\
     & = \mathcal{U}_{g_{f'}}(\bm h_t) \cdot \bm e.
\end{align*}
Since $\bm{\Gamma}^{(\tilde f)}$ preserves commutation relation, we see $\bm{\Gamma}^{(\tilde f)}(\mathcal{U}_{g_{f'}}(\bm h_t))$ commutes with $\bm{\Gamma}^{(\tilde f)}(\bm e)$, hence it commute with all the terms in $\bm A(e, \tilde f)$. This shows the $[\cdots]$ in both terms commute with everything, so we do not need to keep track of them in studying the commutation. Let sgn$(\bm A,\bm B) := \bm A\bm B\bm A\bm B$ to be the sign upon commuting a pair of operators, we have
\begin{align*}
    \hspace{-5pt}\text{sgn}(\bm A(e, \tilde f), \bm B(h_t, \tilde q_2))& \\
    &\hspace{-90pt}=  \text{sgn}\Big(\bm{\Gamma}^{(\tilde f)}(\bm e),\,
    \prod_{g\in C} \bm{\Gamma}^{(\tilde f)}(\mathcal{U}_{g^{-1}}(\bm h_t))\Big)
    \text{sgn}\Big(\prod_{g\in C}  \mathcal{U}_{g}(\bm e^{(\tilde q_2)}),\,
    \bm h_t^{(\tilde q_2)} \Big)\\
    &\hspace{-90pt} = \prod_{g\in C}
    \left[\text{sgn}
    \left(\bm{\Gamma}^{(\tilde f)}(\bm e),\,
    \bm{\Gamma}^{(\tilde f)}(\mathcal{U}_{g^{-1}}(\bm h_t))\right)
    \text{sgn}\left(\mathcal{U}_{g}(\bm e^{(\tilde q_2)}),\,
    \bm h_t^{(\tilde q_2)} \right)\right].
\end{align*}
Since for each $g\in C$, we have
\begin{align*}
    \text{sgn}\left(\mathcal{U}_{g}(\bm e),\,
    \bm h_t \right) &= \text{sgn}\left(\bm e,\,
    \mathcal{U}_{g^{-1}}( \bm h_t)\right)\\
    &= \text{sgn}
    \left(\bm{\Gamma}^{(\tilde f)}(\bm e),\,
    \bm{\Gamma}^{(\tilde f)}(\mathcal{U}_{g^{-1}}(\bm h_t))\right).
\end{align*}
Thus we see the sign always cancel in pairs, and we have
\begin{align*}
    \text{sgn}(\bm A(e, \tilde f), \bm B(h_t, \tilde q_2)) = 1.
\end{align*}
This shows that the two terms always commute, and we have
\begin{align*}
    \mathcal{S} \supseteq \langle \,
    \bm A(e, \tilde f), \,
    \bm B(h_t, \tilde q)
    \, \rangle.
\end{align*}
We take the above two terms as defining the stabilizer group of the balanced coupled-layer code. However, from the code switching perspective, there can be more terms in $\mathcal{S}_0$ that remain in $\mathcal{S}$, as in the case of the tensor product. For example, if certain product $\prod_{t\in T} \bm h_t$ commutes with all $\bm e^{(\tilde f)}$ for a given $\tilde f$, then clearly $\prod_{t\in T} \bm{\Gamma}^{(\tilde f)}(\bm h_t)$ remains in $\mathcal{S}$. In particular, if there is a nontrivial relation between the stabilizers $\prod_{t\in T} \bm h_t = 1$, then $\prod_{t\in T} \bm{\Gamma}^{(\tilde f)} (\bm h_t)$ is in $\mathcal{S}$ for all $\tilde f$. We do not attempt to write down all such terms in the most general form.

This construction generalizes balanced product in three major ways. First, we are not working on the chain complex level, so neither $\mathcal{H}$ nor $\mathcal{F}$ needs to be a CSS code. Second, we are allowed to condense more general excitations than a single $X$ or $Z$. As we have seen in Sec~\ref{sec:cssclc} and Sec~\ref{sec:clc}, condensing excitations beyond single $X$ or $Z$ can produce exotic phases of matter like fractons. Finally, the group actions are allowed to have a unitary component in addition to permutation. As we have seen in Sec.~\ref{sec:embp}, this allows us to produce non-CSS codes even if the two input codes are CSS.

To see this construction includes balanced product as a special case, we make the following reductions. Take two CSS codes, we regard CSS$_1: A_1\ra Q_1 \ra B_1$ as $H$ and CSS$_2: A_2\ra Q_2 \ra B_2$ as $F$. Let $G$ act only by permutation, that is, take $U_g = id$. Choose $\{\tilde a_2\}$, $\{\tilde b_2\}$, $\{\tilde q_2\}$ to be representatives of $G\backslash A_2$, $G\backslash B_2$ and $G\backslash Q_2$, respectively. As for the condensation data, we take $\{\bm e^{(\tilde a_2)}\} := \{X_{q_1}\}$ to be single Pauli $X$, and $\{\bm e^{(\tilde b_2)}\} := \{Z_{q_1}\}$ to be single Pauli $Z$. We take $\bm{\Gamma}^{(\tilde a_2)}$ to be the gauging map which gauges all $X$ symmetries in CSS$_1$, and $\bm{\Gamma}^{(\tilde b_2)}$ to be the gauging map which gauges all $Z$ symmetries. The code switching terms are
\begin{align*}
    \bm A(q_1,  \tilde a_2) 
    &= \bm{\Gamma}^{(\tilde a_2)}(X_{q_1}) \prod_{q_2\in \tilde a_2} \mathcal{U}_{g_{q_2}}(X_{q_1}^{(\tilde q_2)})\\ 
    &= \prod_{b_{1} \ni q_1} X_{b_1,\tilde a_2}\prod_{q_2\in \tilde a_2} X_{q_1\cdot g_{q_2}, \tilde q_2}, \\
    \bm A(Z_{q_1},  \tilde b_2) &= \bm{\Gamma}^{(\tilde b_2)}(Z_{q_1}) \prod_{q_2\in \tilde b_2} \mathcal{U}_{g_{q_2}}(Z_{q_1}^{(\tilde q_2)})\\
    &= \prod_{a_{1} \ni q_1} Z_{a_1,\tilde b_2}\prod_{q_2\in \tilde b_2} Z_{q_1\cdot g_{q_2}, \tilde q_2}, 
\end{align*}
which are just $\bm \alpha(q_1, \tilde a_2)$ and $\bm \beta(q_1, \tilde b_2)$ in balanced product. The remaining terms in $\mathcal{S}_0$ are
\begin{align*}
    \bm B(a_1, \tilde q_2) &= \bm{a_1}^{(\tilde q_2)} \prod_{a_2 \ni \tilde q_2}\bm{\Gamma}^{(\tilde a_2)}(\mathcal{U}_{g_{a_2}}(\bm{a_1})) \prod_{b_2 \ni \tilde q_2}\bm{\Gamma}^{(\tilde b_2)}(\mathcal{U}_{g_{b_2}}(\bm{a_1}))\\ 
    & = \bm{a_1}^{(\tilde q_2)}   \prod_{b_2 \ni \tilde q_2}X_{a_1\cdot g(b_2), \tilde b_2},\\
    \bm B(b_1, \tilde q_2) &= \bm{b_1}^{(\tilde q_2)} \prod_{a_2 \ni \tilde q_2}\bm{\Gamma}^{(\tilde a_2)}(\mathcal{U}_{g_{a_2}}(\bm{b_1}))
    \prod_{b_2 \ni \tilde q_2}\bm{\Gamma}^{(\tilde b_2)}(\mathcal{U}_{g_{b_2}}(\bm{b_1}))\\
    &= \bm{b_1}^{(\tilde q_2)}   \prod_{a_2 \ni \tilde q_2}Z_{b_1\cdot g(a_2), \tilde a_2}.
\end{align*}
These are exactly the terms $\bm \xi(a_1, \tilde q_2)$ and $\bm \zeta(b_1, \tilde q_2)$ in the balanced product. Hence, we have recovered the balanced product as a special case, when the action is free.

It is also not hard to see that this includes the $\ee-\mm$ balanced product as a special case. If we choose $\mathcal{H}$ to be CSS$_1$, and $\mathcal{F}$ is chosen to be CSS$_2$. The unitary part of the action is either trivial or Hadamard $\mathfrak{h}$. The $X$-excitations ($Z$-excitations) are chosen to be single $X$ ($Z$), and the algebra-preserving map $\bm{\Gamma}_x$ ($\bm{\Gamma}_z$) is gauging all $X$-stabilizers ($Z$-stabilizers), then we recover the $\ee-\mm$ balanced product defined in Sec~\ref{sec:embp}.

\section{Conclusion and Outlook}\label{sec:conclusion}

In this work, we studied a new class of stabilizer codes which are constructed from two constituent codes. Inspired by the coupled-layer construction in condensed matter physics, we presented several new classes of codes which go far beyond the tensor and balanced product codes.
\vspace{-0.4em}
\begin{enumerate}
    \item The CSS coupled-layer code introduces the idea of condensing general excitations, instead of the excitations created by single Pauli $X$ or $Z$ as in the case of tensor product. This is captured by a few related concepts, namely the mapping cone, which in turn is related to partial gauging/partial lattice surgery between the layers. Equivalently, the coupling between layers can also be formulated via a pair of excitation algebras and an algebra-preserving map. We showed how the X-cube code can be constructed within this framework. Using the excitation algebra and algebra-preserving map as fundamental objects, we were able to generalize the construction to non-CSS codes. This framework contains the XYZ product code, in particular, the Chamon model. 

    \item We introduce the concept of a generalized quotient where the group action acts not only by permutation, but also accompanied by an onsite unitary. As an example, we discussed the $\ee-\mm$ balanced product where the onsite unitary is Hadamard. We also relate this construction to an ordinary balancing of CSS codes via the symplectic doubling \cite{KovalevPryadko13,liu2024subsystem} in Appendix~\ref{sec:sd}. Notable examples constructed under this framework are the 3D fermionic TC and $3$-fermion Walker-Wang model. 

    \item  Combining the ideas from $1$ and $2$, we attempted to provide a unifying framework, which condenses general excitations, and at the same time is twisted by a group action, which is combination of permutation and onsite unitary. Although we do not have an illuminating example in low dimensions that goes beyond the previous constructions, it would be interesting to construct codes from our most general formalism.
\end{enumerate}
\vspace{-0.4em}

 We also note that there are several avenues for further generalization of our construction. First, the excitations being condensed are assumed to be created from Pauli operators with disjoint support, which makes them behave similarly to single-Pauli operators. Second, we only considered single-qubit Cliffords in defining the group action. It would be elegant to extend the construction to arbitrary excitations, as well as grouping the qubits into clusters and allowing the group to act within each cluster. One could also further ask whether there is a unifying mathematical structure that encompasses this construction \cite{hastings2021fiber, panteleev2024maximally}. For instance, in some sense the outcome can be viewed as a fiber bundle over the second code, whose fiber is the first code. Can such abstract points of view bring insights, and potentially alleviate the cumbersome calculations in Appendix~\ref{app:balancedcommute}?

From the physical perspective, it would be interesting to understand how to interpret the result of the generalized quotient in Sec~\ref{sec:gqc} as a physical operation. For example, we have seen that performing the $\ee-\mm$ quotient on the 4D TC is the 3D fermionic TC. It seems reasonable that performing a compactification of the 4D toric code with an $\ee-\mm$ duality defect inserted should give rise to the fermionic toric code. Showing that this is indeed the case, and also similarly for the larger family we constructed would be interesting future work.

Another possible direction is to investigate the error correction properties of the coupled-layer codes presented, in particular, the scaling of logicals and code distance. Currently, our work is inspired by and mostly focuses on quantum phases of matter. We showed that various fracton phases and fermionic phases are reproduced under our framework. However, we have not commented on whether our construction can produce codes with good parameters. Indeed, it is observed in Sec.~\ref{sec:cssclc_4dqc} that naive choices can lead to codes with constant distances. Hence, it is interesting to study what assumptions on the input codes, as well as the excitation algebra and algebra-preserving maps, can lead to codes with practical parameters. Moreover, we hope that the coupled-layer construction can offer a new perspective on error correction or decoding properties given those of constituent codes. Finally, we note that our construction can be straightforwardly generalized to qudits, although generalizing to non-stabilizer codes (e.g. based on non-Abelian groups) remains an open question.

\begin{acknowledgments}
We thank Katie Chang, Jeongwan Haah, Po-Shen Hsin, Chris Pattison, and Vedika Khemani for useful discussions. This work was
supported by the U.S. National Science Foundation (NSF) under Award No. PHY 2310614. T.-C.W. and S.Z. also acknowledge support from Stony Brook University's Center for Distributed Quantum Processing.
\end{acknowledgments}


\bibliography{references}

\onecolumngrid
\clearpage


\appendix

\section{Logicals in the CSS Coupled-Layer Code}
\label{sec:CSSlogical}

In this appendix, we count the number of logicals in the CSS coupled-layer code (whose corresponding chain complex is given in Fig.~\ref{fig:CoupLayerCSS}) under some modest assumptions. Specifically, we assume that

\begin{enumerate}
    \item \label{a3} $\delta_1'$ and $d_1'$ are surjective;
    \item \label{a4} $d_1'( \Ker d_1) = \Ker \epsilon_x'^T$ and $\delta_1'( \Ker \delta_1) = \Ker \epsilon_z'^T$.
\end{enumerate}
The second assumption says that every logical in the classical code with the parity check matrix $\epsilon_x'^T$ ($\epsilon_z'^T$) has a preimage in $B_1$ ($A_1$) which is a $Z$ ($X$) metacheck in CSS$_1$. In particular, this is satisfied when $\bm{\Gamma}$ corresponds to the gauging map as proved in \cite{Tantivasadakarn20}. On the other hand, the first assumption implies the following lemma, which will be used in later calculations
\begin{lemma}
If assumption \ref{a3} is satisfied, then $\Ker\epsilon_x \subseteq \Ker\epsilon_x'$ and $\Ker\epsilon_z \subseteq \Ker\epsilon_z'$    
\end{lemma}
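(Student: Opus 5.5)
The plan is to use the commutativity of the two mapping-cone squares, Figure~\ref{fig:MappingConeX} and Figure~\ref{fig:MappingConeZ}, together with surjectivity, to transfer a kernel condition from $\epsilon_x$ to $\epsilon_x'$. The relevant squares read $d_1^T\epsilon_x = d_1'^T\epsilon_x'$ and $\delta_1^T\epsilon_z = \delta_1'^T\epsilon_z'$. Here $d_1': B_1\ra Q_x$ is the map defining the dual system, so $d_1'^T: Q_x\ra B_1$; likewise $\delta_1': A_1 \ra Q_z$ and $\delta_1'^T: Q_z\ra A_1$.

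The key step is that surjectivity of $d_1'$ is equivalent to injectivity of $d_1'^T$ over $\mathbb{Z}_2$. Indeed, $\Ker d_1'^T = (\Imaa d_1')^\perp$ under the standard bilinear pairing, and this is zero exactly when $\Imaa d_1' = Q_x$. Similarly, surjectivity of $\delta_1'$ makes $\delta_1'^T$ injective.

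Now I would take $\mu\in\Ker\epsilon_x$. Then $d_1'^T\epsilon_x'(\mu) = d_1^T\epsilon_x(\mu) = 0$, and injectivity of $d_1'^T$ forces $\epsilon_x'(\mu)=0$, so $\mu \in \Ker\epsilon_x'$. The identical argument with the $Z$-type square gives $\Ker\epsilon_z\subseteq\Ker\epsilon_z'$.

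In operator language, this says that a product of excitations that is trivial in $Q_1$ must map under $\bm\Gamma_x$ to an operator commuting with all $\bm{d'}(b)$; since these generate all $Z$-type Paulis on $Q_x$, that operator must be trivial.

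The only point requiring care is bookkeeping, namely confirming that assumption~\ref{a3} concerns $d_1'$ and $\delta_1'$ rather than their transposes, which is the direction needed for injectivity of $d_1'^T$ and $\delta_1'^T$. There is no real obstacle beyond that.
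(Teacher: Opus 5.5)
Your proof is correct and is essentially the paper's argument. The paper phrases it in operator language: for $\xi\in\Ker\epsilon_x$, the $X$-operator $\epsilon_x'(\xi)$ on $Q_x$ must commute with every $Z$-operator $d_1'(b_1)$, and surjectivity of $d_1'$ forces it to be the identity. That is exactly your step $d_1'^T\epsilon_x'(\xi)=d_1^T\epsilon_x(\xi)=0$ combined with injectivity of $d_1'^T$, and the $Z$-type case is handled the same way in both.
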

\begin{proof}
    Take $\xi \in \Ker \epsilon_x$, then the operator $\epsilon_x(\xi)$ is the identity, which commutes with all $Z$ stabilizers of CSS$_1$. Therefore $\epsilon_x'(\xi)$ must commute with all operators $d_1'(B_1)$. But $d_1'$ is surjective, so $\epsilon_x'(\xi)$ can only be the identity, and $\xi \in \Ker \epsilon_x'$. Similarly, we can take $\chi \in \Ker\epsilon_z$ and argue that $\chi$ must belong to $\Ker\epsilon_z'$.
\end{proof}

\subsection{Classical Case} 

First we consider an easier case, where CSS$_1$ is a classical code in $Z$ basis, and CSS$_2$ is a classical code in $X$ basis. The stabilizer group is $\mathcal S = \langle \,\bm \alpha(e_x,a_2), \bm\zeta(b_1, q_2) \,\rangle$, and the corresponding chain complex is shown below 
\begin{equation*}
\begin{tikzpicture}[scale=0.9]
    \def\L{-4}
    \def\C{0}
    \def\R{4}
    \def\shadedwidth{60}
    \def\shadedheight{160}
    \def\H{1}
    \def\HH{2}
    \def\offs{2}
   
    \node[font=\Large, text=black] at (\C, -\H) {$\oplus$};

    \node[text=red!80!black] at ($(\L, -\H) +(0, 0)$) {$\mathcal{E}_{x}\otimes A_{2}$};

    \node (Q1Q2) at (\C,       0.0) {$Q_1 \otimes Q_{2}$};
    \node (B1A2) at (\C, -\HH) {$Q_{x}\otimes A_{2}$};
    
    \node[text=blue!80!black] at ($(\R, -\H) + (0, 0)$) {$B_{1}\otimes Q_{2}$};

    \draw[->] (-3, -0.8) -- (Q1Q2) node[midway, above] {$\epsilon_x \otimes \delta_2$};
    \draw[->] (-3, -1.2) -- (B1A2) node[midway, below] {$\epsilon_x' \otimes id$};
    \draw[->] (Q1Q2) -- (3, -0.8) node[midway, above] {$d_1^T \otimes id$};
    \draw[->] (B1A2) -- (3, -1.2) node[midway, below] {$d_1'^T \otimes \delta_2$};

\node[text=red!80!black] at ($(\L, -\H) + ( \offs, 0)$) {$\alpha(e_x, a_2)$};
  \node[text=blue!80!black] at ($(\R, -\H) - (\offs, 0)$) {$\zeta(b_1, q_2)$};
\end{tikzpicture}
\end{equation*}
Note this is a sub-diagram of Figure~\ref{fig:CoupLayerCSS}. To count the number of logical qubits, it suffices to find a maximal commuting set of logical operators. For each layer $a_2$, and each $v\in \Ker \epsilon_x'^T$, the operator $\mathcal{Z}(v,a_2)=  \prod_{q_x\in v} Z_{q_x, a_2}$ is a logical of $\mathcal S$. However, not all of these are independent. By assumption \ref{a4}, for each $v\in \Ker \epsilon_x'^T$, there is a $\Tilde{v} \in \Ker d_1$ such that $d_1'(\Tilde{v}) = v$. It follows that
\begin{align*}
    \prod_{b_1\in \Tilde{v}} \bm \zeta(b_1, q_2) = \prod_{a_2\ni q_2} \mathcal{Z}(v, a_2).
\end{align*}
The left side is a product of stabilizers, so the right side is not independent in the code space, and we should quotient out this subspace. The independent contributions are thus
\begin{align}\label{eq:z_in_a2}
    \Ker \epsilon_x'^T \otimes A_2/\Imaa \delta_2^T.
\end{align}
Similarly, for each layer $q_2$ and $v\in \Ker d_1^T$, one can write an $X$ operator $\mathcal{X}(v, q_2) = \prod_{q_1\in v} X_{q_1,q_2}$. They are also not independent. For each $\Tilde{v} \in \Ker \epsilon_x'$, $\epsilon_x(\Tilde{v})\in \Ker d_1^T$ because $\bm{\Gamma}_x$ preserves the algebra. We thus have
\begin{align*}
    \prod_{e_x\in \Tilde{v}} \bm \alpha(e_x,a_2) = \prod_{q_2\in a_2}\mathcal{X}(\epsilon_x(\Tilde{v}), q_2). 
\end{align*}
The independent contributions are thus
\begin{align}\label{eq:x_in_q2}
    (\Ker d_1^T\otimes Q_2 )\Big/( \epsilon_x(\Ker \epsilon_x')\otimes \Imaa \delta_2).
\end{align}
To show (\ref{eq:z_in_a2}) $\oplus$  (\ref{eq:x_in_q2}) is a maximal set of logical operators, we count the redundancies between stabilizers and show that
\begin{align}\label{eq:check_log}
    \rm  logicals = qubits - stabilizers + redundancies.
\end{align}
First, we find the redundancies between the $X$ stabilizers. Given $v \in \Ker \epsilon_x'$, we have
\begin{align*}
    \prod_{e_x\in v} \bm \alpha(e_x,a_2) = \prod_{q_2\in a_2}\left(\prod_{e_x\in v} \bm{e_x}^{(q_2)}\right).
\end{align*}
If $v \in \Ker \epsilon_x$, then the right-hand side is $1$. Otherwise, take $\theta\in \Ker \delta_2$, and
\begin{align*}
    \prod_{a_2\in \theta}\prod_{e_x\in v} \bm \alpha(e_x,a_2) = 1.
\end{align*}
These two cases are given by
\begin{align}\label{eq:red_in_alp_cond}
    \Ker \epsilon_x \otimes A_2
    \hspace{10pt}
    \text{and}
    \hspace{10pt}
    \Ker\epsilon_x'/\Ker\epsilon_x \otimes \Ker\delta_2.
\end{align}
Similarly, we can find the redundancies between the $Z$ stabilizers
\begin{align}\label{eq:red_in_beta}
    (\Ker d_1\cap \Ker d_1')\otimes Q_2
    \hspace{10pt}
    \text{and}
    \hspace{10pt}
    \frac{\Ker d_1}{\Ker d_1\cap \Ker d_1'} \otimes \Ker \delta_2^T.
\end{align}
The total redundancy between the stabilizers is (\ref{eq:red_in_alp_cond}) $\oplus$  (\ref{eq:red_in_beta}).

The number of qubits and stabilizers is easy to count. The qubits are $Q_1\otimes Q_2 \oplus Q_x\otimes A_2$. Then the stabilizers are $\mathcal{E}_x\otimes A_2 \oplus B_1\otimes Q_2$. Now we can verify (\ref{eq:check_log}). For notational simplicity, below, when we write the vector space, we actually mean its dimension.  
\begin{align*}
    &
    \Ker \epsilon_x'^T ( A_2 - \Imaa \delta_2^T) + \Ker d_1^T Q_2 -  (\epsilon_x(\Ker \epsilon_x')) \Imaa \delta_2\\
    \overset{?}{=}&Q_1Q_2 + Q_xA_2 - \mathcal{E}_xA_2 - B_1Q_2 + 
    \Ker \epsilon_x A_2
    +(\Ker\epsilon_x'-\Ker\epsilon_x ) \Ker\delta_2 
    +(\Ker d_1\cap \Ker d_1') Q_2
    + (\Ker d_1-\Ker d_1\cap \Ker d_1') \Ker \delta_2^T\\
    =& Q_1Q_2 + Q_xA_2 - \mathcal{E}_xA_2 - B_1Q_2 + 
    \Ker \epsilon_x \Imaa \delta_2
    +\Ker\epsilon_x'\Ker\delta_2 
    +(\Ker d_1\cap \Ker d_1') \Imaa \delta_2
    + \Ker d_1\Ker \delta_2^T.
\end{align*}
The equality labeled by $\overset{?}{=}$ is what we need to verify. From second to third line we used $A_2 - \Ker\delta_2 = \Imaa \delta_2$ and $Q_2-\Ker\delta_2^T = \Imaa \delta_2$. Now use $\Ker \epsilon_x'^T = Q_x - \mathcal{E}_x + \Ker \epsilon_x'$ and $\Ker d_1^T = Q_1 - B_1 + \Ker d_1$, we reduce the unproven equality to
\begin{align*}
    &\Ker \epsilon_x' A_2 - \Ker\epsilon_x'^T\Imaa \delta_2^T + \Ker d_1 Q_2 -  (\epsilon_x(\Ker \epsilon_x')) \Imaa \delta_2\\
    \overset{?}{=}&
    \Ker \epsilon_x \Imaa \delta_2
    +\Ker\epsilon_x'\Ker\delta_2 
    +(\Ker d_1\cap \Ker d_1') \Imaa \delta_2
    + \Ker d_1\Ker \delta_2^T.
\end{align*}
Replace $\Ker\epsilon_x'^T$ by $d_1'(\Ker d_1)$ in the first line by assumption \ref{a4}. As well as $\epsilon_x(\Ker \epsilon_x') + \Ker \epsilon_x = \Ker\epsilon_x'$ and $d_1'(\Ker d_1) + (\Ker d_1\cap \Ker d_1') = \Ker d_1$, we reduce the unproven equality to
\begin{align*}
    \Ker \epsilon_x' A_2  + \Ker d_1 Q_2 
    \overset{?}{=}
    \Ker \epsilon_x' \Imaa \delta_2
    +\Ker\epsilon_x'\Ker\delta_2 
    +\Ker d_1 \Imaa \delta_2+ \Ker d_1\Ker \delta_2^T.
\end{align*}
This equality indeed holds because $\Imaa \delta_2 + \Ker\delta_2 = A_2$ and $\Imaa \delta_2 + \Ker\delta_2^T = Q_2$, and (\ref{eq:check_log}) is verified.

In order to give an example, we count the logicals of the X-cube model. One can check the condensation data for X-cube satisfies assumptions \ref{a3} and \ref{a4}. Assuming periodic boundary condition, $\Ker \epsilon_x'^T = L_x+L_y-1$, $A_2 = L_z$, $\Imaa \delta_2^T = L_z-1$, hence (\ref{eq:z_in_a2}) gives $L_x+L_y-1$. $\Ker d_1^T = L_x+L_y$, $Q_2 = L_z$, $\epsilon_x(\Ker\epsilon_x') = L_x + L_y - 2$, $\Imaa\delta_2 = L_z-1$, hence (\ref{eq:x_in_q2}) gives $L_x + L_y + 2(L_z-1)$. In total, the dimension of (\ref{eq:z_in_a2}) $\oplus$  (\ref{eq:x_in_q2}) gives $2(L_x+L_y+L_z) - 3$.

\subsection{Quantum Case}

Next, we consider the case where both CSS$_1$ and CSS$_2$ are quantum codes. The stabilizer group is $\mathcal S = \langle \,\bm\alpha(e_x,a_2) , \bm \beta(e_z, b_2) , \bm\xi(a_1,q_2), \bm\zeta(b_1, q_2) \,\rangle$, and the chain complex is given in Figure~\ref{fig:CoupLayerCSS}. First we find a maximal set of commuting logicals. We make this set roughly symmetric in the $X$ and $Z$ logicals. The $Z$ operators in (\ref{eq:z_in_a2}) are still logicals. Similarly, there is a set of $X$-logicals  in layers of $b_2$
\begin{align}\label{eq:x_in_b2}
    \Ker \epsilon_z'^T \otimes B_2/\Imaa d_2^T.
\end{align}
The $X$ operators in (\ref{eq:x_in_q2}) are no longer logicals since they do not necessarily commute with $Z$-stabilizers $\bm \beta(e_z,b_2)$. Instead, given $v\in \Ker d_1^T$ and $w \in \Ker d_2^T$, the operator $\mathcal{X}(v,w) = \prod_{q_1\in v} \prod_{q_2\in w} X_{q_1,q_2}$ can be verified to be a logical. Indeed, $\mathcal{X}(v,w)$ commutes with $\bm\zeta(b_1,q_2)$ because $\prod_{q_1\in v}X_{q_1,q_2}$ always commutes with $\bm{b_1}^{(q_2)}$, and it commutes with $\bm \beta(e_z,b_2)$ because $\prod_{q_2\in w} X_{q_1,q_2}$ always commutes with $\prod_{q_2\in b_2}\bm{e_z}^{(q_2)}$. Again, a subspace of these operators is in fact a product of stabilizers. These are
\begin{align*}
    &\tilde v \in \Ker \epsilon_x',
    \hspace{5pt}
    a_2 \in A_2
    :
    \hspace{35pt}
    \prod_{e_x\in \tilde v}\bm \alpha(e_x,a_2) = \mathcal{X}(\epsilon_x(\tilde v ),\delta_2(a_2)),\\
    &a_1 \in A_1,
    \hspace{5pt}
    \tilde w \in \Ker d_2^T
    :
    \hspace{35pt}
    \prod_{q_2\in \tilde w}\bm\xi(a_1,q_2) = \mathcal{X}(\delta_1(a_1),\tilde w).
\end{align*}
Therefore, the subspace of $\Ker d_1^T \otimes \Ker d_2^T$ which intersects the space of stabilizers is
\begin{align*}
    \epsilon_x(\Ker\epsilon_x') \otimes \Imaa \delta_2 + \Imaa \delta_1 \otimes \Ker d_2^T,
\end{align*}
where the sum is vector space sum, not a direct sum, since these two spaces might have non-zero overlap. This subspace needs to be quotient out, and the set of $X$-logicals are
\begin{align}\label{eq:x_in_Q_quant}
    \frac{\Ker d_1^T \otimes \Ker d_2^T}{\epsilon_x(\Ker\epsilon_x') \otimes \Imaa \delta_2 + \Imaa \delta_1 \otimes \Ker d_2^T}.
\end{align}
However, this is not a maximal set of logicals in layers $q_2$. In addition to the $X$-logicals in (\ref{eq:x_in_Q_quant}), there can also be a similar set of $Z$-logicals. We may try to write down the same operators for $Z$-logicals, take $v\in \Ker \delta_1^T$ and $w \in \Ker \delta_2^T$, we have $\mathcal{Z}(v,w) = \prod_{q_1\in v}\prod_{q_2\in w}Z_{q_1,q_2}$. However, such a $Z$ operator might not commute with all $X$-logicals in (\ref{eq:x_in_Q_quant}). For instance, assume $v$ and $w$ represent nontrivial $Z$ logicals in CSS$_1$ and CSS$_2$ respectively, and let $\tilde v$ and $\tilde w$ be their conjugate $X$-logicals, so that sgn$\big(\prod_{q_1\in v}Z_{q_1}, \prod_{q_1\in\tilde v}X_{q_1}\big) = -1$ and sgn$(\prod_{q_2\in w}Z_{q_2}, \prod_{q_2\in\tilde w}X_{q_2}) = -1$, then sgn$(\mathcal{Z}(v,w),\mathcal{X}(\tilde v,\tilde w)) = -1$. Hence to guarantee commutativity, we take $w \in \Imaa d_2$, then one can check $\mathcal{Z}(v, w)$ commutes with all $X$-stabilizers as well as $X$-logicals in (\ref{eq:x_in_Q_quant}). There is also a subspace of $\Ker\delta_1^T \otimes \Imaa d_2$ which is a product of stabilizers, these are
\begin{align*}
    &\tilde v \in \Ker \epsilon_z',
    \hspace{5pt}
    b_2\in B_2:
    \hspace{35pt}
    \prod_{e_z \in \tilde v} \bm \beta(e_z, b_2) = \mathcal{Z}(\epsilon_{z}(\tilde v),d_2(b_2)),\\
    & b_1\in B_1,
    \hspace{5pt}
    b_2 \in B_2:
    \hspace{35pt}
    \prod_{q_2\in b_2}\bm\zeta(b_1,q_2) = \mathcal{Z}(d_1(b_1),d_2(b_2)).
\end{align*}
Quotienting out this subspace, the set of $Z$-logicals are
\begin{align}\label{eq:z_in_Q_quant}
    \frac{\Ker\delta_1^T}{\epsilon_z(\Ker\epsilon_z') + \Imaa d_1} \otimes \Imaa d_2.
\end{align}
We claim that a maximal set of logical operators in $H$ is (\ref{eq:z_in_a2}) $\oplus$ (\ref{eq:x_in_b2}) $\oplus$  (\ref{eq:x_in_Q_quant}) $\oplus$  (\ref{eq:z_in_Q_quant}), which needs to be checked against (\ref{eq:check_log}).

For example, consider the usual tensor product. In this case $\epsilon_z' = \delta_1^T$, $\epsilon_x' = d_1^T$, so $\epsilon_z(\Ker\epsilon_z') = \Ker\delta_1^T$ and $\epsilon_x(\Ker\epsilon_x') = \Ker d_1^T$. (\ref{eq:z_in_Q_quant}) is trivial. The dimensions of (\ref{eq:z_in_a2}), (\ref{eq:x_in_b2}) and (\ref{eq:x_in_Q_quant}) are, respectively
\begin{align*}
    &(\ref{eq:z_in_a2}):\hspace{15pt}
    \Ker d_1 \Ker \delta_2 = H^{1}(\text{CSS}_1)H^{-1}(\text{CSS}_2),\\
    &(\ref{eq:x_in_b2}):\hspace{15pt}
    \Ker \delta_1 \Ker d_2 = H^{-1}(\text{CSS}_1)H^{1}(\text{CSS}_2),\\
    &(\ref{eq:x_in_Q_quant}):\hspace{15pt}
    \Ker d_1^T  \Ker d_2^T-(\Ker d_1^T  \Imaa \delta_2 + \Imaa \delta_1 \Ker d_2^T - \Imaa \delta_1 \Imaa \delta_2) = H^{0}(\text{CSS}_1)H^{0}(\text{CSS}_2).
\end{align*}
Indeed this agrees with the dimension of the tensor product cohomology $H^0(\text{CSS}_1\otimes \text{CSS}_2)$.

In order to check (\ref{eq:check_log}), we also need to count the redundancies between the stabilizers. (\ref{eq:red_in_alp_cond}) and (\ref{eq:red_in_beta}) are still present. Recall that these are redundancies between $\bm\alpha(e_x,a_2)$ and $\bm\zeta(b_1,q_2)$, respectively. Exactly the same redundancies hold in replacing $\bm \alpha(e_x,a_2)$ with $\bm \beta(e_z,b_2)$, and $\bm\zeta(b_1,q_2)$ with $\bm\xi(a_1,q_2)$. We simply exchange $\delta\leftrightarrow d$ and $\epsilon_x \leftrightarrow \epsilon_z$,
\begin{align}
    &\bm\beta(e_z,b_2): \hspace{25pt}
    \Ker \epsilon_z \otimes B_2
    \hspace{10pt}
    \text{and}
    \hspace{10pt}
    \Ker\epsilon_z'/\Ker\epsilon_z \otimes \Ker d_2,
    \label{eq:red_in_beta_cond}\\
    &\bm\xi(a_1,q_2): 
    \hspace{25pt}
    \Ker \delta_1\cap \Ker \delta_1'\otimes Q_2
    \hspace{10pt}
    \text{and}
    \hspace{10pt}
    \frac{\Ker \delta_1}{\Ker \delta_1\cap \Ker \delta_1'} \otimes \Ker d_2^T.
    \label{eq:red_in_alp}
\end{align}
In addition, there can also be redundancies involving both $\bm\alpha(e_x,a_2)$ and $\bm\xi(a_1,q_2)$. Pick $v\in \Ker\epsilon_x'$ such that $\epsilon_x(v) \in \Imaa \delta_1$. Let $\tilde v \in A_1$, $\delta_1(\tilde v) = \epsilon_x(v)$. Then for each $a_2\in A_2$,
\begin{align*}
    \left(\prod_{e_x\in v} \bm \alpha(e_x, a_2) \right)
    \left(\prod_{a_1\in\tilde v}\prod_{q_2\in a_2}\bm\xi(a_1,q_2)\right) = 1.
\end{align*}
Hence, the space of redundancy is
\begin{align}\label{eq:red_in_alpalp}
    (\epsilon_x(\Ker\epsilon_x')\cap \Imaa \delta_1) \otimes \Imaa \delta_2.
\end{align}
In exactly the same way, the redundancies involving both $\bm\beta(e_z,b_2)$ and $\bm\zeta(b_1,q_2)$ are obtained by exchanging $\delta\leftrightarrow d$ and $\epsilon_x \leftrightarrow \epsilon_z$,
\begin{align}\label{eq:red_in_betabeta}
    (\epsilon_z(\Ker\epsilon_z')\cap \Imaa d_1) \otimes \Imaa d_2.
\end{align}
We claim that the total space of redundancies is (\ref{eq:red_in_alp_cond}) $\oplus$  (\ref{eq:red_in_beta}) $\oplus$ 
(\ref{eq:red_in_beta_cond}) $\oplus$ (\ref{eq:red_in_alp}) $\oplus$ 
(\ref{eq:red_in_alpalp}) $\oplus$  (\ref{eq:red_in_betabeta}).

Now we check (\ref{eq:check_log}). To avoid clustering of notation, we again write each vector-space symbol for its dimension. The number of qubits is $Q_1Q_2+Q_xA_2+Q_zB_2$. The number of stabilizers is $\mathcal{E}_xA_2 + B_1Q_2 + \mathcal{E}_zB_2 + A_1Q_2$. We have already shown in the classical case that
\begin{align}\label{eq:check_log_x_cond}
    Q_1Q_2+Q_xA_2 - \mathcal{E}_xA_2 - B_1Q_2 + (\ref{eq:red_in_alp_cond}) + (\ref{eq:red_in_beta}) = (\ref{eq:z_in_a2}) + (\ref{eq:x_in_q2}).
\end{align}
Exchanging the role of $X$ and $Z$, we also have
\begin{align}\label{eq:check_log_z_cond}
    Q_1Q_2+Q_zB_2 - \mathcal{E}_zB_2 - A_1Q_2 + (\ref{eq:red_in_beta_cond}) + (\ref{eq:red_in_alp}) = (\ref{eq:x_in_b2}) + \Ker \delta_1^T Q_2 -  (\epsilon_z(\Ker \epsilon_z')) \Imaa d_2,
\end{align}
where the last two terms on the right-hand side are the analogue of (\ref{eq:x_in_q2}). Subtract these two equalities from the (\ref{eq:check_log}), then (\ref{eq:check_log}) $-$ (\ref{eq:check_log_x_cond}) $-$ (\ref{eq:check_log_z_cond}) gives
\begin{align*}
    -Q_1Q_2 + (\ref{eq:red_in_alpalp}) +  (\ref{eq:red_in_betabeta}) 
    \overset{?}{=}    (\ref{eq:x_in_Q_quant}) +  (\ref{eq:z_in_Q_quant}) - (\ref{eq:x_in_q2}) - (\Ker \delta_1^T Q_2 -  (\epsilon_z(\Ker \epsilon_z')) \Imaa d_2).
\end{align*}
Expand the right-hand side,
\begin{align*}
    (\ref{eq:x_in_Q_quant}) &= 
    \Ker d_1^T \Ker d_2^T - (\epsilon_x(\Ker\epsilon_x'))  \Imaa \delta_2 - \Imaa \delta_1  \Ker d_2^T + (\ref{eq:red_in_alpalp})\\
    &= (\ref{eq:x_in_q2}) - \Ker d_1^T \Imaa d_2 - \Imaa \delta_1  \Ker d_2^T + (\ref{eq:red_in_alpalp}),\\
    (\ref{eq:z_in_Q_quant}) &= 
    \Ker\delta_1^T \Imaa d_2 -(\epsilon_z(\Ker\epsilon_z')) \Imaa d_2 - \Imaa d_1  \Imaa d_2 + (\ref{eq:red_in_betabeta}) \\
    &= (\Ker\delta_1^T  Q_2 -(\epsilon_z(\Ker\epsilon_z')) \Imaa d_2) - \Ker\delta_1^T \Ker d_2^T - \Imaa d_1  \Imaa d_2 + (\ref{eq:red_in_betabeta}).
\end{align*}
Hence, we are only left with
\begin{align*}
    Q_1Q_2
    \overset{?}{=}  
     \Ker d_1^T \Imaa d_2 + \Imaa \delta_1  \Ker d_2^T + \Ker\delta_1^T \Ker d_2^T + \Imaa d_1  \Imaa d_2.
\end{align*}
Indeed, this holds because
\begin{align*}
    \text{right-hand side} &= (\Ker d_1^T  + \Imaa d_1 ) \Imaa d_2  + (\Imaa \delta_1   + \Ker\delta_1^T )\Ker d_2^T  \\
    &= Q_1 \Imaa d_2 + Q_1 \Ker d_2^T\\
    &= Q_1Q_2.
\end{align*}
This concludes the counting of logicals in the CSS version. Note that in choosing a maximal set of commuting logicals, we take half of these to be $X$-logicals and another half to be $Z$-logicals, in a symmetric manner. However, it is not hard to find a maximal set of pure $X$-logicals, or pure $Z$-logicals.

\section{Understanding $\ee-\mm$ from Symplectic Doubling}
\label{sec:sd}

In this appendix, we explore the interaction of the $\ee-\mm$ quotient code studied in Sec.~\ref{sec:gqc} and the $\ee-\mm$ balanced product in Sec.~\ref{sec:embp} with an operation called symplectic doubling \cite{KovalevPryadko13,liu2024subsystem}. Symplectic doubling is a mapping between stabilizer codes, which maps an arbitrary Pauli stabilizer code to a CSS code with double the number of qubits. This potentially allows us to apply the chain complexes formalism to non-CSS codes by factoring through its symplectic double. Since $\ee-\mm$ related operations in general turn CSS codes into non-CSS codes, it is insightful to apply the symplectic doubling on both sides, and ask how do these operations manifest on the chain complex level. Indeed, we will show if a stabilizer code admits an $\ee-\mm$ symmetry, then the doubled code has an induced symmetry which acts by simply permuting qubits without the original unitary. It follows that the $\ee-\mm$ quotient and balanced product are closely related to the quotient and balanced product of this induced action.

First we review the symplectic doubling. Given a stabilizer code $\mathcal S = \langle \bm s \rangle$ with qubit vector space $Q$. Up to a phase, the stabilizers $\bm s$ are given by 
\begin{align*}
    \bm s = \prod_{q\in s_X} X_q \prod_{q\in s_Z} Z_q,
\end{align*}
where $s_X$ ($s_Z$) is the support of $\bm s$ on which $X$ ($Z$) acts. Then $\mathcal S$ can be mapped to a CSS code with double the number of qubits as follows. Let $q^{(0)}$ and $q^{(1)}$ denote the two copies of qubit $q$ in $H$, with the superscipt a $\mathbb{Z}_2$ label, the stabilizers of the doubled CSS code are given by
\begin{align*}
    &\omega_X(\bm s) = \prod_{q\in s_X} X_{q^{(0)}} \prod_{q\in s_Z} X_{q^{(1)}},\\
    &\omega_Z(\bm s) = \prod_{q\in s_X} Z_{q^{(1)}} \prod_{q\in s_Z} Z_{q^{(0)}}.
\end{align*}
Then the stabilizer code is mapped to a CSS code, whose stabilizer group is
\begin{align*}
    \omega(\mathcal{S}) := 
    \langle\,
    \omega_X(\bm s),\,
    \omega_Z(\bm s)
    \,|\,
    \bm s\in \mathcal{S}
    \,\rangle.
\end{align*}
In the rest of this appendix, we will show the interplay between $\omega$ and $\ee-\mm$ quotient, as well as $\ee-\mm$ balanced product.

\subsection{$\ee-\mm$ Quotient from Symplectic Doubling}
\label{sec:gqc_sdq}

In this appendix, we show how symplectic doubling is related to the $\ee-\mm$ quotient code in Sec.~\ref{sec:gqc}. Specifically, we will show that the symplectic double code naturally inherits a symmetry related to the $\ee$-$\mm$ quotient. In particular, there exists the following commuting diagram

\begin{align}\label{eqn:quo_doub_comm_diag}
\begin{adjustbox}{valign=c}
    \begin{tikzpicture}[scale=0.8, every node/.style={scale=1}]
    \def\L{-3.5}
    \def\C{0}
    \def\R{5.5}
    \def\shadedwidth{60}
    \def\shadedheight{160}
    \def\H{3.5}
    \def\shift{0.3}
    %
    \node (1) at (0 , 0) {$\omega(\mathcal{S})$};
    \node  (2) at (6.5, 0) {$\omega(\mathcal{S})/G = \omega(\mathcal{S}/G)$};
%
    \node (3) at (0, \H) {$\mathcal{S}$};
    \node  (4) at (\R, \H) {$\mathcal{S}/G$};
%
%
%
    \draw[->] (1) -- (2); 
    \draw[->] (3) -- (4);
%
    \draw[->] (3) -- (1); 
    \draw[->] (4) -- (5.5,0.4); 
    \node at ($(1)!0.5!(5.5, 0) - (0,\shift)$) {  quotient};
    \node at ($(3)!0.5!(4) + (0,\shift)$) {$\ee-\mm$ quotient};
    \node at ($(3)!0.5!(1) - (\shift, 0)$) {$\omega$};
    \node at ($(4)!0.5!(5.5, 0) + (\shift, 0)$) {$\omega$};
    \node at ($(1)!0.5!(4)$) {$\circlearrowleft$};
\end{tikzpicture}
\end{adjustbox}
\end{align}
where $\omega$ corresponds to the symplectic doubling.

To begin with, consider a stabilizer code $\mathcal{S} = \langle \, \bm s\,\rangle$ with an $\ee-\mm$ symmetry given by group $G$, then $\omega(\mathcal{S})$ inherits an symmetry of $G$ by pure permutation as follows. Given $g\in G$, define the homomorphism $\gamma: G \ra \mathbb{Z}_2$ by
\begin{align*}
    \gamma(g) = 
    \begin{cases}
        0, \, U_g = id,\\
        1, \, U_g = \mathfrak{h}.
    \end{cases}
\end{align*}
In particular, we have $U_g = \mathfrak{h}^{\gamma(g)}$. Consider the action
\begin{align*}
    q^{(a)} \cdot g := (q\cdot g)^{(a + \gamma(g))},
\end{align*}
where $q\cdot g$ is the original permutation on $Q$, and $a + \gamma(g)$ is addition in $\mathbb{Z}_2$. To see this is a symmetry, consider
\begin{align*}
    \omega_X(\bm s) \cdot g = \prod_{q\in s_X} X_{(q\cdot g)^{(0+\gamma(g))}} \prod_{q\in s_Z} X_{(q\cdot g)^{(1 + \gamma(g))}}.
\end{align*}
There are two cases
\begin{itemize}
    \item $\gamma(g) = 0$, then $(s\cdot g)_X = s_X\cdot g$ and $(s\cdot g)_Z = s_Z\cdot g$,
    \begin{align*}
        \omega_X(\bm s) \cdot g = \prod_{q\in (s\cdot g)_X} X_{q^{(0)}} \prod_{q\in (s\cdot g)_Z} X_{q^{(1)}}.
    \end{align*}
    \item $\gamma(g) = 1$, then $(s\cdot g)_X = s_Z\cdot g$ and $(s\cdot g)_Z = s_X\cdot g$,
    \begin{align*}
        \omega_X(\bm s) \cdot g = \prod_{q\in (s\cdot g)_Z} X_{q^{(1)}} \prod_{q\in (s\cdot g)_X} X_{q^{(0)}}.
    \end{align*}
\end{itemize}
In either case, we see that
\begin{align}\label{eq:doubling_action}
    \omega_X(\bm s) \cdot g = \omega_X(\bm s \cdot g),
\end{align}
and similarly for $\omega_Z(\bm s)$.

Given the above induced $G$ action on the CSS code $\omega(\mathcal{S})$ by permutation, we can consider the quotient of $\omega(\mathcal{S})$ at the chain complex level. It is natural to believe this quotient is related to the original $\ee-\mm$ balanced product. We make this relation precise below.

First, consider the quotient of the induced action on $\omega(\mathcal{S})$. The qubit representatives can be chosen to be $\tilde q^{(0)}$ and $\tilde q^{(1)}$, where $\tilde q$ is a representative of $Q/G$. Similarly, due to Eq.~\eqref{eq:doubling_action}, the stabilizer representatives can be chosen to be $\omega_X(\tilde{\bm s})$ and $\omega_Z(\tilde{\bm s})$, where $\tilde{\bm s}$ is a representative of $\mathcal{S}/G$. With the notation in Sec.~\ref{sec:gqc}, the stabilizers of the quotient CSS code are
\begin{align*}
    &\omega_X(\tilde{\bm s})_G = \prod_{q\in \tilde s_X} X_{\tilde q^{(0 + \gamma(g_q))}} \prod_{q\in \tilde s_Z} X_{\tilde q^{(1 + \gamma(g_q))}},\\
    &\omega_Z(\tilde{\bm s})_G = \prod_{ q \in \tilde s_X} Z_{\tilde q^{(1+\gamma(g_q))}} \prod_{q\in \tilde s_Z} Z_{\tilde q^{(0+\gamma(g_q))}}.
\end{align*}
Hence, we have
\begin{align*}
    \omega(\mathcal{S})/G 
    \,=\,
    \langle\, \omega_X(\tilde{\bm s})_G,\, \omega_Z(\tilde{\bm s})_G
    \,|\,
    \tilde{\bm s}\in \mathcal{S}/G
    \,\rangle.
\end{align*}

On the other hand, recall the stabilizers of the $\ee-\mm$ quotient of $H$ is given by Eq.~\eqref{eq:em_quotient}. Specializing to this case, we have
\begin{align*}
    & \mathcal{S}/G = \langle\,
    \tilde{\bm s}_G
    \,|\,
    \tilde{\bm s}\in \mathcal{S}/G
    \,\rangle,\\
    &\tilde{\bm s}_G = \prod_{q\in \tilde s_X} \mathfrak{h}^{\gamma(g_q)} X_{\tilde q} \mathfrak{h}^{\gamma(g_q)}
    \prod_{q\in \tilde s_Z} \mathfrak{h}^{\gamma(g_q)} Z_{\tilde q} \mathfrak{h}^{\gamma(g_q)}.
\end{align*}
We can consider the doubling of $\mathcal{S}/G$, and we see that it is equal to the quotient of $\omega(\mathcal{S})$,
\begin{align*}
    &\omega_X(\tilde{\bm s}_G) = \prod_{q\in \tilde s_X} X_{\tilde q^{(0 + \gamma(g_q))}} \prod_{q\in \tilde s_Z} X_{\tilde q^{(1 + \gamma(g_q))}} = \omega_X(\tilde{\bm s})_G,\\
    &\omega_Z(\tilde{\bm s}_G) = \prod_{ q \in \tilde s_X} Z_{\tilde q^{(1+\gamma(g_q))}} \prod_{q\in \tilde s_Z} Z_{\tilde q^{(0+\gamma(g_q))}} = \omega_Z(\tilde{\bm s})_G.
\end{align*}
That is, 
\begin{align*}
    \omega(\mathcal{S}/G) = \omega(\mathcal{S})/G.
\end{align*}

To summarize, we see that given a stabilizer code $\mathcal{S}$ with an $\ee-\mm$ symmetry, the doubling translates the $\ee-\mm$ action to a pure permutation. In addition, The CSS code obtained by doubling the quotient of $\mathcal{S}$ by the $\ee-\mm$ symmetry is equivalent to quotient the doubling of $\mathcal{S}$, as in Eq.~\eqref{eqn:quo_doub_comm_diag}.

We again use the TC as an example to illustrate the above commutative diagram. First going along the top path in Eq.~\eqref{eqn:quo_doub_comm_diag}, we have seen this gives the repetition code in $Y$ basis
\begin{align*}
    \text{TC}/\mathbb{Z} = \langle Y_i Y_{i+1} \rangle
\end{align*}
The symplectic doubling gives two qubits per site, with stabilizers
\begin{align*}
    \omega(\text{TC} / \mathbb{Z}) =  \langle X_{i}^{(0)}X_{i}^{(1)} X_{i+1}^{(0)} X_{i+1}^{(1)} ,
 Z_{i}^{(0)}Z_{i}^{(1)} Z_{i+1}^{(0)} Z_{i+1}^{(1)}\rangle.
\end{align*}
Next going along the bottom path in Eq.~\eqref{eqn:quo_doub_comm_diag}, the doubling of TC puts two qubits per edge with stabilizers
\begin{align*}
    & \omega(\text{TC}) = 
    \langle\,
     \bm a_v,\, \bm b_v,\, 
     \bm a_p,\, \bm b_p
     \,\rangle\\
    & \bm a_v = \prod_{e \in v} X_{e}^{(0)},
    \hspace{35pt}
    \bm a_p = \prod_{e \in p} X_{e}^{(1)},\\
    & \bm b_v = \prod_{e \in v} Z_{e}^{(1)},
    \hspace{35pt}
    \bm b_p = \prod_{e \in p} Z_{e}^{(0)}.
\end{align*}
The induced $\mathbb{Z}$ action is a shift on lattice, followed by a shift in the superscript $\mathbb{Z}_2$ label
\begin{align*}
    \begin{tikzpicture}
        [scale = 0.8, baseline=0.7cm]
            \node [circle,fill, inner sep=1.5pt] at (1,0) {};
            \node [circle,fill, inner sep=1.5pt]at (0, 1) {};
            \node [circle,fill, inner sep=1.5pt]at (2, 1) {};
            \node [circle,fill, inner sep=1.5pt]at (1, 2) {};
            \node at (1, -0.4) {$(1)\,/\,(0)$};
            \node at (-1, 1) {$(0)\,/\,(1)$};
            \node at (1, 2.4) {$(0)\,/\,(1)$};
            \node at (2.9, 1) {$(1)\,/\,(0)$};
            \draw[step = 2] (-0.5,-0.5) grid (2.5,2.5);
            \draw[->] (1.1,1.9) -- (1.9,1.1);
            \draw[->] (0.1,0.9) -- (0.9,0.1);
        \end{tikzpicture}
\end{align*}
This action clearly switches between $\bm a_v  \leftrightarrow \bm a_p$ and $\bm b_v  \leftrightarrow \bm b_p$, and quotient by this recovers $\omega(\text{TC}/\mathbb{Z})$, hence we indeed have
\begin{align*}
    \omega(\text{TC}/\mathbb{Z}) = \omega(\text{TC})/\mathbb{Z}.
\end{align*}

\subsection{$\ee-\mm$ Balancing from Symplectic Doubling}

In this appendix, we discuss how the symplectic doubling is related to the $\ee-\mm$ balanced product construction in Sec.~\ref{sec:embp}. We have just seen that giving a stabilizer code with $\ee-\mm$ symmetry, the doubled code has an induced group action given by simply permuting the qubits without the original Hadamards. As a result, given a pair of CSS codes with $\ee-\mm$ symmetry, we can first perfrom symplectic doubling on both of them, and apply the traditional balanced product \cite{breuckmann2021balanced} using the induced actions. It is natural to expect a relation between this balanced product code and the $\ee-\mm$ balanced product we have defined. We will make this relation precise below.

Given a CSS code $A\ra Q\ra B$ , its symplectic double is
\begin{align}
    \omega(\text{CSS}) = \text{CSS} \oplus \text{CSS}^\vee
\end{align}
That is, it consists of the original CSS code on qubits $q^{(0)}$ with $X$-stabilizers $\omega_X(\bm a)$ and $Z$-stabilizers $\omega_Z(\bm b)$, as well as the Hadamarded code CSS$^\vee$ on qubits $q^{(1)}$ with $Z$-stabilizers $\omega_Z(\bm a)$ and $X$-stabilizers $\omega_X(\bm b)$. Now consider the tensor product of two symplectic doubles. This is
\begin{align}
    \label{eqn:double_prod}
    \omega(\text{CSS}_1) \otimes \omega(\text{CSS}_2)
    &=
    (\text{CSS}_1 \otimes \text{CSS}_2) 
    \oplus 
    (\text{CSS}_1^\vee \otimes \text{CSS}_2^\vee) \nonumber\\
    &\hspace{-1cm} 
    \oplus (\text{CSS}_1^\vee \otimes \text{CSS}_2) 
    \oplus 
    (\text{CSS}_1 \otimes \text{CSS}_2^\vee).
\end{align}
Note the first line corresponds to the double of CSS$_1 \otimes$ CSS$_2$ with the induced action. We can make this correspondence explicit by the following chain isomorphism
\begin{align*}
    \phi: (\text{CSS}_1 \otimes \text{CSS}_2) 
    \oplus 
    (\text{CSS}_1^\vee \otimes \text{CSS}_2^\vee) 
    \rightarrow
    \omega(\text{CSS}_1 \otimes \text{CSS}_2).
\end{align*}
Recall that we place qubits, $X$-stabilizers, $Z$-stabilizers on degree $0$, $-1$ and $1$ respectively, the linear map on qubits is thus
\begin{align*}
    \phi^0 : \,
    \begin{cases}
    q_1^{(i)} \otimes q_2^{(i)} 
    \,\mapsto \,
    (q_1\otimes q_2)^{(i)},
    \hspace{15pt}
    i = 0,1\\
    \omega_1(a_1) \otimes \omega_2(b_2) 
    \,\mapsto \,
    (a_1\otimes b_2)^{(0)}\\
    \omega_2(b_1) \otimes \omega_1(a_2) 
    \,\mapsto \,
    (b_1\otimes a_2)^{(0)}\\
    \omega_1(b_1) \otimes \omega_2(a_2) 
    \,\mapsto \,
    (b_1\otimes a_2)^{(1)}\\
    \omega_2(a_1) \otimes \omega_1(b_2) 
    \,\mapsto \,
    (a_1\otimes b_2)^{(1)}
    \end{cases}
\end{align*}
on $X$-stabilizers
\begin{align*}
    \phi^{-1}:
    \,\begin{cases}
    q_1^{(0)} \otimes \omega_1(a_2) \mapsto \omega_1(q_1\otimes a_2)\\
    q_1^{(1)} \otimes \omega_1(b_2) \mapsto \omega_1(q_1\otimes b_2)\\
    \omega_1(a_1) \otimes q_2^{(0)} \mapsto \omega_1(a_1\otimes q_2)\\
    \omega_1(b_1) \otimes q_2^{(1)} \mapsto \omega_1(b_1\otimes q_2)
    \end{cases}
\end{align*}
on $Z$-stabilizers
\begin{align*}
    \phi^1:
    \,\begin{cases}
    q_1^{(0)} \otimes \omega_2(b_2) \mapsto \omega_2(q_1\otimes b_2)\\
    q_1^{(1)} \otimes \omega_2(a_2) \mapsto \omega_2(q_1\otimes a_2)\\
    \omega_2(b_1) \otimes q_2^{(0)} \mapsto \omega_2(b_1\otimes q_2)\\
    \omega_2(a_1) \otimes q_2^{(1)} \mapsto \omega_2(a_1\otimes q_2)
    \end{cases}
\end{align*}
 Each $\phi$ clearly extends to a linear isomorphism. There are two things remain to check, first $\phi$ is indeed a chain map; second $\phi$ commutes with the group action. To see $\phi$ commutes with the boundary maps, take the $X$-stabilizer $q_1^{(0)} \otimes \omega_1(a_2)$ as an example, in additive notation,
\begin{align*}
    q_1^{(0)} \otimes \omega_1(a_2) &\mapsto \sum_{b_1\ni q_1} \omega_2(b_1) \otimes \omega_1(a_2) + \sum_{q_2\in a_2} q_1^{(0)} \otimes q_2^{(0)}\\
    & \overset{\phi^0}{\mapsto}
    \sum_{b_1\ni q_1} (b_1\otimes a_2)^{(0)} + \sum_{q_2\in a_2} (q_1\otimes q_2)^{(0)}\\
    & = \omega_1(q_1\otimes a_2)\\
    & = \phi^{-1}(q_1^{(0)} \otimes \omega_1(a_2)).
\end{align*}
Similarly for the other stabilizers. Finally, we need to show $\phi$ commutes with the group action. Take the $X$-stabilizer $q_1^{(0)} \otimes \omega_1(a_2)$ as an example again, first act by $g$, then apply $\phi^{-1}$ gives
\begin{align*}
    (q_1^{(0)} \otimes \omega_1(a_2)) \cdot g &= 
    q_1^{(0)} \cdot g \otimes g^{-1} \cdot \omega_1(a_2)\\
    &= q_1^{(0)} \cdot g \otimes \omega_1(g^{-1} \cdot  a_2)\\
    & \overset{\phi^{-1}}{\mapsto} \omega_1(q_1 \cdot g \otimes g^{-1} \cdot  a_2).
\end{align*}
On the other hand, first apply $\phi^{-1}$, then act by $g$ gives
\begin{align*}
    \phi^{-1}(q_1^{(0)} \otimes \omega_1(a_2)) \cdot g &= \omega_1(q_1 \otimes a_2) \cdot g\\
    &= \omega_1((q_1 \otimes a_2)\cdot g) \\
    &= \omega_1(q_1 \cdot g \otimes g^{-1} \cdot a_2).
\end{align*}
Hence, we see that $\phi^{-1}$ indeed preserves $G$ action, and similarly for $\phi^{0}$ and $\phi^{1}$. As a result, we see that
\begin{align*}
    \hspace{-10pt}
    (\text{CSS}_1 \otimes \text{CSS}_2) 
    \oplus 
    (\text{CSS}_1^\vee \otimes \text{CSS}_2^\vee)\Big /G = \omega(\text{CSS}_1 \otimes_{\ee-\mm} \text{CSS}_2). 
\end{align*}
Using the same argument, we can show that $(\text{CSS}_1^\vee \otimes \text{CSS}_2) \oplus (\text{CSS}_1 \otimes \text{CSS}_2^\vee)$ is the doubling of CSS$_1^\vee \otimes$ CSS$_2$, the quotient of which gives $\omega(\text{CSS}_1^\vee \otimes_{\ee-\mm} \text{CSS}_2) $. Therefore, we conclude that
\begin{align*}
    \omega(\text{CSS}_1) \otimes_G \omega(\text{CSS}_2) = &
    \omega(\text{CSS}_1 \otimes_{\ee-\mm} \text{CSS}_2)\oplus
    \omega(\text{CSS}_1^\vee \otimes_{\ee-\mm} \text{CSS}_2). 
\end{align*}

We can go one step further. If there exists a $g\in G$ such that $\gamma(g) = 1$ (that is $U_g = \mathfrak{h}$), then the second line of Eq.~\eqref{eqn:double_prod} is isomorphic to the first line. To see this, pick such a $g$ with $\gamma(g) = 1$, then $x \mapsto x\cdot g$ for $x \in A_1$, $B_1$ or $Q_1$ defines an isomorphism between CSS$_1$ and CSS$_1^\vee$. It follows that the following mapping 
\begin{align*}
    & x_1 \otimes x_2 \mapsto x_1 \cdot g \otimes x_2\,,
    \hspace{25pt}
    x_i \in A_i, \, B_i,\, Q_i,
\end{align*}
defines an isomorphism CSS$_1 \otimes $ CSS$_2$ $\rightarrow$ CSS$_1^\vee \otimes $ CSS$_2$. If we further assume $g \in Z(G)$, then this mapping commutes with the $G$ action, and extends to a $G$-module isomorphism. Since the quotient commutes with the symplectic doubling, we have that
\begin{align*}
    \omega(\text{CSS}_1) \otimes_G \omega(\text{CSS}_2) \cong
    \omega^2(\text{CSS}_1 \otimes_{\ee-\mm} \text{CSS}_2).
\end{align*}
In particular, this holds when $G$ is abelian with some $g$ such that $\gamma(g) = 1$.

\section{Details for Balanced Coupled-Layer Codes}
\label{app:balancedcommute}
In this appendix, we show that the condensation terms Eq.~\eqref{eq:Atermbalanced} mutually commute, and therefore give a well-defined code switching.

First, we show that $\{\textit{\textbf{A}}(e,\tilde f) \,|\, e\in \mathcal{E}^{(\tilde f)}\}$ are mutually commuting. For simplicity, let us first take the excitations to be single Pauli operators, as in the case of tensor product. That is, we set $\{\mathcal{E}^{(\tilde f)}\} = \{P_{q_1}\,|\, q_1\in Q_1 \}$, where $P$ is the Pauli operator of type ty$(\tilde{ \bm f})$. Then the code switching terms are $\bm A(q_1, \tilde f)$.

\begin{lemma}
    The supports of $\bm A(q_1, \tilde f)$ and $\bm A(q_1', \tilde f)$ overlap in layers $\tilde q_2$ if and only if there exists a unique $g\in G$ such that $q_1' = q_1\cdot g$, and $\supp{\tilde {\bm f}} \cap \supp{\mathcal{U}_g(\tilde { \bm f})} \neq \emptyset$.
\end{lemma}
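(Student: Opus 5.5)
We write out the supports of the two code-switching terms in the stacked layers and compare them qubit by qubit. With single-Pauli excitations, Eq.~\eqref{eq:Atermbalanced} reads
\begin{align*}
\bm A(q_1,\tilde f) = \bm\Gamma^{(\tilde f)}(P_{q_1}) \prod_{q_2\in\tilde f} \mathcal{U}_{g_{q_2}}\big(P_{q_1}^{(\tilde q_2)}\big).
\end{align*}
The factor $\mathcal{U}_{g_{q_2}}(P_{q_1}^{(\tilde q_2)})$ is supported on the single site $(q_1\cdot g_{q_2},\tilde q_2)$. Here $\tilde q_2$ is the representative of the orbit of $q_2$, and $q_2 = g_{q_2}\cdot\tilde q_2$. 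So the support of $\bm A(q_1,\tilde f)$ in the layers is the set $\{(q_1\cdot g_{q_2},\tilde q_2)\,|\,q_2\in\tilde f\}$. The analogous set for $\bm A(q_1',\tilde f)$ is $\{(q_1'\cdot g_{q_2'},\tilde q_2')\,|\,q_2'\in\tilde f\}$.

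\emph{Forward direction.} Suppose the two supports share a layer site. Then there are $q_2,q_2'\in\tilde f$ with $\tilde q_2=\tilde q_2'$ and $q_1\cdot g_{q_2} = q_1'\cdot g_{q_2'}$. Set $g := g_{q_2}g_{q_2'}^{-1}$, so that $q_1' = q_1\cdot g$. Since $G$ acts freely on $Q_1$, this $g$ is the unique element with $q_1'=q_1\cdot g$. This settles the uniqueness claim: uniqueness comes from freeness and not from the choice of the pair $(q_2,q_2')$.

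It remains to show that $\tilde{\bm f}$ and $\mathcal U_g(\tilde{\bm f})$ overlap. We have $q_2'=g_{q_2'}\cdot\tilde q_2$ and $q_2 = g_{q_2}\cdot\tilde q_2$, hence $q_2 = g_{q_2}g_{q_2'}^{-1}\cdot q_2' = g\cdot q_2'$. Therefore $q_2\in\supp{\tilde{\bm f}}\cap\supp{g\cdot\tilde{\bm f}}$. Because the unitary part of the action does not change supports, $\supp{\mathcal U_g(\tilde{\bm f})} = g\cdot\supp{\tilde{\bm f}}$, and this intersection is nonempty.

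\emph{Converse.} Suppose $q_1'=q_1\cdot g$ and $q_2\in\supp{\tilde{\bm f}}\cap g\cdot\supp{\tilde{\bm f}}$. Write $q_2 = g\cdot q_2'$ with $q_2'\in\tilde f$. Then $q_2$ and $q_2'$ lie in the same orbit, so they share a representative $\tilde q_2$, and $g_{q_2} = g\,g_{q_2'}$ by freeness of the left action on $Q_2$. It follows that
\begin{align*}
q_1'\cdot g_{q_2'} = q_1\cdot g\, g_{q_2'} = q_1\cdot g_{q_2},
\end{align*}
so the two terms have a common site $(q_1\cdot g_{q_2},\tilde q_2)$.

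The main obstacle is keeping the left/right conventions consistent. The stacked code carries a right action, $q_1\mapsto q_1\cdot g$, while $\mathcal F$ carries a left action, $q_2=g_{q_2}\cdot\tilde q_2$. I would therefore first fix the identity $g_{g\cdot q_2}=g\,g_{q_2}$, and check it against the right-action composition law $(q_1\cdot h)\cdot h' = q_1\cdot(hh')$, so that the element $g$ produced in the forward direction is the same one appearing in the converse. If the conventions come out reversed, the statement should be read with $g^{-1}$, and I would note this explicitly. I would also remark that, if one wants to include the auxiliary layer $Q^{(\tilde f)}$ in the comparison, overlaps there are irrelevant to the claim, which concerns only the layers $\tilde q_2$.
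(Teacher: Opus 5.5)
Your proof is correct and follows essentially the same route as the paper's. You compare the layer sites $(q_1\cdot g_{q_2},\tilde q_2)$, produce $g$ with $q_2=g\cdot q_2'$ and $g_{q_2}=g\,g_{q_2'}$, and get uniqueness from freeness; the only difference is that you extract $g=g_{q_2}g_{q_2'}^{-1}$ from the $Q_1$ relation first, whereas the paper starts from the $Q_2$ orbit relation. Your concern about left/right conventions is unnecessary, because your own computation already shows that this single $g$ satisfies both $q_1'=q_1\cdot g$ and $q_2=g\cdot q_2'$.
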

\begin{proof}
    If $q_1 = q_1'$, then the statement is trivial. Let us assume $q_1 \neq q_1'$. 
    
    \noindent $(\Longrightarrow)$ direction: if $\supp{\bm A(q_1, \tilde f)}\cap \supp{\bm A(q_1', \tilde f)} \neq \emptyset$, then there exists $q_2$, $q_2'\in \tilde f$, such that $[q_2] = [q_2'] = [\tilde q_2]$, and $q_1\cdot g_{q_2} = q_1'\cdot g_{q_2'}$. In this way, the overlap is between
    \begin{align*}
        U^\dagger_{g_{q_2}}P^{(\tilde q_2)}_{q_1\cdot g_{q_2}} U_{g_{q_2}} \in \bm A(q_1, \tilde f)
        \hspace{15pt}
        \text{and}
        \hspace{15pt}
        U^\dagger_{g_{q_2'}}
        P^{(\tilde q_2)}_{q_1'\cdot g_{q_2'}} U_{g_{q_2'}} \in \bm A(q_1', \tilde f).
    \end{align*}
   
    Since $[q_2] = [q_2']$, there exists $g\in G$ such that $q_2 = g\cdot q_2'$, therefore $\supp{\tilde{ \bm f}} \cap \supp{\mathcal{U}_g( \tilde{ \bm f})} \neq \emptyset$. Moreover, we have $g_{q_2} = gg_{q_2'}$, therefore $(q_1\cdot g)\cdot g_{q_2'} = q_1'\cdot g_{q_2'}$. Acting by $g_{q_2'}^{-1}$, we have $q_1' = q_1\cdot g$.

    \noindent $(\Longleftarrow)$ direction: if $\supp{\tilde {\bm f}} \cap \supp{\mathcal{U}_g( \tilde {\bm f})} \neq \emptyset$, then there exists $q_2$, $q_2' \in \tilde f$ such that $q_2 = g\cdot q_2'$, so that $g_{q_2} = gg_{q_2'}$. In which case, $\bm A(q_1, \tilde f)$ and $\bm A(q_1', \tilde f)$ overlap at
    \begin{align*}
        U^\dagger_{gg_{q_2'}}P^{(\tilde q_2)}_{q_1\cdot gg_{q_2'}} U_{gg_{q_2'}} \in \bm A(q_1, \tilde f)
        \hspace{15pt}
        \text{and}
        \hspace{15pt}
        U^\dagger_{g_{q_2'}}
        P^{(\tilde q_2)}_{q_1'\cdot g_{q_2'}} U_{g_{q_2'}} \in \bm A(q_1', \tilde f).
    \end{align*}
    Finally, the group element satisfying $q_1' = q_1 \cdot g$ is unique because the group action is free.
\end{proof}

Consider now the commutation relation between $\bm A(q_1, \tilde f)$ and $\bm A(q_1', \tilde f)$. If their supports have no overlap in layers $\tilde q_2$, then they commute trivially. Otherwise, by the lemma above, there exists a unique $g$, such that $q_1' = q_1 \cdot g$ and $\supp{\tilde {\bm f}} \cap \supp{\mathcal{U}_g(\tilde {\bm f})} \neq \emptyset$. Find all the pairs $\{(q_{(n)}, \,g\cdot q_{(n)}) \,|\, n = 1, \cdots, N\} \subset \tilde f\times \tilde f$. Since $\tilde {\bm f}$ commutes with $\mathcal{U}_g(\tilde {\bm f})$, they must commute on the overlap, that is, 
\begin{align*}
    \left.\tilde {\bm f}\,\cdot \mathcal{U}_g (\tilde {\bm f})\right|_{\supp{\tilde {\bm f}} \cap \supp{\mathcal{U}_g (\tilde {\bm f})}} &= \prod_{n=1}^N P_{g\cdot q_{(n)}}
    \prod_{n=1}^N U_g P_{g\cdot q_{(n)}}U_g^\dagger
    \\
    &= 
    \prod_{n=1}^N P_{g\cdot q_{(n)}}
    \left(U_g P_{g\cdot q_{(n)}}U_g^\dagger\right) 
    \\
    &=
    \left( \prod_{n=1}^N \phi_g \right)
    \prod_{n=1}^N 
    \left(
    U_g P_{g\cdot q_{(n)}}U_g^\dagger
    \right)
    P_{g\cdot q_{(n)}}
    \\
    &=
    \prod_{n=1}^N 
    \left(U_g P_{g\cdot q_{(n)}}U_g^\dagger\right)
    P_{g\cdot q_{(n)}}\\
    &= \left.\,\mathcal{U}_g (\tilde {\bm f}) \cdot \tilde{\bm  f}\right|_{\supp{\tilde {\bm f}} \cap \supp{\mathcal{U}_g (\tilde {\bm f})}},
\end{align*}
where $P_{g\cdot q_{(n)}}$ is the Pauli $P$ acting on qubit $q_{(n)}$, and $\phi_g$ is the phase by commuting $P$ and $U_g P U_g^\dagger$. In terms of this set, the overlap and commutation relation between $\bm A(q_1, \tilde f)$ and $\bm A(q_1', \tilde f)$ is
\begin{align*}
    \left.
    \bm A(q_1, \tilde f) \bm A(q_1', \tilde f)
    \right|_{\supp{\bm A(q_1, \tilde f)}\cap \supp{\bm A(q_1', \tilde f)}}
    =&
    \prod_{n=1}^N 
    \left(
    U_{gg_{q_{(n)}}}^\dagger
    P^{(\tilde q_{(n)})}_{q_1\cdot gg_{q_{(n)}}} U_{gg_{q_{(n)}}} 
    \right)
    \prod_{n=1}^N 
    \left(
    U_{g_{q_{(n)}}}^\dagger P^{(\tilde q_{(n)})}_{q_1'\cdot g_{q_{(n)}}} U_{g_{q_{(n)}}} 
    \right)\\
    =&
    \prod_{n=1}^N 
    \left(
    U_{gg_{q_{(n)}}}^\dagger
    P^{(\tilde q_{(n)})}_{q_1\cdot gg_{q_{(n)}}} U_{gg_{q_{(n)}}} 
    \right)
    \left(
    U_{g_{q_{(n)}}}^\dagger P^{(\tilde q_{(n)})}_{q_1'\cdot g_{q_{(n)}}} U_{g_{q_{(n)}}} 
    \right)\\
    =&
    \prod_{n=1}^N 
    U_{gg_{q_{(n)}}}^\dagger
    \left[
    P^{(\tilde q_{(n)})}_{q_1\cdot gg_{q_{(n)}}} 
    \left( 
    U_g 
    P^{(\tilde q_{(n)})}_{q_1'\cdot g_{q_{(n)}}}
    U_g^\dagger
    \right)
    \right]
    U_{gg_{q_{(n)}}}\\
    =& \left( \prod_{n}^N \phi_g \right)
    \prod_{n=1}^N 
    U_{gg_{q_{(n)}}}^\dagger
    \left[ 
    \left( 
    U_g 
    P^{(\tilde q_{(n)})}_{q_1'\cdot g_{q_{(n)}}}
    U_g^\dagger
    \right)
    P^{(\tilde q_{(n)})}_{q_1\cdot gg_{q_{(n)}}}
    \right]
    U_{gg_{q_{(n)}}}\\
    =&
    \prod_{n=1}^N 
    \left(
    U_{g_{q_{(n)}}}^\dagger P^{(\tilde q_{(n)})}_{q_1'\cdot g_{q_{(n)}}} U_{g_{q_{(n)}}}
    \right)
    \left(
    U_{gg_{q_{(n)}}}^\dagger
    P^{(\tilde q_{(n)})}_{q_1\cdot gg_{q_{(n)}}} U_{gg_{q_{(n)}}} 
    \right)\\
    =&
    \left.
    \bm A(q_1', \tilde f) \bm A(q_1, \tilde f)
    \right|_{\supp{\bm A(q_1, \tilde f)}\cap \supp{\bm A(q_1', \tilde f)}}.
\end{align*}
This shows $\{\bm A(q_1, \tilde f) \,|\, q_1\in Q_1\}$ is a commuting set if we choose the excitations to be single Pauli operators.

For a general set of excitations $\mathcal{E}^{(\tilde f)}$, the assumption that $\{\supp{\bm e} \,|\, e\in \mathcal{E}^{(\tilde f)}\}$ are mutually disjoint and permuted freely by $G$ essentially says they behave in the same way as single Pauli operators under the group action, so that we have a similar lemma,

\begin{lemma}
    The supports of $\bm A(e, \tilde f)$ and $\bm A(e', \tilde f)$ overlap in layers $\tilde q_2$ if and only if there exists a unique $g\in G$ such that $\supp{\bm e'} = \supp{\bm e}\cdot g$, and $\supp{\tilde {\bm f}} \cap \supp{\mathcal{U}_g(\tilde {\bm f})} \neq \emptyset$.
\end{lemma}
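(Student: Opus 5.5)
The plan is to run the proof of the single-Pauli lemma above essentially verbatim, replacing each qubit $q_1$ by the support $\supp{\bm e}$ and each equality $q_1\cdot g_{q_2} = q_1'\cdot g_{q_2'}$ by a nonempty intersection of supports. The two hypotheses on $\mathcal{E}^{(\tilde f)}$ carry the weight. Because the supports $\{\supp{\bm e}\}$ are pairwise disjoint and permuted freely by $G$, they behave like the points of a free $G$-set. Consequently, a nonempty intersection $\supp{\bm e}\cdot h \cap \supp{\bm e'}\cdot h' \neq \emptyset$ forces the equality $\supp{\bm e}\cdot h = \supp{\bm e'}\cdot h'$, and hence $\supp{\bm e'} = \supp{\bm e}\cdot hh'^{-1}$. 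The case $\bm e=\bm e'$ (so $g=1$) is trivial, so we assume $\bm e\neq\bm e'$.

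For the $(\Longrightarrow)$ direction, an overlap in the layers $\tilde q_2$ means there exist $q_2,q_2'\in\tilde f$ with $[q_2]=[q_2']=[\tilde q_2]$ such that $\mathcal{U}_{g_{q_2}}(\bm e^{(\tilde q_2)})$ and $\mathcal{U}_{g_{q_2'}}(\bm e'^{(\tilde q_2)})$ share a qubit. This gives $\supp{\bm e}\cdot g_{q_2}\cap\supp{\bm e'}\cdot g_{q_2'}\neq\emptyset$, and by the disjointness remark the two supports are equal. Since $q_2$ and $q_2'$ lie in the same orbit and the action on $Q_2$ is free, there is a unique $g$ with $q_2 = g\cdot q_2'$. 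Then $g_{q_2}=gg_{q_2'}$, so $q_2\in \supp{\tilde{\bm f}}\cap\supp{\mathcal{U}_g(\tilde{\bm f})}$. Cancelling $g_{q_2'}$ on the right gives $\supp{\bm e'}=\supp{\bm e}\cdot g$.

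For the $(\Longleftarrow)$ direction, we pick $q_2,q_2'\in\tilde f$ with $q_2=g\cdot q_2'$, which again gives $g_{q_2}=gg_{q_2'}$. Then the factors $\mathcal{U}_{gg_{q_2'}}(\bm e^{(\tilde q_2)})$ and $\mathcal{U}_{g_{q_2'}}(\bm e'^{(\tilde q_2)})$ are supported on $\supp{\bm e}\cdot g g_{q_2'} = \supp{\bm e'}\cdot g_{q_2'}$ in the same layer. Since $\bm e'$ is a nontrivial excitation, this set is nonempty, so the two terms overlap. Uniqueness of $g$ follows from freeness of the $G$-action on the supports: if $\supp{\bm e}\cdot g=\supp{\bm e}\cdot g'$, then $g=g'$.

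The main obstacle is bookkeeping rather than ideas. One must check that the unitary part $U_g$ does not change supports, which holds because single-qubit Cliffords map non-identity Paulis to non-identity Paulis, so that $\supp{\mathcal{U}_g(\bm O)}=\supp{\bm O}\cdot g$. One must also be careful to distinguish the right action on $Q_1$ from the left action on $Q_2$ when deriving $g_{q_2}=gg_{q_2'}$. I would state the support identity for $\mathcal{U}_g$ and the free-$G$-set consequence of disjointness as two preliminary remarks, after which both directions are the two short computations above.
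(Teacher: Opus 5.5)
Your proposal is correct and follows the same route as the paper. The paper's proof just says to rerun the single-Pauli argument with $P_{q_1}, P_{q_1'}$ replaced by $\bm e, \bm e'$, using that distinct excitations have disjoint supports, and your two preliminary remarks spell out what it leaves implicit: disjoint, freely permuted supports behave like points of a free $G$-set, so $\supp{\bm e}\cdot h\cap\supp{\bm e'}\cdot h'\neq\emptyset$ forces equality, and $\supp{\mathcal{U}_g(\bm O)}=\supp{\bm O}\cdot g$.
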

\begin{proof}
    In the proof above, replace $P_{q_1}$ and $P_{q_1'}$ by $\bm e$ and $\bm e'$ respectively and note that different excitations $\bm e$ and $\bm e'$ have disjoint support.
\end{proof}

Then essentially the same calculation for $\bm A(e, \tilde f)$ and $\bm A(e', \tilde f)$ goes through. If their supports have no overlap, then they commute trivially. Otherwise, there exists a unique $g$, such that $\supp{\bm e'} = \supp{\bm e} \cdot g$ and $\supp{\tilde {\bm f}} \cap \supp{\mathcal{U}_g( \tilde {\bm f})} \neq \emptyset$. Find all the pairs $\{(q_{(n)}, \,g\cdot q_{(n)}) \,|\, n = 1, \cdots, N\} \subset \tilde f\times \tilde f$. 
\begin{align*}
    \left.
    \bm A(e, \tilde f)
    \bm A(e', \tilde f)
    \right|_{\supp{\bm A(e, \tilde f)}\cap \supp{\bm A(e', \tilde f)}}
    =& \prod_{n=1}^N  \mathcal{U}_{gg_{q_{(n)}}}
    (\bm e^{(\tilde q_{(n)})})
    \prod_{n=1}^N \mathcal{U}_{g_{q_{(n)}}}(
    \bm e'^{(\tilde q_{(n)})})\\
    =&\prod_{n=1}^N  \mathcal{U}_{gg_{q_{(n)}}}(\bm e^{(\tilde q_{(n)})})
    \,\mathcal{U}_{g_{q_{(n)}}}(
    \bm e'^{(\tilde q_{(n)})})
    \\
    =&
    \prod_{n=1}^N 
    \left(
    \prod_{q_1\in e} U_{gg_{q_{(n)}}}^\dagger
    P^{(\tilde q_{(n)})}_{q_1\cdot gg_{q_{(n)}}} U_{gg_{q_{(n)}}} 
    \right)
    \left(
    \prod_{q_1'\in e'}U_{g_{q_{(n)}}}^\dagger P^{(\tilde q_{(n)})}_{q_1'\cdot g_{q_{(n)}}} U_{g_{q_{(n)}}}
    \right)\\
    =&
    \prod_{n=1}^N 
    U_{gg_{q_{(n)}}}^\dagger
    \left[
    \prod_{q_1\in e'}
    P^{(\tilde q_{(n)})}_{q_1\cdot g_{q_{(n)}}} 
    \left( 
    U_g 
    P^{(\tilde q_{(n)})}_{q_1\cdot g_{q_{(n)}}}
    U_g^\dagger
    \right)
    \right]
    U_{gg_{q_{(n)}}}\\
    =& \left( \prod_{n}^N \phi_g \right)^{|\supp{\bm e'}|}
    \prod_{n=1}^N 
    U_{gg_{q_{(n)}}}^\dagger
    \left[ \prod_{q_1\in e'}
    \left( 
    U_g 
    P^{(\tilde q_{(n)})}_{q_1\cdot g_{q_{(n)}}}
    U_g^\dagger
    \right)
    P^{(\tilde q_{(n)})}_{q_1\cdot g_{q_{(n)}}} 
    \right]
    U_{gg_{q_{(n)}}}\\
    =&
    \prod_{n=1}^N 
    \left(
    \prod_{q_1'\in e'}U_{g_{q_{(n)}}}^\dagger P^{(\tilde q_{(n)})}_{q_1'\cdot g_{q_{(n)}}} U_{g_{q_{(n)}}}
    \right)
    \left(
    \prod_{q_1\in e} U_{gg_{q_{(n)}}}^\dagger
    P^{(\tilde q_{(n)})}_{q_1\cdot gg_{q_{(n)}}} U_{gg_{q_{(n)}}} 
    \right)\\
    =&\prod_{n=1}^N 
    \mathcal{U}_{g_{(n)}}(\bm e'^{(\tilde q_{(n)})})
    \,
    \mathcal{U}_{gg_{(n)}}( \bm e^{(\tilde q_{(n)})})\\
    =&
    \left.
    \bm A(e', \tilde f)
    \bm A(e, \tilde f)
    \right|_{\supp{\bm A(e, \tilde f)}\cap \supp{\bm A(e', \tilde f)}}
\end{align*}
Where from first line to the second line, we expanded the operator 
\begin{align*}
    \mathcal{U}_g(\bm e) = \prod_{q_1\in e}U_g^\dagger P_{q_1 \cdot g} U_g
\end{align*}
From the second line to the third line, since $\supp{\bm e}
$ and $\supp{\bm e'}$ are in bijection via $\supp{\bm e'} = \supp{\bm e} \cdot g$, we have collapsed the two products into one which ranges over $q_1\in e'$. In the fourth line, $|\supp{\bm e'}|$ is the size of the support of $\bm e'$.

Now we consider the commutation relation between $\bm A(e, \tilde f )$ for different $\tilde f$. Again, we first consider the case where the excitations are single Pauli operators. This calculation is essentially identical to that of the last section. Let $P$ and $P'$ be the Pauli operator of type ty($\tilde {\bm f}$) and ty($\tilde {\bm f}'$) respectively. The excitations we choose are $\mathcal{E}^{(\tilde f)} = \{P_{q_1} \,|\, q_1\in Q_1\}$ and $\mathcal{E}^{(\tilde f')} = \{P'_{q_1} \,|\, q_1\in Q_1\}$.

\begin{lemma}
    The supports of $\bm A(q_1,\tilde f)$ and $\bm A(q_1', \tilde f')$ overlap if and only if there exists a unique $g\in G$ such that $q_1' = q_1\cdot g$, and $\supp{\tilde {\bm f}} \cap \supp{\mathcal{U}_g( \tilde {\bm f}')} \neq \emptyset$.
\end{lemma}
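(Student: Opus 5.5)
The plan is to mirror the proof of the earlier lemma for $\bm A(q_1,\tilde f)$ and $\bm A(q_1',\tilde f)$, now with two different stabilizer representatives $\tilde f$ and $\tilde f'$. The only change is that the overlap of $G$-translates is taken between the two distinct representatives. First note that the auxiliary parts $\bm{\Gamma}^{(\tilde f)}(P_{q_1})$ and $\bm{\Gamma}^{(\tilde f')}(P'_{q_1'})$ act on disjoint auxiliary Hilbert spaces $Q^{(\tilde f)}$ and $Q^{(\tilde f')}$ when $\tilde f\neq\tilde f'$. So any overlap must occur in the stacked layers $Q_1\otimes (G\backslash Q_2)$. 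By the definition in Eq.~\eqref{eq:Atermbalanced}, the layer part of $\bm A(q_1,\tilde f)$ is supported on the sites $(q_1\cdot g_{q_2},\tilde q_2)$ for $q_2\in\tilde f$. Likewise, the layer part of $\bm A(q_1',\tilde f')$ is supported on the sites $(q_1'\cdot g_{q_2'},\tilde q_2')$ for $q_2'\in\tilde f'$.

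For the ($\Longrightarrow$) direction, suppose the two operators share a site. Then there exist $q_2\in\tilde f$ and $q_2'\in\tilde f'$ with $[q_2]=[q_2']=[\tilde q_2]$ and $q_1\cdot g_{q_2}=q_1'\cdot g_{q_2'}$. Since the left $G$-action on $Q_2$ is free, there is a unique $g$ with $q_2=g\cdot q_2'$, and therefore $g_{q_2}=g\,g_{q_2'}$. From this:
\begin{itemize}
\item $q_2$ lies in $\supp{\tilde{\bm f}}$, and it also lies in the support of the translate $\mathcal{U}_g(\tilde{\bm f}')$, because that support is $g\cdot\supp{\tilde{\bm f}'}$. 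Hence $\supp{\tilde{\bm f}}\cap\supp{\mathcal{U}_g(\tilde{\bm f}')}\neq\emptyset$.
\item Acting with $g_{q_2'}^{-1}$ on $q_1\cdot g\,g_{q_2'}=q_1'\cdot g_{q_2'}$ gives $q_1'=q_1\cdot g$.
\end{itemize}

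For the ($\Longleftarrow$) direction, suppose $q_1'=q_1\cdot g$ and $\supp{\tilde{\bm f}}\cap\supp{\mathcal{U}_g(\tilde{\bm f}')}\neq\emptyset$. Pick a qubit $q_2$ in this intersection and set $q_2'=g^{-1}\cdot q_2\in\tilde f'$. Then $g_{q_2}=g\,g_{q_2'}$. Both operators then act on the layer site $(q_1\cdot g\,g_{q_2'},\tilde q_2)=(q_1'\cdot g_{q_2'},\tilde q_2)$, through the factors $U^\dagger_{g_{q_2}}P_{\cdot}U_{g_{q_2}}$ and $U^\dagger_{g_{q_2'}}P'_{\cdot}U_{g_{q_2'}}$ respectively. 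Uniqueness of $g$ follows from freeness of the right action on $Q_1$, since $q_1'=q_1\cdot g$ determines $g$.

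The main point of care is the bookkeeping of representatives. We must check that $g_{q_2}=g\,g_{q_2'}$ holds with the paper's conventions ($q_2=g_{q_2}\cdot\tilde q_2$, and a left action on $Q_2$ but a right action on $Q_1$). We also need that the support of $\mathcal{U}_g(\tilde{\bm f}')$ is exactly $g\cdot\supp{\tilde{\bm f}'}$; this holds because the unitary component is onsite and does not change supports. The degenerate case $q_1=q_1'$ needs no separate treatment: it corresponds to $g=e$, and the statement then reduces to $\supp{\tilde{\bm f}}\cap\supp{\tilde{\bm f}'}\neq\emptyset$, which the same argument covers.
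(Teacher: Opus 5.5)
Your proposal is correct and follows the paper's proof essentially step for step. Both directions rest on writing $q_2 = g\cdot q_2'$, deducing $g_{q_2}=g\,g_{q_2'}$, and using freeness of the actions for uniqueness; your remark that the auxiliary parts live on the disjoint spaces $Q^{(\tilde f)}$ and $Q^{(\tilde f')}$ when $\tilde f\neq\tilde f'$ only makes explicit what the paper leaves implicit.
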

\begin{proof}
    The proof is the same as before. 

    \noindent ($\Longrightarrow$) direction: if $\bm A(q_1,\tilde f)$ and $\bm A(q_1', \tilde f')$ have overlap, then there must exist $q_2\in \tilde f$ and $q_2'\in \tilde f'$, such that $[q_2] = [q_2'] = [\tilde q_2]$ (so that $q_2 = g\cdot q_2'$ for some $g\in G$), and $q_1\cdot g_{q_2} = q_1'\cdot g_{q_2'}$. In which case the overlap is between
    \begin{align*}
        U_{g_{q_2}}^\dagger  P^{(\tilde q_2)}_{q_1 \cdot g_{q_2}}
        U_{g_{q_2}} \in \bm A(q_1, \tilde f)
        \hspace{15pt}
        \text{and}
        \hspace{15pt}
        U_{g_{q'_2}}^\dagger  P'^{(\tilde q_2)}_{q'_1 \cdot g_{q'_2}}
        U_{g_{q'_2}} \in \bm A(q'_1, \tilde f').
    \end{align*}
    Since $g\cdot q_2' \in \supp{\mathcal{U}_g( \tilde {\bm f'})} \cap \supp{\tilde {\bm f}}$, we see the intersection is not empty. Since $g_{q_2} = gg_{q_2'}$, we see $q_1\cdot g = q_1'$.

    \noindent ($\Longleftarrow$) direction: Pick $q_2 \in \supp{\tilde {\bm f}} \cap \supp{\mathcal{U}_g(\tilde {\bm f'})}$, so that there exists $q_2' \in \tilde f'$ and $q_2 = g\cdot q_2'$. This gives $g_{q_2} = g g_{q'_2}$. We see there is an overlap between $\bm A(q_1,\tilde f)$ and $\bm A(q_1', \tilde f')$ on
    \begin{align*}
        U_{gg_{q'_2}}^\dagger  P^{(\tilde q_2)}_{q_1 \cdot gg_{q'_2}}
        U_{gg_{q'_2}} \in \bm A(q_1, \tilde f)
        \hspace{15pt}
        \text{and}
        \hspace{15pt}
        U_{g_{q'_2}}^\dagger  {P'}^{(\tilde q_2)}_{q'_1 \cdot g_{q'_2}}
        U_{g_{q'_2}} \in \bm A(q'_1, \tilde f'),
    \end{align*} where
    $g$ is again unique because the action is free.
\end{proof}

\noindent Now we consider the commutation relation between $\bm A(q_1, \tilde f)$ and $\bm A(q'_1, \tilde f')$. If they have no overlap, then they commute trivially. Otherwise, there exists $g\in G$ such that $q_1' = q_1\cdot g$, and $\supp{\tilde {\bm f}} \cap \supp{\mathcal{U}_g( \tilde {\bm f'})} \neq \emptyset$. Find all pairs $\{(q_{(n)}, \,g\cdot q_{(n)}) \,|\, n = 1, \cdots, N\} \subset \tilde f' \times \tilde f$. Since $\tilde {\bm f}$ commutes with $\mathcal{U}_g( \tilde {\bm f'})$, they must commute on their overlap, that is,
\begin{align*}
    \left.\tilde {\bm f}\,
    \cdot
    \mathcal{U}_g(\tilde{\bm f}')\right|_{\supp{\tilde {\bm f}} \cap \supp{\mathcal{U}_g(\tilde{\bm f}')}}&=\prod_{n=1}^N P_{g\cdot q_{(n)}}
    \prod_{n=1}^N  U_g P'_{g\cdot q_{(n)}}U_g^\dagger\\
    &=\prod_{n=1}^N P_{g\cdot q_{(n)}}
    \left(U_g P'_{g\cdot q_{(n)}}U_g^\dagger\right)\\
    &= \left( \prod_{n=1}^N \varphi_g \right)
    \prod_{n=1}^N 
    \left(
    U_g P'_{g\cdot q_{(n)}}U_g^\dagger
    \right)
    P_{g\cdot q_{(n)}}\\
    &=
    \prod_{n=1}^N 
    \left(U_g P'_{g\cdot q_{(n)}}U_g^\dagger\right)
    P_{g\cdot q_{(n)}}\\
    &= \left. \mathcal{U}_g(\tilde{\bm f}') \, \cdot \tilde {\bm f} \right|_{\supp{\tilde {\bm f}} \cap \supp{\mathcal{U}_g(\tilde{\bm f}')}},
\end{align*}
where $\varphi_g$ is the phase by commuting $P$ and $U_g P' U_g^\dagger$. Using this, the commutation between $\bm A(q_1, \tilde f)$ and $\bm A(q'_1, \tilde f')$ on the intersection of their support is
\begin{align*}
    \left.
    \bm A(q_1, \tilde f)
    \bm A(q_1', \tilde f')
    \right|_{\supp{\bm A(q_1, \tilde f)}\cap \supp{\bm A(q_1', \tilde f')}}
    =&\prod_{n=1}^N 
    \left(
    U_{gg_{q_{(n)}}}^\dagger
    P^{(\tilde q_{(n)})}_{q_1\cdot gg_{q_{(n)}}} U_{gg_{q_{(n)}}} 
    \right)
    \prod_{n=1}^N
    \left(
    U_{g_{q_{(n)}}}^\dagger {P'}^{(\tilde q_{(n)})}_{q_1'\cdot g_{q_{(n)}}} U_{g_{q_{(n)}}}
    \right)\\
    =& \prod_{n=1}^N 
    \left(
    U_{gg_{q_{(n)}}}^\dagger
    P^{(\tilde q_{(n)})}_{q_1\cdot gg_{q_{(n)}}} U_{gg_{q_{(n)}}} 
    \right)
    \left(
    U_{g_{q_{(n)}}}^\dagger {P'}^{(\tilde q_{(n)})}_{q_1'\cdot g_{q_{(n)}}} U_{g_{q_{(n)}}}
    \right)\\
    =&
    \prod_{n=1}^N 
    U_{gg_{q_{(n)}}}^\dagger
    \left[
    P^{(\tilde q_{(n)})}_{q_1\cdot gg_{q_{(n)}}} 
    \left( 
    U_g 
    {P'}^{(\tilde q_{(n)})}_{q_1'\cdot g_{q_{(n)}}}
    U_g^\dagger
    \right)
    \right]
    U_{gg_{q_{(n)}}}\\
    =& \left( \prod_{n}^N \varphi_g \right)
    \prod_{n=1}^N 
    U_{gg_{q_{(n)}}}^\dagger
    \left[ 
    \left( 
    U_g 
    {P'}^{(\tilde q_{(n)})}_{q_1'\cdot g_{q_{(n)}}}
    U_g^\dagger
    \right)
    P^{(\tilde q_{(n)})}_{q_1\cdot gg_{q_{(n)}}}
    \right]
    U_{gg_{q_{(n)}}}\\
    =&
    \prod_{n=1}^N 
    \left(
    U_{g_{q_{(n)}}}^\dagger {P'}^{(\tilde q_{(n)})}_{q_1'\cdot g_{q_{(n)}}} U_{g_{q_{(n)}}}
    \right)
    \left(
    U_{gg_{q_{(n)}}}^\dagger
    P^{(\tilde q_{(n)})}_{q_1\cdot gg_{q_{(n)}}} U_{gg_{q_{(n)}}} 
    \right)\\
    =&
    \left.
    \bm A(q_1', \tilde f')
    \bm A(q_1, \tilde f)
    \right|_{\supp{\bm A(q_1, \tilde f)}\cap \supp{\bm A(q_1', \tilde f')}}.
\end{align*}
This shows when the excitations are chosen to be single Pauli operators, $\bm A(q_1, \tilde f)$ commutes with $\bm A(q'_1, \tilde f')$.

Next we consider general excitations $\bm A(e, \tilde f)$ and $\bm A(e', \tilde f')$. This case is slightly different from previous ones, so we state the lemma slightly differently,
\begin{lemma}
    The pairs of terms in $\bm A(e, \tilde f)$ and $\bm A(e', \tilde f')$ that have non-empty overlapping supports are in one-to-one correspondence with pairs $\mathcal{K} :=\left\{(q_2, \, g\cdot q_2) \in \tilde f' \times \tilde f \,|\, \supp{\bm e'} \cap \supp{\mathcal{U}_g(\bm e)} \neq \emptyset \right\}$. Moreover, the overlapping regions between each pairs are mutually disjoint.
\end{lemma}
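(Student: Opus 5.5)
The plan is to expand both terms into their constituent single-layer pieces and match the pieces that can overlap. By Eq.~\eqref{eq:Atermbalanced}, the part of $\bm A(e,\tilde f)$ supported in the layers $Q_1\otimes (G\backslash Q_2)$ is the product over $p_2\in\tilde f$ of $\mathcal{U}_{g_{p_2}}(\bm e^{(\tilde p_2)})$, supported in layer $\tilde p_2$ on $\supp{\bm e}\cdot g_{p_2}$. Likewise, $\bm A(e',\tilde f')$ has pieces $\mathcal{U}_{g_{q_2}}(\bm e'^{(\tilde q_2)})$ for $q_2\in\tilde f'$. The auxiliary parts $\bm\Gamma^{(\tilde f)}(\bm e)$ and $\bm\Gamma^{(\tilde f')}(\bm e')$ live on different auxiliary spaces when $\tilde f\neq\tilde f'$. (If $\tilde f=\tilde f'$, this case is already covered by the preceding lemma.) So only the layer pieces need to be analyzed.

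First I would characterize when a pair of pieces overlaps. A pair $(p_2, q_2)\in\tilde f\times\tilde f'$ can overlap only if the two pieces sit in the same layer, i.e.\ $[p_2]=[q_2]$. Since the action is free, there is then a unique $g$ with $p_2=g\cdot q_2$, and hence $g_{p_2}=g\,g_{q_2}$. The pieces overlap exactly when $\supp{\bm e}\cdot g g_{q_2}$ meets $\supp{\bm e'}\cdot g_{q_2}$. Acting on the right by $g_{q_2}^{-1}$ (a bijection of $Q_1$), this is equivalent to $\supp{\bm e'}\cap\supp{\mathcal{U}_g(\bm e)}\neq\emptyset$, using that $\supp{\mathcal{U}_g(\bm e)}=\supp{\bm e}\cdot g$. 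This gives the map $(p_2,q_2)\mapsto(q_2,g\cdot q_2)\in\mathcal{K}$. The inverse sends $(q_2,g\cdot q_2)$ to the pair of pieces indexed by $g\cdot q_2\in\tilde f$ and $q_2\in\tilde f'$. These maps are mutually inverse, and the uniqueness of $g$ (freeness) guarantees the assignment is well defined. This is essentially the argument of the earlier single-Pauli lemma, with $P_{q_1}$ replaced by $\bm e$.

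Second, I would prove that the overlap regions for distinct elements of $\mathcal{K}$ are disjoint. Two elements of $\mathcal{K}$ whose overlaps lie in different layers are trivially disjoint, so suppose they lie in the same layer $\tilde q_2$. The overlap for $(q_2,g\cdot q_2)$ is contained in $\supp{\bm e'}\cdot g_{q_2}$, and the one for $(q_2',g'\cdot q_2')$ is contained in $\supp{\bm e'}\cdot g_{q_2'}$. If $q_2\neq q_2'$, then $g_{q_2}\neq g_{q_2'}$ because both points lie in the same orbit and the action is free. The standing assumption that supports of excitations are freely permuted by $G$ (together with mutual disjointness) then makes $\supp{\bm e'}\cdot g_{q_2}$ and $\supp{\bm e'}\cdot g_{q_2'}$ disjoint. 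If instead $q_2=q_2'$, the same argument applies on the $\bm e$ side, since $g g_{q_2}\neq g'g_{q_2}$ when $g\neq g'$.

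The main obstacle is exactly this disjointness step. I need to interpret "freely permuted" strongly enough that $\supp{\bm e}\cdot h\cap\supp{\bm e}=\emptyset$ for all $h\neq 1$, so that the translates of a single excitation are pairwise disjoint. I would state this reading explicitly as the condition from the definition of the excitations. I would also double-check the case where one $\bm e$-piece meets two different $\bm e'$-pieces: this is excluded because the $\bm e'$-pieces in a given layer are pairwise disjoint by the same reasoning.
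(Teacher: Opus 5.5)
Your proposal is correct and follows essentially the same route as the paper. You match layer pieces lying in the same orbit, use freeness to get the unique $g$ with $g_{p_2}=g\,g_{q_2}$, translate the overlap by $g_{q_2}^{-1}$ to land in $\mathcal{K}$, and get disjointness of the overlap regions from the mutually disjoint, freely permuted excitation supports; your split into $q_2\neq q_2'$ versus $q_2=q_2'$ just spells out the paper's one-line disjointness argument.

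One small wording point concerns your closing aside. A single $\bm e$-piece \emph{can} meet two different $\bm e'$-pieces in the same layer; this is the mirror image of the paper's later remark that $\mathcal{K}_g$ and $\mathcal{K}_h$ need not be disjoint. So that configuration is not excluded. It merely produces two disjoint overlap regions, which your case $q_2\neq q_2'$ already handles.
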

\begin{proof}
    If for $q_2 \in \tilde f$ and $q'_2 \in \tilde f'$, such that $\mathcal{U}_{g_{q_2}}(\bm e^{(\tilde q_2)})$ and $\mathcal{U}_{g_{q'_2}}(\bm e'^{(\tilde q'_2)})$ have overlapping supports, then we must have $[q_2] = [q'_2]$, and $\supp{\mathcal{U}_{g_{q_2}}(\bm e)} \cap \supp{\mathcal{U}_{g_{q'_2}}(\bm e') } \neq \emptyset$. Thus there exists a unique $g\in G$ (since the group action is free), such that $q_2 = g\cdot q'_2$, and $g_{q_2} = gg_{q'_2}$. Since the group action on qubits is free, $\supp{\mathcal{U}_{g g_{q'_2}}(\bm e)} \cap \supp{\mathcal{U}_{g_{q'_2}}(\bm e')} \neq \emptyset$ implies $\supp{\mathcal{U}_g(\bm e) } \cap \supp{\bm e'} \neq \emptyset$. Therefore we have $(q'_2, \, g\cdot q'_2) \in \mathcal{K}$.

    Conversely, given $(q_2, \, g\cdot q_2) \in \mathcal{K}$, the term $\mathcal{U}_{gg_{q_2} }(\bm e^{(\tilde q_2)})   \in \bm A(e, \tilde f)$ and $\mathcal{U}_{g_{q_2}}(\bm e'^{(\tilde q_2)}) \in \bm A(e', \tilde f')$ have non-empty overlap.

    These two maps are inverses of each other, hence they are bijections.

    Finally, in a fixed layer $\tilde q_2$, the overlap $\supp{\mathcal{U}_{g_{q_2}}(\bm e)} \cap \supp{\mathcal{U}_{g_{q'_2}}(\bm e') }$ and $\supp{\mathcal{U}_{g_{q''_2}}(\bm e)  } \cap \supp{\mathcal{U}_{g_{q'''_2}}(\bm e')}$ are disjoint because the supports of $\bm e \in \mathcal{E}^{(\tilde f)}$ or $\mathcal{E}^{(\tilde f')}$ are mutually disjoint, and the group acts freely on the supports.
\end{proof}

\noindent Note the difference is that there can be multiple $g$ such that the supports of $\bm e'$ and $\mathcal{U}_g(\bm e)$ have a non-empty overlap. Now let us assume there are only two group elements participate in $\mathcal{K}$, namely $g$ and $h$. The case where more than two group elements participate is completely analogous. Observe for $(q_2, \, g\cdot q_2)\in \mathcal{K}$ or $(q_2, \, h\cdot q_2)\in \mathcal{K}$, we have respectively
\begin{align*}
    &\supp{\mathcal{U}_{gg_{q_2}}(\bm e)} \cap \supp{\mathcal{U}_{g_{q_2}}(\bm e')} = \left(\supp{\mathcal{U}_g(\bm e)} \cap \supp{\bm e'} \right)\cdot g_{q_2},\\
    &\supp{\mathcal{U}_{hg_{q_2}}(\bm e) } \cap \supp{\mathcal{U}_{g_{q_2}}(\bm e')} = \left(\supp{\mathcal{U}_h(\bm e)} \cap \supp{\bm e'} \right)\cdot g_{q_2}.
\end{align*}
All of which are mutually disjoint by the lemma. Let $\mathcal{K}_g := \{q_2 \in \tilde f' \,|\, (q_2, \, g\cdot q_2) \in \mathcal{K}\}$, and similarly $\mathcal{K}_h := \{q_2 \in \tilde f' \,|\, (q_2, \, h\cdot q_2) \in \mathcal{K}\}$. They are not necessarily disjoint! Then we have 
\begin{align*}
    \left.
    \bm A(e, \tilde f)
    \right|_{\supp{\bm A(e, \tilde f)}\cap \supp{\bm A(e', \tilde f')}}
    =&
    \prod_{q_2\in \mathcal{K}_g} 
    \prod_{q_1\in \supp{\mathcal{U}_g(\bm e)} \cap \supp{\bm e'} } 
    \left(
    U_{gg_{q_2}}^\dagger
    P^{(\tilde q_2)}_{q_1\cdot g_{q_2}} U_{gg_{q_2}} 
    \right)\\
    &\hspace{35pt}
    \prod_{q_2\in \mathcal{K}_h} 
    \prod_{q_1\in \supp{\mathcal{U}_h(\bm e)} \cap \supp{\bm e'} } 
    \left(
    U_{hg_{q_2}}^\dagger
    P^{(\tilde q_2)}_{q_1\cdot g_{q_2}} U_{hg_{q_2}} 
    \right)\\
    \left.
    \bm A(e', \tilde f')
    \right|_{\supp{\bm A(e, \tilde f)}\cap \supp{\bm A(e', \tilde f')}}
    =&
    \prod_{q_2\in \mathcal{K}_g} 
    \prod_{q_1\in \supp{\mathcal{U}_g(\bm e)} \cap \supp{\bm e'} }
    \left(U_{g_{q_2}}^\dagger {P'}^{(\tilde q_2)}_{q_1\cdot g_{q_2}} U_{g_{q_2}}
    \right)\\
    &\hspace{35pt}
    \prod_{q_2\in \mathcal{K}_h} 
    \prod_{q_1\in \supp{\mathcal{U}_h(\bm e)} \cap \supp{\bm e'} } 
    \left(U_{g_{q_2}}^\dagger {P'}^{(\tilde q_2)}_{q_1\cdot g_{q_2}} U_{g_{q_2}}
    \right).
\end{align*}
Since the overlaps are pairwise disjoint, we can write their product as
\begin{align*}
    \left.
    \bm A(e, \tilde f)
    \bm A(e', \tilde f')
    \right|_{\supp{\bm A(e, \tilde f)}
    \cap 
    \supp{\bm A(e', \tilde f')}}
    =&
    \prod_{q_2\in \mathcal{K}_g} 
    \prod_{q_1\in \supp{\mathcal{U}_g(\bm e)} \cap \supp{\bm e'} } 
    \left(
    U_{gg_{q_2}}^\dagger
    P^{(\tilde q_2)}_{q_1\cdot g_{q_2}} U_{gg_{q_2}} 
    \right)
    \left(U_{g_{q_2}}^\dagger {P'}^{(\tilde q_2)}_{q_1\cdot g_{q_2}} U_{g_{q_2}}
    \right)\\
    & \hspace{35pt}
    \prod_{q_2\in \mathcal{K}_h} 
    \prod_{q_1\in \supp{\mathcal{U}_h(\bm e)} \cap \supp{\bm e'} } 
    \left(
    U_{hg_{q_2}}^\dagger
    P^{(\tilde q_2)}_{q_1\cdot g_{q_2}} U_{hg_{q_2}} 
    \right)
    \left(U_{g_{q_2}}^\dagger {P'}^{(\tilde q_2)}_{q_1\cdot g_{q_2}} U_{g_{q_2}}
    \right).
\end{align*}
The first line we have the overlap between $\mathcal{U}_{gg_{q_2}}(\bm e)$ and $\mathcal{U}_{g_{q_2}}(\bm e')$ for $q_2\in \mathcal{K}_g$. More specifically,
\begin{align*} 
    \left(\prod_{q_1\in \supp{\mathcal{U}_g(\bm e)} \cap \supp{\bm e'} } U_{gg_{q_2}}^\dagger
    P^{(\tilde q_2)}_{q_1\cdot g_{q_2}} U_{gg_{q_2}}\right)
    \in \bm A(e, \tilde f),
    \hspace{35pt}
    \left(\prod_{q_1\in \supp{\mathcal{U}_g(\bm e)} \cap \supp{\bm e'}}U_{g_{q_2}}^\dagger P^{(\tilde q_2)}_{q_1\cdot g_{q_2}} U_{g_{q_2}}\right)
    \in \bm A(e', \tilde f'),
\end{align*}
and similarly, the second line is the overlap between $\mathcal{U}_{hg_{q_2}}(\bm e)$ and $\mathcal{U}_{g_{q_2}}(\bm e')$ for $q_2\in \mathcal{K}_h$. Now we proceed similarly as before,
\begin{align*}
    \left.
    \bm A(e, \tilde f)
    \bm A(e', \tilde f')
    \right|_{\supp{\bm A(e, \tilde f)}\cap \supp{\bm A(e', \tilde f')}}
    =&\prod_{q_2\in \mathcal{K}_g} 
    U_{gg_{q_2}}^\dagger
    \left[
    \prod_{q_1\in \supp{\mathcal{U}_g(\bm e)} \cap \supp{\bm e'} }
    P^{(\tilde q_2)}_{q_1\cdot g_{q_2}} 
    \left( 
    U_g 
    {P'}^{(\tilde q_2)}_{q_1\cdot g_{q_2}}
    U_g^\dagger
    \right)
    \right]
    U_{gg_{q_2}}\\
    &\hspace{35pt}
    \prod_{q_2\in \mathcal{K}_h} 
    U_{hg_{q_2}}^\dagger
    \left[
    \prod_{q_1\in \supp{\mathcal{U}_h(\bm e)} \cap \supp{\bm e'} }
    P^{(\tilde q_2)}_{q_1\cdot g_{q_2}} 
    \left( 
    U_h 
    {P'}^{(\tilde q_2)}_{q_1\cdot g_{q_2}}
    U_h^\dagger
    \right)
    \right]
    U_{hg_{q_2}}\\
    =& \left( \prod_{q_2\in\mathcal{K}_g} \varphi_g \right)^{|\supp{\mathcal{U}_g(\bm e)} \cap \supp{\bm e'}|}
    \left( \prod_{q_2\in\mathcal{K}_h} \varphi_h \right)^{|\supp{\mathcal{U}_h(\bm e)} \cap \supp{\bm e'}|}\\
    & \hspace{35pt}
    \prod_{q_2\in \mathcal{K}_g} 
    U_{gg_{q_2}}^\dagger
    \left[
    \prod_{q_1\in \supp{\mathcal{U}_g(\bm e) } \cap \supp{\bm e'} } 
    \left( 
    U_g 
    {P'}^{(\tilde q_2)}_{q_1\cdot g_{q_2}}
    U_g^\dagger
    \right)
    P^{(\tilde q_2)}_{q_1\cdot g_{q_2}}
    \right]
    U_{gg_{q_2}}\\
    &\hspace{35pt}
    \prod_{q_2\in \mathcal{K}_h} 
    U_{hg_{q_2}}^\dagger
    \left[
    \prod_{q_1\in \supp{\mathcal{U}_h(\bm e)} \cap \supp{\bm e'} }
    \left( 
    U_h 
    {P'}^{(\tilde q_2)}_{q_1\cdot g_{q_2}}
    U_h^\dagger
    \right)
    P^{(\tilde q_2)}_{q_1\cdot g_{q_2}} 
    \right]
    U_{hg_{q_2}}.
\end{align*}
Now we use that fact that $\tilde {\bm f}$ commutes with $\mathcal{U}_g(\tilde {\bm f'})$. Once we fix $g$ such that $\supp{\bm e'} \cap \supp{\mathcal{U}_g(\bm e)} \neq \emptyset$, $\mathcal{K}_g$ is precisely the overlap between $\tilde {\bm f}'$ and $\mathcal{U}_{g^{-1}}(\tilde{\bm f})$, and we have
\begin{align*}
    \prod_{q_2\in \mathcal{K}_g} P_{g\cdot q_2}
    \left(U_g P'_{g\cdot q_2}U_g^\dagger\right) 
    =
    \left( \prod_{q_2\in \mathcal{K}_g} \varphi_g \right)
    \prod_{q_2\in \mathcal{K}_g} 
    \left(
    U_g P'_{g\cdot q_2}U_g^\dagger
    \right)
    P_{g\cdot q_2}
    =
    \prod_{q_2\in \mathcal{K}_g} 
    \left(U_g P'_{g\cdot q_2}U_g^\dagger\right)
    P_{g\cdot q_2},
\end{align*}
and similarly that $\tilde {\bm f}$ commutes with $\mathcal{U}_h(\tilde {\bm f'})$, we have
\begin{align*}
    \prod_{q_2\in \mathcal{K}_g} \varphi_g = \prod_{q_2\in \mathcal{K}_h} \varphi_h = 1.
\end{align*}
Therefore we arrive at,
\begin{align*}
    \left.
    \bm A(e, \tilde f)
    \bm A(e', \tilde f')
    \right|_{\supp{\bm A(e, \tilde f)}
    \cap
    \supp{\bm A(e', \tilde f')}}
    =&
    \prod_{q_2\in \mathcal{K}_g} 
    U_{gg_{q_2}}^\dagger
    \left[
    \prod_{q_1\in \supp{\mathcal{U}_g(\bm e)} \cap \supp{\bm e'} } 
    \left( 
    U_g 
    {P'}^{(\tilde q_2)}_{q_1\cdot g_{q_2}}
    U_g^\dagger
    \right)
    P^{(\tilde q_2)}_{q_1\cdot g_{q_2}}
    \right]
    U_{gg_{q_2}}\\
    &\hspace{35pt}
    \prod_{q_2\in \mathcal{K}_h} 
    U_{hg_{q_2}}^\dagger
    \left[
    \prod_{q_1\in \supp{\mathcal{U}_h(\bm e)} \cap \supp{\bm e'} }
    \left( 
    U_h 
    {P'}^{(\tilde q_2)}_{q_1\cdot g_{q_2}}
    U_h^\dagger
    \right)
    P^{(\tilde q_2)}_{q_1\cdot g_{q_2}} 
    \right]
    U_{hg_{q_2}}\\
    =&
    \prod_{q_2\in \mathcal{K}_g} 
    \prod_{q_1\in \supp{\mathcal{U}_g(\bm e) } \cap \supp{\bm e'} } 
    \left(U_{g_{q_2}}^\dagger {P'}^{(\tilde q_2)}_{q_1\cdot g_{q_2}} U_{g_{q_2}}
    \right)
    \left(
    U_{gg_{q_2}}^\dagger
    P^{(\tilde q_2)}_{q_1\cdot g_{q_2}} U_{gg_{q_2}} 
    \right)\\
    & \hspace{35pt}
    \prod_{q_2\in \mathcal{K}_h} 
    \prod_{q_1\in \supp{\mathcal{U}_h(\bm e)} \cap \supp{\bm e'} } 
    \left(U_{g_{q_2}}^\dagger {P'}^{(\tilde q_2)}_{q_1\cdot g_{q_2}} U_{g_{q_2}}
    \right)
    \left(
    U_{hg_{q_2}}^\dagger
    P^{(\tilde q_2)}_{q_1\cdot g_{q_2}} U_{hg_{q_2}} 
    \right)\\
    =& \left.
    \bm A(e', \tilde f')
    \bm A(e, \tilde f)
    \right|_{\supp{\bm A(e, \tilde f)}
    \cap
    \supp{\bm A(e', \tilde f')}}.
\end{align*}
As mentioned before, generalizing to more than two group elements in $G$ participating in $\mathcal{K}$ is straightforward. We have shown that $\bm A(e, \tilde f)$ and $\bm A(e', \tilde f')$ commute.

\end{document}